\documentclass[letterpaper,twocolumn,10pt]{article}
\usepackage{usenix2019_v3}

\usepackage[english]{babel}
\usepackage{balance}
\usepackage{subcaption}
\usepackage[font=footnotesize]{caption}
\usepackage{amsthm, amsfonts, amssymb, bm, amsmath}
\usepackage{tikz}
\usepackage[numbers]{natbib}
\usetikzlibrary{positioning,shapes.geometric,shapes, arrows.meta,fit,calc,backgrounds,quotes}
\usepackage{ifthen}
\usepackage{booktabs}
\pgfdeclarelayer{background}
\pgfdeclarelayer{foreground}
\pgfsetlayers{background,main,foreground}
\usepackage{graphicx}
\usepackage{xspace}
\usepackage{enumitem}
\usepackage[normalem]{ulem}

\setlist{labelindent=0pt,labelwidth=0.5\parindent,labelsep=0.25\parindent,itemindent=0.5\parindent,leftmargin=*}

\newcommand{\bx}{{\boldsymbol{x}}}
\newcommand{\bN}{{\boldsymbol{N}}}
\newcommand{\bc}{{\boldsymbol{c}}}
\newcommand{\DS}{\mathcal{C}}
\newcommand{\AS}{\mathcal{A}}
\newcommand{\bv}{{\boldsymbol{v}}}
\newcommand{\bz}{{\boldsymbol{z}}}

\newtheorem{theorem}{Theorem}
\newtheorem{lemma}{Lemma}

\newtheorem{proposition}{Proposition}
\newtheorem{assumption}{Assumption}
\newtheorem{definition}{Definition}

\newcommand{\gslb}{MNLB\xspace}
\newcommand{\rif}{RIF\xspace}

\newcommand{\rifrouting}{\textbf{\rif}\xspace}
\newcommand{\latencyrouting}{\textbf{Lat}\xspace}
\newcommand{\gradientrouting}{\textbf{Grad}\xspace}
\newcommand{\discreterouting}{\textbf{Disc}\xspace}
\newcommand{\flowrouting}{\textbf{Flow}\xspace}
\newcommand{\randomrouting}{\textbf{Rand}\xspace}
\newcommand{\gslbrouting}{\textbf{\gslb}\xspace}

\usepackage{etoolbox}
\makeatletter
\patchcmd{\@maketitle}{\vbox to 2.5in}{\vbox}{}{\PackageError{paper-arxiv}{Unable to adjust the title height}{}}
\makeatother

\newenvironment{acks}{\section*{Acknowledgments}}{}

\title{\Large\bfseries DLB: Distributed Load Balancing at Scale for Generative AI Inference}
\date{}

\author{
{\rm Santiago R. Balseiro\thanks{Equal contribution.}}\\
Google Research\\and Columbia University\\
{\upshape\href{mailto:sbalseiro@google.com}{sbalseiro@google.com}}
\and
{\rm Bartek Wydrowski\footnotemark[1]}\\
Google Research\\
{\upshape\href{mailto:bwydrowski@google.com}{bwydrowski@google.com}}
\and
{\rm Sameer Agarwal\thanks{Alphabetical order.}}\\
Google Research\\
{\upshape\href{mailto:sameeragarwal@google.com}{sameeragarwal@google.com}}
\and
{\rm David Applegate\footnotemark[2]}\\
Google Research\\
{\upshape\href{mailto:dapplegate@google.com}{dapplegate@google.com}}
\and
{\rm Aaron Archer\footnotemark[2]}\\
Google Research\\
{\upshape\href{mailto:aarcher@google.com}{aarcher@google.com}}
\and
{\rm Soheil Hassas Yeganeh\footnotemark[2]}
\and
{\rm Alex Iriza\footnotemark[2]}\\
Google Research\\
{\upshape\href{mailto:iriza@google.com}{iriza@google.com}}
\and
{\rm Bobby Kleinberg\footnotemark[2]}\\
Google Research\\and Cornell University\\
{\upshape\href{mailto:rkleinberg@google.com}{rkleinberg@google.com}}
\and
{\rm Balasubramanian Sivan\footnotemark[2]}\\
Google Research\\
{\upshape\href{mailto:balusivan@google.com}{balusivan@google.com}}
\and
{\rm Pranav Vaish\footnotemark[2]}\\
Google Research\\
{\upshape\href{mailto:pranavvaish@google.com}{pranavvaish@google.com}}
\and
{\rm Oscar Zegarra\footnotemark[2]}\\
Google Research\\
{\upshape\href{mailto:ozegarra@google.com}{ozegarra@google.com}}
\and
{\rm Wenxin Zhang\footnotemark[2]}\\
Google Research\\
{\upshape\href{mailto:wenxinzhang@google.com}{wenxinzhang@google.com}}
\and
{\rm Vahab Mirrokni}\\
Google Research\\
{\upshape\href{mailto:mirrokni@google.com}{mirrokni@google.com}}
\and
{\rm Amin Vahdat}\\
Google Research\\
{\upshape\href{mailto:vahdat@google.com}{vahdat@google.com}}
}

\hypersetup{
  pdftitle={DLB: Distributed Load Balancing at Scale for Generative AI Inference},
  pdfauthor={Santiago R. Balseiro; Bartek Wydrowski; Sameer Agarwal; David Applegate; Aaron Archer; Soheil Hassas Yeganeh; Alex Iriza; Bobby Kleinberg; Balasubramanian Sivan; Pranav Vaish; Oscar Zegarra; Wenxin Zhang; Vahab Mirrokni; Amin Vahdat}
}

\begin{document}
\begingroup
\hypersetup{linkcolor=black,urlcolor=black}
\makeatletter
\let\@footnotemark\H@@footnotemark
\let\@footnotetext\H@@footnotetext
\makeatother
\maketitle
\endgroup

\begin{abstract}
The reliance on scarce and expensive accelerators such as GPUs and TPUs in modern datacenters places unprecedented demands on backend infrastructure. For workloads characterized by heterogeneous service times and complex multi-stage processing, such as
Generative AI, conventional load balancing techniques are often inadequate, relying heavily on costly overprovisioning to maintain service level objectives. This paper introduces DLB, the Distributed Load Balancer, a novel system designed to minimize end-to-end user latency for large-scale, heterogeneous workloads.

DLB employs a scalable, distributed design with peer-to-peer probing to maintain real-time visibility into server capacity across large-scale, geographically distributed infrastructure. The system continuously learns latency models to estimate the latency impact of routing decisions, allowing it to effectively manage heterogeneous hardware and diverse model architectures. We provide a novel theoretical analysis of our routing algorithms that establishes their stability and global performance guarantees over time. We also evaluate DLB through extensive simulations, which show substantial gains compared to state-of-the-art load balancing algorithms. Finally, following a 22-month deployment of DLB at Google, where it facilitates large-scale Generative AI inference for thousands of different machine learning models and millions of requests per second, we detail the design choices and practical experiences gained from the system in production. Analysis of production migrations demonstrates that DLB yields statistically significant latency reductions compared to the legacy baseline, including a 17\% decrease in median latency and a 13\% decrease at the p95 tail.

\end{abstract}

\section{Introduction}
 To meet the surging demand in modern machine learning (ML), especially GenAI inference, hyperscalers often deploy model \emph{endpoints} across globally distributed \emph{cells}~\cite{verma2015large}, where inference servers execute requests on machine-learning accelerators (\emph{MLAs}) like GPUs or TPUs. Given the substantial cost of MLAs, load imbalance is exceptionally wasteful: requests may queue in one cell while compute remains idle in another. Therefore, a \emph{global load balancer}---which routes incoming requests across cells---acts as a critical lever to ``flatten the latency-utilization curve'': at a fixed capacity, it reduces queueing latency; at a fixed latency target, it reduces the required headroom and hence overall capacity.

\begin{figure*}[t]
    \centering
    \begin{subfigure}[b]{0.61\textwidth}
        \centering
        \centering
\resizebox{\linewidth}{!}{%
\begin{tikzpicture}[
    node distance=1.5cm and 2cm,
    component/.style={rectangle, draw, minimum width=2cm, minimum height=1cm, align=center, rounded corners,font=\footnotesize},
    router/.style={component, fill=blue!20},
    server/.style={component, fill=green!20},
    client/.style={cloud, draw,cloud puffs=10,cloud puff arc=120, aspect=2,fill=orange!20,font=\footnotesize},
    estimator/.style={component, fill=purple!20,font=\footnotesize},
    layer/.style={draw, dashed, rounded corners, inner sep=0.4cm}
]

\begin{scope}[node distance=1cm and 1cm]
    \node[router] (rr1) {Root\\ Router};
    \node[router, right=of rr1] (rr2) {Root\\ Router};
\end{scope}

\begin{scope}[node distance=1cm and 1cm, xshift=8cm]
    \node[router] (rr3) {Root\\Router};
    \node[router, right=of rr3] (rr4) {Root\\Router};
\end{scope}

\begin{scope}[node distance=1cm and 1cm, yshift=-3cm]
    \node[router] (lr1) {Leaf\\Router};
    \node[router, right=of lr1] (lr2) {Leaf\\Router};
\end{scope}

\begin{scope}[node distance=1cm and 1cm, yshift=-3cm, xshift=8cm]
    \node[router] (lr3) {Leaf\\Router};
    \node[router, right=of lr3] (lr4) {Leaf\\Router};
\end{scope}

\begin{scope}[node distance=1cm and 1cm, yshift=-5.5cm, xshift=8cm]
    \node[server] (is3) {Inference\\Server};
    \node[server, right=of is3] (is4) {Inference\\Server};
\end{scope}

\begin{scope}[node distance=1cm and 1cm, yshift=-5.5cm]
    \node[server] (is1) {Inference\\Server};
    \node[server, right=of is1] (is2) {Inference\\Server};
\end{scope}

\node[layer, fit=(rr1) (rr2)] (root_layer_box1) {};

\node[layer, fit=(lr1) (lr2)] (leaf_layer_box1) {};

\node[layer, fit=(root_layer_box1) (leaf_layer_box1) (is1)] (data_center_box1) {};

\node[above,xshift=-2cm] at (data_center_box1.north) {\textbf{Cell \texttt{xx}}};

\node[layer, fit=(rr3) (rr4)] (root_layer_box2) {};

\node[layer, fit=(lr3) (lr4)] (leaf_layer_box2) {};

\node[layer, fit=(root_layer_box2) (leaf_layer_box2) (is3)] (data_center_box2) {};
\node[above,xshift=-2cm] at (data_center_box2.north) {\textbf{Cell \texttt{yy}}};

\foreach \x in {1,...,4}
    \foreach \y in {1,...,4}
    {
        \ifthenelse{\x < 3 \OR \y > 2}{
        \draw[-{Stealth[length=2mm]}] (rr\x.south) -- (lr\y.north);
        }{}
    }

\draw[-{Stealth[length=2mm]}] (lr1.south) -- (is1.north);
\draw[-{Stealth[length=2mm]}] (lr1.south) -- (is2.north);
\draw[-{Stealth[length=2mm]}] (lr2.south) -- (is1.north);
\draw[-{Stealth[length=2mm]}] (lr2.south) -- (is2.north);

\draw[-{Stealth[length=2mm]}] (lr3.south) -- (is3.north);
\draw[-{Stealth[length=2mm]}] (lr3.south) -- (is4.north);
\draw[-{Stealth[length=2mm]}] (lr4.south) -- (is3.north);
\draw[-{Stealth[length=2mm]}] (lr4.south) -- (is4.north);

\node (client1) [client, above=of $(rr1.north)!0.5!(rr2.north)$] {\ Endpoint requests};
\node (client2) [client, above=of $(rr3.north)!0.5!(rr4.north)$] {\ Endpoint requests};

\draw[-{Stealth[length=2mm]}] (client1.south) -- (rr1.north);
\draw[-{Stealth[length=2mm]}] (client1.south) -- (rr2.north);
\draw[-{Stealth[length=2mm]}] (client2.south) -- (rr3.north);
\draw[-{Stealth[length=2mm]}] (client2.south) -- (rr4.north);

\end{tikzpicture}
}

        \vspace{\baselineskip}
        \caption{High-level two-tier topology.}
        \label{fig:system}
    \end{subfigure}
    \hfill
    \begin{subfigure}[b]{0.36\textwidth}
        \centering
        \centering
\resizebox{\linewidth}{!}{%
\begin{tikzpicture}[
    node distance=1.5cm and 2cm,
    component/.style={rectangle, draw, minimum width=3cm, minimum height=1cm, align=center,rounded corners},
    router/.style={component, fill=blue!20},
    server/.style={component, fill=green!20},
    client/.style={component, fill=orange!20},
    estimator/.style={component, fill=purple!20},
    layer/.style={draw, rounded corners, inner sep=0.6cm,fill=black!10}
]

\begin{scope}[node distance=1cm and 1cm]
    \node[router, align=center] (rr1) {Global Routing\\Algorithm};
    \node[client, right=of rr1, xshift=1.2cm] (pc) {Probe Client};
\end{scope}

\begin{scope}[node distance=1cm and 1cm, yshift=-5cm]
    \node[router, align=center] (lr1) {Local Routing\\Algorithm};

    \node[server, right=of lr1, xshift=1.2cm, yshift=1cm] (ps) {Probe Server};
    \node[estimator, below=of ps] (le) {Latency Estimator};
\end{scope}

\begin{pgfonlayer}{background}
\node[layer, behind path, fit=(rr1) (pc)] (root_layer_box) {};
\node[above] at (root_layer_box.north) {\textbf{Root Router}};

\node[layer, fit=(lr1) (ps) (le)] (leaf_layer_box) {};
\node[above] at (leaf_layer_box.north) {\textbf{Leaf Router}};
\end{pgfonlayer}

\draw[-{Stealth[length=2mm]}] ($(rr1.south)+(0,0pt)$) -- (lr1.north) node[pos=0.35, left, align=right] {\footnotesize dispatched\\[-0.1cm]\footnotesize requests};

\draw[-{Stealth[length=2mm]}] ($(lr1.south)+(0,0pt)$) -- ($(lr1.south)+(0,-3cm)$) node[pos=0.8, left, align=right] {\footnotesize dispatched\\[-0.1cm]\footnotesize requests};

\draw[-{Stealth[length=2mm]}, dotted] ($(pc.south)-(0.5cm,0)$)  -- ($(ps.north)-(0.5cm,0)$) node[midway, left,font=\footnotesize] {probes};

\draw[{Stealth[length=2mm]}-, dotted] ($(pc.south)+(0.5cm,0)$)  -- ($(ps.north)+(0.5cm,0)$) node[midway, right,font=\footnotesize] {responses};

\draw[{Stealth[length=2mm]}-, dotted] (le.west) -- (lr1.east) node[midway, above, sloped,font=\footnotesize,align=center] {latency\\samples};

\draw[{Stealth[length=2mm]}-, dotted] (ps.west) -- (lr1.east) node[midway, above,sloped,font=\footnotesize,align=center] {requests in flight\\and error rates};

\draw[-{Stealth[length=2mm]}, dotted] (le.north) -- (ps.south) node[midway, right,font=\footnotesize] {latency model};

\draw[{Stealth[length=2mm]}-, dotted] (rr1.east) -- (pc.west) node[midway, above, align=center,font=\footnotesize] {requests in flight,\\[-0.1cm]error rates, and\\[-0.1cm]latency model};

\node (traffic) [above=of rr1, font=\footnotesize] {incoming requests};
\draw[-{Stealth[length=2mm]}] (traffic) -- (rr1.north);
\end{tikzpicture}
}

        \vspace{\baselineskip}
        \caption{Root and leaf router internals.}
        \label{fig:routers}
    \end{subfigure}
    \caption{\textbf{DLB design.} (a) Requests for an endpoint enter root routers, which select an eligible cell and forward each request to a leaf router there; the leaf then selects a local inference server. Each cell runs multiple root and leaf router replicas. Traffic-residency constraints can restrict eligible cells (here, traffic from cell \texttt{yy} cannot be routed to cell \texttt{xx}). (b) A root router runs the global routing algorithm and a probe client, while a leaf router runs the local routing algorithm, a probe server, and a latency estimator. At the leaf, the local routing algorithm exports latency samples and RIF/error state to the estimator and probe server; at the root, the probe client supplies the aggregated response to the global routing algorithm. Solid arrows show request routing; dotted arrows show telemetry and probing.}
    \label{fig:design}
\end{figure*}

\subsection{Global Load Balancing for GenAI: Motivation and Challenges}
\label{sec:old-arch}
\label{sec:challenges}
{}
 Conventional approaches---such as weighted round-robin \cite{wang2014evaluating}, fewest-connections \cite{choi2010improvement}, or centralized global schemes~\cite{bethea2018managingload}---often struggle to capture fine-grained system dynamics. These policies either rely on coarse proxies such as active connection counts that correlate poorly with actual serving latency or use long control loops.

At Google, the primary global load balancer, which we will refer to as the \emph{minimum network latency balancer} (\gslb), works as follows~\citep{gcp_service_lb_policy}. At intervals on the order of tens of seconds, a centralized optimizer computes routing weights that minimize network latency subject to serving-capacity constraints; traffic then follows those weights until the next update. Because \gslb accounts for serving latency only implicitly via capacity, operators must maintain conservative headroom to avoid the steep region of each cell's utilization--latency curve. This architecture has proven exceptionally robust for traditional web services with ample headroom and short, homogeneous ``grains of sand'' requests.

Modern ML serving challenges these premises. First, workloads exhibit diverse attributes, with service times spanning orders of magnitude: from milliseconds for high-throughput recommendation systems (``sand'') to several minutes for complex reasoning tasks (``boulders''). For long-running queries, serving latency significantly exceeds network latency, making it beneficial to route requests remotely to alleviate any local compute backlog. Second, capacity headroom is scarce and costly: provisioning additional MLA capacity on demand is severely limited in the face of resource constraints and overprovisioning is often prohibitively expensive. Consequently, inference servers are forced to operate at high utilization (running ``hot''), leaving minimal headroom to absorb traffic bursts.

Therefore, the load balancer must jointly optimize network and serving latency and actively track how busy each cell is. However, doing this at a global scale introduces three challenges:
\begin{itemize}
    \item \textbf{Heterogeneous cells and evolving serving stacks.} Cells can vary widely in size and MLA hardware, with raw hardware performance differing by more than an order of magnitude. Furthermore, serving stacks evolve through techniques such as continuous batching \cite{yu2022orca} and disaggregated prefill-decode \cite{zhong2024distserve}. Because these operations intertwine sequential and highly parallelizable steps, identical in-flight request counts or utilization can yield drastically different serving latencies across endpoints and cells. %

    \item \hspace{0pt}\textbf{Tracking system state at scale.} Cell load and available capacity can change much faster than \gslb's tens-of-seconds update cycle. Making intelligent routing decisions requires fresh system state collection and sharing, which can be costly at scale. The system must navigate through a complex design space: \emph{how} and \emph{what} state to collect, \emph{how} to share it, and at \emph{what frequency}.

    \item \textbf{Routing under delayed feedback.} Network latency delays both user queries and system feedback. {}
    Aggressively directing incoming traffic toward a currently favorable cell can overload it before the effects of earlier decisions are observed, and repeated corrections can produce oscillations. This is particularly problematic for short, high-throughput requests where network latencies are significant relative to serving latencies.   %
\end{itemize}

\subsection{DLB: System Design and Theoretical Foundations}
\label{sec:dlb-design}

We introduce DLB, a distributed global load-balancing system with a theoretical foundation for routing under delayed feedback. DLB combines learned cell performance models and scalable state collection with routing mechanisms that account for the delayed effects of sending load.

{}
{}

\begin{itemize}
    \item \hspace{0pt}\textbf{A distributed system for heterogeneous inference.} {}
    DLB operates at Layer 7, the Application Layer, providing the request-level visibility essential for complex ML workloads. DLB separates global traffic control from local inference server selection through two tiers (Figure~\ref{fig:system}). \emph{Root routers} operate globally to choose a target cell using aggregate cell-level state, while \emph{leaf routers} operate locally to select a specific inference server within a cell.
    Both tiers use multiple router replicas, distributing the routing load and providing redundancy.
    {}

    To accommodate diverse MLA generations and evolving serving pipelines, DLB treats the cell's internal architecture as a black box. Leaf routers aggregate the cell's \emph{requests-in-flight} (\rif, requests queued or executing across its servers) and latency observations to continuously learn a cell-specific \emph{latency model}, and root routers use these models to make routing decisions (Figure~\ref{fig:routers}). Peer-to-peer sharing and active probing keep routing state fresh, allowing DLB to respond on a network round-trip timescale to changes in the serving system, including load, cell capacity, and traffic intensity.
    \item \textbf{Routing algorithms with global performance guarantees.}
    DLB supports two routing mechanisms, configured per endpoint. \emph{Discrete routing} immediately directs each request to a
    {}
    most preferred cell; it is well suited to sparse, computationally heavy ``boulders,'' for which serving latency dominates network latency.
    {}
    On the other hand, for high-throughput ``sand'' workloads that are prone to oscillations, DLB applies \emph{flow routing}, which changes routing probabilities gradually, limiting how quickly traffic shifts while the effects of earlier decisions remain unobserved. Both mechanisms support several cost signals so that practitioners can choose based on their infrastructure capabilities and deployment stage: \rif-based costs provide a robust option without latency estimation, while latency- or gradient-based costs incorporate network latency and estimates of serving latency for better routing performance.

    Recent cross-region load balancers such as SkyWalker \cite{skywalker2024} and GORGO \cite{ricci2026gorgo} provide practical mechanisms for routing and state estimation, but they do not establish performance guarantees for their routing dynamics under delayed feedback.
Existing theoretical work either omits network latency \cite{zhang2025distributed} or establishes only local convergence to optimality \cite{balseiro2025load}, leaving global behavior unresolved.
    We give a novel unified theory for global convergence of our routing algorithms in the presence of network latencies using Lyapunov's direct method (Section~\ref{sec:theory}). Our theoretical results provide a new understanding of how network latencies impact the stability of routing decisions and workload dynamics, laying a theoretical foundation for designing and tuning load balancing algorithms in the presence of stale state information. The key analytical innovation is the construction of a Lyapunov function~\cite[ch. 4]{khalil2002nonlinear} that jointly controls routing decisions and workload dynamics.
    {}

    \item \textbf{Deployment and evaluation.} DLB has been deployed at Google since November 2024 and now serves thousands of endpoints, including state-of-the-art Gemini models, and millions of requests per second across hundreds of cells and millions of MLAs.
    We evaluate DLB through production migrations and simulation (Section~\ref{sec:deployment}). Across 68 production endpoints, migration from \gslb to DLB is associated with $17\%$ lower p50, $13\%$ lower mean, $14\%$ lower p90, and $13\%$ lower p95 latency, conditional on request rate and endpoint fixed effects. In production, DLB allows us to run 20\% hotter on average.
    {}
    Figure~\ref{fig:endpoint-migration} shows one illustrative endpoint. In simulations, DLB policies show lower mean and tail latency and fewer queue-overflow errors than \gslb and weighted random routing under heterogeneity, demand bursts, and capacity outages. DLB achieves these performance gains with low system overhead: aggregate router compute cost is about $0.04\%$ of the compute cost of the MLAs it balances; traversing the root and leaf routers adds less than a millisecond of delay to requests that take hundreds of milliseconds and typically seconds or longer to complete.
\end{itemize}

    {}
    {}
{}

{}

\begin{figure}[!htbp]
    \centering
    \includegraphics[width=0.72\linewidth]{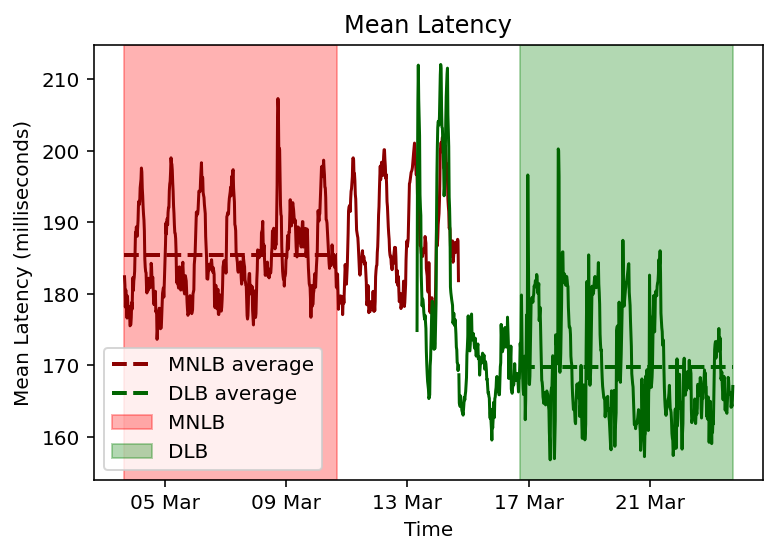}
    \caption{Mean latency of one illustrative high-throughput endpoint while transitioning from the legacy system (MNLB) to DLB in March 2025.}
    \label{fig:endpoint-migration}
\end{figure}

\section{Related Work}\label{sec:related_work}
Our work sits at the intersection of distributed systems, machine learning infrastructure, and network theory. %

\paragraph{Datacenter load balancing} Load balancing is a foundational problem in datacenter networking, typically addressed at either the packet level (Layer 4) or the application level (Layer 7).

Systems like Maglev~\cite{eisenbud2016maglev} and Ananta~\cite{patel2013ananta} provide scalable, layer 4 load balancing using consistent hashing and equal-cost multi-path routing to distribute packets across backend servers. These systems operate on flow tuples (5-tuples) and lack visibility into application-level semantics. They assume request homogeneity, which fails for GenAI workloads where service times vary by orders of magnitude. More granular approaches like Drill~\cite{ghorbani2017drill} perform micro-load balancing at the switch level to reduce tail latency in low-latency datacenter networks. However, Drill operates on microsecond timescales to manage queue buildup for short flows, whereas DLB manages seconds-long inference queries where processing latency dominates network latency.

Layer 7 balancers use content-aware routing. The ``Power of $d$-Choices'' is a classic paradigm where a router probes $d$ servers and selects the least loaded. Prequal~\cite{wydrowski2024prequal} extends this by probing to balance real-time \rif, while C3~\cite{suresh2015c3} focuses on cutting tail latency in storage systems by selecting replicas based on expected service time.
DLB instead employs a predictive model that accounts for the complex relationship between request counts (prefill/decode) and latency. Furthermore, our gradient-based routing minimizes total system latency (social optimum) rather than just balancing queue lengths.

\paragraph{Serving systems for machine learning loads}
The explosion of Large Language Models (LLMs) has spurred the development of specialized serving systems.
Most existing GenAI serving and request-routing systems balance traffic across inference servers within a single cell (e.g., vLLM Production Stack, Ray Serve, AIBrix, SGLang Router, Preble, and DualMap)~\cite{vllmproductionstack2025,rayserve2026,aibrix2024,sglangrouter2024,preble2024,dualmap2024}. This corresponds to DLB's leaf-to-server layer.

The closest systems to DLB are recent multi-region LLM load balancers. SkyWalker's cross-region routers, analogous to DLB roots, select cells based on replica availability and prefer the local cell while using longest-prefix match among eligible destinations~\cite{skywalker2024}. DLB instead uses learned cell performance models to compare heterogeneous cells.
GORGO directly scores candidate replicas using a tunable combination of network latency, queueing, and prefix-cache reuse to optimize time-to-first-token~\cite{ricci2026gorgo}. DLB instead uses cell-level \rif, latency, or gradient-based costs aimed at end-to-end latency, and we provide theoretical guarantees for the flow-routing policies in Section~\ref{sec:theory}.

\paragraph{Theoretical analysis of load balancing algorithms} There is a long stream of literature studying policies such as Join-the-Shortest-Queue~\citep{winston1977optimality,weber1978optimal,gupta2007analysis}, Join-Idle-Queue~\citep{lu2011join}, and backpressure~\citep{tassiulas1990stability}. These policies achieve system stability and attain excellent performance when cells are homogeneous and in the absence of network latencies. Weng et al.~\citep{weng2020optimal} show that variations of these policies can achieve asymptotic optimality with heterogeneous servers when networks are well connected. These papers, however, do not consider network latencies, which can induce oscillations in distributed routing~\cite{mehdian2017join}.

We frame the load balancing problem within the context of algorithmic game theory~\cite{roughgarden2010algorithmic}. Latency-based routing in DLB converges to a Wardrop Equilibrium, typical of selfish routing games. Our gradient-based routing builds upon the Greatest Marginal Service Rate policy~\cite{zhang2025distributed}, which analyzes the case without network latency, and distributed gradient descent approaches~\cite{balseiro2025load}, which provide a local stability analysis with network latencies. Our work ensures global convergence to the system-optimal solution even under delayed feedback.

\section{System Architecture and Algorithms}\label{sec:design}

We detail the transition from our previous centralized load balancing control plane (Section~\ref{sec:old-arch}) to the distributed DLB architecture (Section~\ref{sec:dlb-arch}), outlining the specific routing and probing mechanisms that enable millisecond-scale responsiveness. We undertook this transition to address the latency and heterogeneity challenges posed by GenAI inference.

\subsection{DLB System Architecture}\label{sec:dlb-arch}

DLB is a two-tier distributed system in which each cell has two types of logical routers.
The \emph{root routers} act as the entry point for traffic and are in charge of distributing the traffic across cells. The \emph{leaf routers} are in charge of assigning requests internally to inference servers within the cell. They also gather information about the current state of the inference servers, such as load, availability, and response times, to make informed routing decisions. Each router runs on a different machine and multiple replicas are maintained to increase reliability and scalability. Leaf (and root) routers utilize a peer-to-peer (P2P) communication protocol to exchange information across replicas. Our routing infrastructure supports multi-tenancy where a single router handles multiple endpoints, but for clarity, this discussion focuses on a single endpoint. %

We next describe the main components in detail. Figures~\ref{fig:system} and~\ref{fig:routers} show the fleet-wide architecture and router internals.

\paragraph{\textbf{Root routers.}}
Incoming traffic first arrives at the geographically closest cell, where it is assigned in a round-robin fashion to one of the root routers, which execute two modules: a routing algorithm and a probe client.
\begin{itemize}
    \item \emph{Global routing algorithm.} Incoming requests to a root router are immediately dispatched to a cell using one of the load balancing algorithms described in Section~\ref{sec:routing-algo}. Within the selected cell, the request is handed to one of the cell's leaf routers in a round-robin fashion. Because of traffic residency constraints, cells might not be able to route requests to every other cell (for example, European regulations require personal data to be processed within the European Union). Root routers do not queue requests. Routers implement an error aversion mechanism designed to maximize goodput (the number of requests served successfully) by dynamically shifting traffic away from cells that are producing high error rates.

    \item \emph{Probe client.} Root routers periodically probe the leaf routers of each cell. Each probe response reports the cell's \rif as well as the locally derived latency model and the cell's error rate. Roots probe at a pre-specified frequency (e.g., once every few milliseconds). To decrease network congestion, the root routers within a cell take turns probing leaves and sharing the results with local peer root routers. The probing rate can be adjusted to maintain a balance between acquiring up-to-date information and mitigating network congestion.
\end{itemize}

\paragraph{\textbf{Leaf routers.}} These routers receive requests from root routers and assign them to an inference server within the same cell. Each leaf router executes three modules: a local routing algorithm, a probe server, and a latency estimator.

\begin{itemize}
    \item \emph{Local routing algorithm.} The local routing algorithm selects an inference server within the same cell to process requests. Inference servers are selected using a distributed power-of-$d$-choices paradigm that seeks to balance the number of tokens waiting to be processed in each queue~\citep{wydrowski2024prequal}.%

    \item \emph{Latency estimators.} Leaf routers periodically collect latency samples from all the leaf routers within the cell and estimate a parametric latency model for the cell. To decrease computational burden, the leaf routers within a cell take turns estimating the latency models and sharing the resulting latency model with peer leaf routers within the cell. More details about the estimation process are provided in Section~\ref{sec:latency-estimation}.

    \item \emph{Probe server.} Upon receiving a probe from a root router, the probe server synchronously returns the cell's aggregate state, including the \rif, latency model, and error rate. To derive the \rif, the leaf router asynchronously queries all peer leaf routers within the cell and caches the aggregated results, ensuring that internal data collection does not block the probe response path. Latency is measured directly at the leaf router by timestamping each request's arrival and completion.
\end{itemize}

\subsection{Load Balancing Algorithms}\label{sec:routing-algo}

We implement two different routing mechanisms: \emph{flow routing} and \emph{discrete routing}. In both mechanisms, root routers use the state of the system and the latency models to estimate the ``cost'' of routing an additional request to each cell. The algorithms, however, differ in the way costs are used to make routing decisions. Moreover, we provide three approaches for computing the cost of routing an additional request: \emph{\rif-based}, \emph{latency-based}, and \emph{gradient-based}. In production, the mechanism is configured per endpoint based on workload attributes (Section~\ref{sec:lessons}).

Discrete routing is a fast-reacting algorithm that routes each request to the cell with the lowest cost. This algorithm can lead to fast convergence, but can suffer from oscillatory behavior when network latencies between cells are large. Flow routing is a probabilistic routing algorithm in which each request router maintains a vector of routing probabilities that are adjusted using gradient descent~\cite{balseiro2025load}. Flow routing can guarantee convergence with arbitrary network latencies at the expense of slower convergence times compared to discrete routing. Both routing algorithms are distributed and do not require estimating arrival rates.

In the \rif-based routing algorithm, the cost is set to be the ratio between the \rif and the number of inference servers in a cell. This algorithm can be understood as an instantiation of the weighted least connections policy, where the weights are set to be inversely proportional to the capacity of each cell~\cite{zhang2000linux}, and seeks to equalize the utilization across cells under the assumption that inference servers have similar processing rates. In latency-based routing, the cost is set to the estimated latency of an average request based on the cell state.  This routing algorithm is similar to the Shortest Expected Delay policy in \cite{banawan1989load} or the C3 policy of \cite{suresh2015c3}, which route requests based on the queue length and processing rate of each cell. Finally, in gradient-based routing, the cost is set to a gradient that measures the marginal impact on the total system latency of routing an additional unit of flow (in requests per second) to a cell. This routing algorithm extends the Greatest Marginal Service Rate policy of \cite{zhang2025distributed} to incorporate network latencies.

The choice of cost function dictates the equilibrium to which the system converges. This choice presents a fundamental trade-off between latency optimality and robustness. In Section~\ref{sec:theory}, we prove that gradient-based routing can achieve near-optimal system latency. However, this pursuit of optimality requires aggressive consumption of latency estimates, rendering the policy sensitive to estimation error. Conversely, \rif-based routing requires minimal information but may yield sub-optimal results by ignoring network and serving latencies. We observe, however, that \rif-based routing can lead to good performance when network latencies are small compared to serving latencies (the "boulder" case) and cells have homogeneous hardware. Latency-based routing strikes a balance between these two extremes: the system converges to the so-called Wardrop Equilibrium of the nonatomic routing game in which self-interested players route traffic through a network~\citep[Definition 18.1]{roughgarden2010algorithmic}. Because requests do not internalize the delay they impose on future requests, the total system latency can be suboptimal.

Combining the routing mechanism and the cost function yields six distinct algorithms. We next provide a formal definition of the algorithms.

\subsubsection{Formal Description of the Global Routing Algorithms}
We model the network as a directed graph $(\DS, \AS)$, where $\DS = \{1, \ldots, n\}$ represents the set of cells and $\AS \subseteq \DS \times \DS$ denotes the set of directed arcs. This topology captures the allowable source-destination cell pairings, rather than the physical links of the underlying communication infrastructure. Due to traffic residency constraints, the graph is not necessarily symmetric; that is, $(i, j) \in \AS$ does not imply $(j, i) \in \AS$. We denote by $\DS^+(i) = \{ j \in \DS : (i,j) \in \AS\}$ and $\DS^-(j) = \{ i \in \DS : (i,j) \in \AS\}$ the out-neighbors and in-neighbors of a cell, respectively. %
We denote by $N_i(t)$ the total number of \rif in the inference servers of cell $i \in \DS$ at time $t$. For every arc $(i,j)\in\AS$, let $\tau_{ij}\geq0$ denote its fixed one-way network latency.

We encode the routing decisions of the algorithm using $x_{ij}(t)$, which denotes the probability that a request arriving at cell $i\in\DS$ at time $t$ is routed to cell $j \in \DS$. We let $\bx_i(t) = (x_{ij}(t))_{j \in \DS}$ be a routing vector for cell $i\in\DS$ and assume that $\bx_i(t) \in \Delta_i = \left\{ \bz \in \mathbb R_{\geq 0}^n : \sum_{j \in \DS} z_j = 1, z_j = 0 \text{ for } j \not\in \DS^+(i) \right\}$, that is, cells can only route requests to valid destinations.

\subsubsection{Routing Mechanisms} \label{sec:routing-mechanisms}
For each admissible arc $(i,j)$, let $c_{ij}:\mathbb R_{\geq0}\to\mathbb R_{\geq0}$ be the route cost evaluated at the destination workload. We next describe the global routing algorithms.

\begin{itemize}
    \item \emph{Discrete routing.} Each root router in cell $i\in\DS$ sends requests to the cell with the least cost, i.e.,
\begin{align}\label{eq:discrete-routing}
    \bx_i(t) \in \arg\min_{\bz \in \Delta_i} \sum_{j\in\DS^+(i)}  c_{ij}\big(N_j(t - \tau_{ij})\big) \cdot z_j\,.
\end{align}
Ties can be broken arbitrarily. Note that costs are evaluated at the delayed system state $N_j(t - \tau_{ij})$ to account for network latencies between cells. We omit the delays introduced by discrete probing intervals in our description, as these intervals are negligible relative to network latency. %

\item\emph{Flow routing.} Each root router maintains a vector of routing probabilities $\bx_i(t)$ updated using gradient descent. Then
\begin{align}\label{eq:flow-routing}
    \bx_{i}(t+\delta t) = \Pi_{\Delta_i}\left(\bx_i(t)-\eta_i \cdot \delta t\cdot \boldsymbol{c}_i(t)\right)\,,
\end{align}
with $\delta t > 0$ denoting the time between updates and $\eta_i > 0$ denoting the stepsize used by the cell. The cost vector $\boldsymbol{c}_i(t)\in\mathbb{R}^{n}$ has $j$th component $c_{ij}(N_j(t-\tau_{ij}))$ for $j\in\DS^+(i)$. The time between updates is set to match the probing rate. In \eqref{eq:flow-routing} $\Pi_{\Delta_i}$ denotes the Euclidean projection to the set of feasible routing vectors, $\Pi_{\Delta_i}\left(\boldsymbol{z}\right) = \arg\min_{\boldsymbol{v} \in \Delta_i} \| \boldsymbol{v} - \boldsymbol{z}\|_2$. The projection can be done in $O(n \log n)$ time using standard algorithms~\citep{condat2016fast}.
\end{itemize}

\subsubsection{Cost Functions}
We use these cost functions.
\begin{itemize}
    \item \emph{\rif-based routing.} This cost is the \rif divided by the number of replicas $k_j \in \mathbb N$ in cell $j \in \DS$: $c_{ij}(N_j) = N_j / k_j$. The policy aims to equalize the ratio of \rif to replicas across cells.

    \item \emph{Latency-based routing.} The cost function is the expected end-to-end latency at workload $N_j$, $c_{ij}(N_j) = \ell_j(N_j) + 2 \tau_{ij}$, which is the sum of the serving and round-trip network latency. Here, we denote by $\ell_i(N_i)$ the expected latency across requests in cell $i \in \DS$ as a function of its \rif. The algorithm seeks to route requests to the cell with least latency.

    \item \emph{Gradient-based routing.} The cost function captures the marginal impact on the system latency of routing an additional unit of flow $c_{ij}(N_j) = 1/\mu_j'(N_j) + 2\tau_{ij}$, where the first term measures the marginal impact on the serving latency of all requests and the second on the network latency of this request. Here, $\mu_i(N_i)$ denotes the processing rate of cell $i \in \DS$, which captures the number of queries processed per unit of time as a function of the \rif. The term $1/\mu_j'(N_j)$ measures the sensitivity to server saturation. When the marginal service rate $\mu_j'(N_j)$ is high, pushing more flow to the cell does not significantly impact processing times. However, when $\mu_j'(N_j)$ is low, the cell is nearing saturation and pushing more flow causes a backlog because the server cannot speed up to match it. Under this choice, we prove that the system converges to the optimal latency-minimizing solution.

\end{itemize}

\subsection{Latency Estimation}\label{sec:latency-estimation}

Effective routing requires accurate estimates of serving latency and processing rate. While both can be derived from observing request arrivals and departures, we focus on latency as it is more readily instrumented in production. We specifically target average latency because it serves as a proxy for the typical user's experience and is mathematically more tractable to optimize, as discussed in Section~\ref{sec:challenges}.

We define a latency model for cell $i \in \DS$ as a function $\ell_i: \mathbb{R}_{\geq 0} \to \mathbb{R}_{\geq 0}$ that maps the number of \rif $N_i$ to an expected serving latency $\ell_i(N_i)$. Leaf routers periodically fit these models and communicate the parameters (introduced later) to root routers, which evaluate the local latencies or processing rate derivatives $\mu_i'(N_i)$ based on real-time probing. To derive processing rates from latency models, we employ the approximation $N_i \approx \ell_i(N_i) \cdot \mu_i(N_i)$. Intuitively, this captures the cell's \emph{drain time}: given a backlog of $N_i$ requests, the expected latency is roughly the time required to clear the pending work at the current rate. This relationship mirrors Little's Law, with the caveat that we apply it here as a proxy for the instantaneous state rather than a long-term average.

We opt to communicate latency models as opposed to instantaneous point estimates of latency for multiple reasons. First, latency models tend to be more stable and can be communicated with lower frequency. Second, different routing algorithms use latency models in different ways, so it is advantageous to separate the routing logic from the estimation process.

We adopt a parametric model of the latency function $\ell_i(N_i;\theta_i)$ with $\theta_i \in \mathbb R^d$ a $d$-dimensional vector representing the parameters for cell $i\in\DS$. Parametric models allow us to impose shape constraints and can be more data efficient than nonparametric models.

Figure~\ref{fig:estimation} plots our estimator together with the exponentially smoothed averages for a cell with multiple inference servers. For our parametric models, we used a softplus functional approximation, which provides a good fit in practice. Our latency models capture that latency is approximately constant when workload is low and then increases linearly when workload is high as requests start queueing.

\begin{figure}
    \centering
    \includegraphics[width=0.85\linewidth]{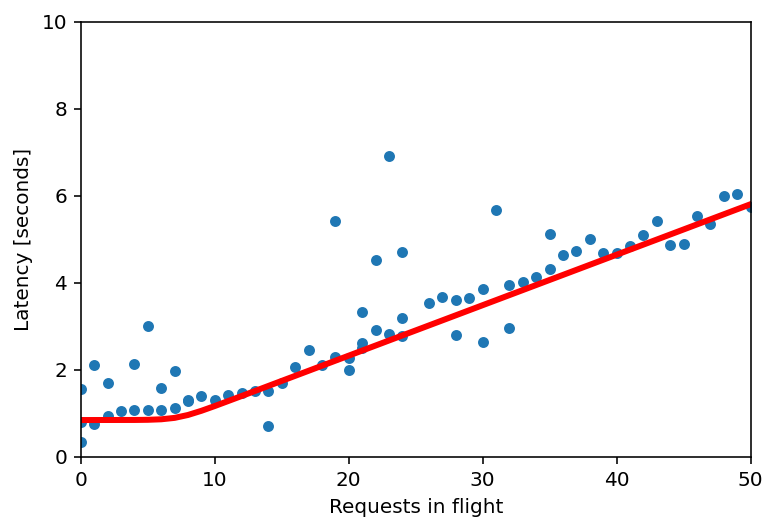}
    \caption{Latency estimation using parametric modeling. The blue dots represent exponentially smoothed average latency measurements from production traffic. The red solid line shows the fitted parametric model{}.}
    \label{fig:estimation}
\end{figure}

\subsection{Error Aversion}

We implement an error aversion mechanism designed to maximize goodput (the number of requests served successfully) by dynamically shifting traffic away from cells that are producing a high rate of errors. This is crucial for maintaining service reliability when some cells may be unhealthy, overloaded, or experiencing other issues. Error aversion allows the system to react gracefully when cells are generating errors, without manual intervention.

Each leaf router tracks the error rate of requests it handles using an exponential moving average. This error fraction is reported back to the root routers via the probing mechanism. Each root router  maintains an ``aversion rate'' for each cell, which is a value between 0 and 1. This rate represents the probability that the root router will avoid sending a request to that particular destination due to its error history. The aversion rate is periodically updated using a control loop based on how each cell compares to the others. The core idea is to penalize cells performing worse than the best one while incorporating a recovery detection mechanism that allows the system to detect if the cell becomes healthy again.

\subsection{Affinity Routing}\label{sec:affinity_routing}
DLB routes requests from the same session to the same inference server to exploit \emph{prefix caching}, which reuses the expensive prefill computation across requests that share context. We observe that 60\% to 70\% of requests can benefit from prefix caching, and reuse cuts prefill computation by 40\% to 50\%.
However, exploiting affinity could work against load balance. Static assignment schemes like consistent hashing suffer from well-known pitfalls: hash collisions can concentrate sessions on a single server, and a few ``hot sessions'' can overload that server.
DLB instead makes stickiness \emph{conditional}. A new session is routed as usual; subsequent queries return to the same cell  as long as the expected latency minus expected prefix-caching savings there is lower than that of the best alternative cell. Otherwise, the request is re-routed to the best alternative cell and the session re-affinitized.

The two tiers hold affinity state differently: root routers keep a map from session scheduling hash to cell, while leaf routers are nearly stateless---the inference servers maintain the session-to-server mappings and return them in probe responses, so every leaf router in a cell sees the same mapping without leaf-to-leaf synchronization.

\subsection{\texorpdfstring{{Fault Tolerance}}{Fault Tolerance}}
\emph{Root routers} store only session affinity caches and cell-level load statistics. When a root router crashes, clients fail over to peer replicas; missing affinity mappings temporarily fall back to load-based routing and are relearned upon request completion.
\emph{Leaf routers} are nearly stateless, storing only transient router-local counters and local latency models. When a leaf router crashes, root routers detect probe timeouts and redirect queries to healthy peers. The lost local \rif\ quickly self-corrects as peer leaves probe servers and sync via P2P broadcasts. Upon restart, the leaf recovers peer models over P2P and re-enters the active routing pool as soon as root probes succeed.

\section{Theoretical Analysis}\label{sec:theory}

{We consider a fluid model in which requests arrive at cell $i \in \DS$ at a rate $\lambda_i > 0$. In a fluid model, every request is infinitesimal and a continuous flow of queries moves deterministically through the system. For a cell $i \in \DS$, we refer to $N_i(t)$ as the \emph{workload} at time $t\ge0$ to reflect that the number of requests is now a continuous rather than discrete quantity. Fluid models are prevalent in the congestion control literature as they provide tractable approximations that are accurate for large-scale systems with high arrival rates and a large number of servers~\citep{dai1995positive, srikant2004mathematics}.}

We conduct our analysis under the following assumptions.
\begin{assumption}[Processing rate functions]
\label{assume:fsr-service-draft}
For every $j\in\DS$, the processing rate function
$\mu_j:[0,\infty)\to[0,\infty)$ is strictly increasing, continuous, and bounded, with $\mu_j(0) = 0$. Define $\bar\mu_j=\sup_{N\geq0}\mu_j(N)<\infty$.
\end{assumption}
Monotonicity is natural for work-conserving systems, while boundedness captures finite hardware throughput. Our parametric models, which provide a good fit in practice, satisfy these assumptions.
\begin{assumption}\label{assume:stability}
There exists a finite feasible pair
$(\bN',\bx')\in\mathbb R_{\geq0}^{n}\times
\prod_{i\in\DS}\Delta_i$ such that, for every $j\in\DS$,
\begin{equation*}
    \sum_{i\in\DS^-(j)}\lambda_i x_{ij}'
       =\mu_j(N_j')<\bar\mu_j.
\end{equation*}
\end{assumption}
We need Assumption \ref{assume:stability} to ensure that the system is stable, i.e., the service capacity of the cells is enough to serve all arriving jobs without allowing the queue sizes to explode.

 \begin{assumption}[Separable costs with additively slowly varying workload components]
\label{assume:fsr-cost-draft}
For every arc $(i,j)\in\AS$,
\begin{equation}
    c_{ij}(N)=f_j(N)+g_{ij},\quad N\geq0,
    \label{eq:fsr-raw-cost-draft}
\end{equation}
where $g_{ij}\geq0$ and
$f_j:[0,\infty)\to[0,\infty)$ is continuous, strictly increasing, and
coercive.  Moreover, for every fixed $H\geq0$,
\begin{equation}
    \lim_{N\to\infty}\frac{f_j(N+H)}{f_j(N)}=1.
    \label{eq:fsr-assumption-draft}
\end{equation}
\end{assumption}

In the implemented latency- and gradient-based policies, $g_{ij}=2\tau_{ij}$ is the round-trip network latency; the analysis allows any fixed nonnegative per-arc term $g_{ij}$.

We say that $f_j$ is \emph{additively slowly varying} when it
satisfies~\eqref{eq:fsr-assumption-draft}.  Thus, at large workloads, any fixed
additive workload increment changes $f_j$ by a vanishing fraction of its
current value.
\rif-based routing satisfies Assumption \ref{assume:fsr-cost-draft} automatically.
For latency-based routing, measured latency increases as requests compete for finite serving resources. At high load, additional backlog mainly adds queueing delay, motivating approximately linear growth: latency becomes unbounded, while any fixed backlog increment has a vanishing relative impact.
For gradient-based routing, $1/\mu_j'$ reflects diminishing returns to concurrency: near saturation, increasingly more backlog is needed to obtain additional throughput. The additive condition further assumes that a fixed backlog increment changes the already-small marginal throughput gain by only a vanishing fraction.

We now define \emph{equilibrium points} of our algorithms, which depend on the choice of cost functions $\bc = (c_{ij}(\cdot))_{(i,j)\in\AS}$ used to quantify the impact of routing a request.

\begin{definition}[Equilibrium point]\label{defn:eq}
A finite pair
$(\bN^*,\bx^*)\in\mathbb R_{\geq0}^{n}\times
\prod_{i\in\DS}\Delta_i$ is an \emph{equilibrium} for cost functions $\bc$ if it satisfies (1) flow balance,
\begin{equation}
    \sum_{i\in\DS^-(j)}\lambda_i x_{ij}^*=\mu_j(N_j^*),
    \qquad j\in\DS,
    \label{eq:equilibrium-flow-balance}
\end{equation}
and (2) complementary slackness, i.e., for every origin $i\in\DS$,
\begin{equation}
    \sum_{j\in\DS^+(i)}c_{ij}(N_j^*)
       (z_{ij}-x_{ij}^*)\geq0,
    \qquad \bz_i\in\Delta_i.
    \label{eq:equilibrium-vi}
\end{equation}
\end{definition}
The complementary slackness condition states that, for each cell $i\in\DS$, every arc with positive flow has the same cost. Otherwise, we could improve the solution by pushing an extra unit of flow through the arc with the lowest cost.

\begin{lemma}\label{lemma:existence-eq-points}
Suppose Assumptions~\ref{assume:fsr-service-draft}--
\ref{assume:fsr-cost-draft} hold. Then, there exists an equilibrium point. Moreover, the equilibrium workloads $\bN^*$ are unique.
\end{lemma}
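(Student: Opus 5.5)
The plan is to reduce the equilibrium conditions to the first-order optimality conditions of a convex potential (Beckmann-type) minimization over $\prod_{i\in\DS}\Delta_i$. By Assumption~\ref{assume:fsr-service-draft}, each $\mu_j$ is a continuous strictly increasing bijection from $[0,\infty)$ onto $[0,\bar\mu_j)$. Hence it has a continuous strictly increasing inverse $\mu_j^{-1}:[0,\bar\mu_j)\to[0,\infty)$, with $\mu_j^{-1}(y)\to\infty$ as $y\uparrow\bar\mu_j$.

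\textbf{Setting up the potential.} For $\bx\in\prod_i\Delta_i$, write the arriving flow as $y_j(\bx)=\sum_{i\in\DS^-(j)}\lambda_i x_{ij}$. Flow balance~\eqref{eq:equilibrium-flow-balance} then forces $N_j=\mu_j^{-1}(y_j)$, so the workload is determined by $\bx$. Set $h_j=f_j\circ\mu_j^{-1}$, which is continuous, strictly increasing, and tends to $+\infty$ at $\bar\mu_j$ by coercivity of $f_j$. Define
\[
\Phi(\bx)=\sum_{j\in\DS}\int_0^{y_j(\bx)}h_j(s)\,ds+\sum_{(i,j)\in\AS}\lambda_i g_{ij}x_{ij},
\]
with the convention $\Phi=+\infty$ if some $y_j>\bar\mu_j$. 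The integral at $y_j=\bar\mu_j$ is taken as the (possibly infinite) monotone limit.

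$\Phi$ has the following properties:
\begin{itemize}
\item Each integral term is a convex function of $y_j$, since $h_j$ is increasing. It is strictly convex in $y_j$ on $[0,\bar\mu_j)$, since $h_j$ is strictly increasing. Because $y$ is linear in $\bx$, $\Phi$ is convex in $\bx$.
\item $\Phi$ is lower semicontinuous on the compact set $\prod_i\Delta_i$.
\item $\Phi(\bx')<\infty$ for the pair from Assumption~\ref{assume:stability}.
\end{itemize}
Therefore a minimizer $\bx^*$ exists with $\Phi(\bx^*)<\infty$.

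\textbf{Interior saturation.} This is the main obstacle. The integral of $h_j$ up to $\bar\mu_j$ may be finite, so finiteness of $\Phi$ alone does not rule out a minimizer with $y_j(\bx^*)=\bar\mu_j$, which would mean infinite workload. I would argue by directional derivatives along the segment $\bx^*+\epsilon(\bx'-\bx^*)$. Let $S=\{j: y_j(\bx^*)=\bar\mu_j\}$. For $j\in S$ we have $y_j(\bx')<\bar\mu_j$, so $y_j$ decreases at a positive rate along the segment. The one-sided derivative of $\int_0^{y_j}h_j$ is then $-(\bar\mu_j-y_j')\cdot\lim_{s\uparrow\bar\mu_j}h_j(s)=-\infty$. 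All terms for $j\notin S$ and the linear $g$-term have finite one-sided derivatives. Hence $\Phi$ strictly decreases for small $\epsilon>0$, contradicting minimality. So $S=\emptyset$, and $N_j^*:=\mu_j^{-1}(y_j(\bx^*))$ is finite. This gives flow balance.

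\textbf{Optimality and complementary slackness.} Near $\bx^*$, $\Phi$ is differentiable, with $\partial\Phi/\partial x_{ij}=\lambda_i\big(h_j(y_j)+g_{ij}\big)=\lambda_i c_{ij}(N_j^*)$. The first-order optimality condition for a convex function over the product of simplices is $\sum_{i}\sum_{j}\lambda_i c_{ij}(N_j^*)(z_{ij}-x^*_{ij})\ge0$ for all $\bz\in\prod_i\Delta_i$. Since the constraint set is a product, this decouples across origins. Dividing by $\lambda_i>0$ gives exactly~\eqref{eq:equilibrium-vi}, which proves existence.

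\textbf{Uniqueness.} Conversely, any equilibrium $(\bN,\bx)$ has $\bN$ finite and satisfies flow balance, so $\Phi$ is differentiable at $\bx$. The variational inequality~\eqref{eq:equilibrium-vi} is then the first-order condition for the convex $\Phi$, so $\bx$ is a global minimizer. If two minimizers $\bx^1,\bx^2$ had $y(\bx^1)\ne y(\bx^2)$, their midpoint would give a strictly smaller value, by strict convexity of the integral terms in $y$ and linearity of the remaining terms. Hence $y$ is common to all equilibria, and $\bN^*=\mu^{-1}(y)$ is unique; the routing $\bx^*$ need not be. Note that the slowly varying condition~\eqref{eq:fsr-assumption-draft} is not needed for this argument, only continuity, monotonicity and coercivity of $f_j$.
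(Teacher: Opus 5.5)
Your proof is correct and follows the paper's own route. Both arguments minimize a Beckmann--McGuire--Winsten potential over the product of simplices, get existence from lower semicontinuity and compactness, rule out saturated cells by moving toward the strictly feasible routing $\bx'$ from Assumption~\ref{assume:stability}, identify the first-order optimality conditions with complementary slackness, and obtain unique inflows, hence unique workloads, from strict convexity. Your one-sided-derivative argument for the saturation step and your use of the variational-inequality form of optimality instead of KKT multipliers simply spell out what the paper compresses into a single sentence.
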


Lemma~\ref{lemma:existence-eq-points} maps every set of cost functions satisfying our assumptions to unique workload levels, but, as we now discuss, different cost functions lead to distinct equilibrium points. For gradient-based routing,  Lemma~\ref{lemma:existence-eq-points} implies that equilibrium points are global minimizers of steady state average latency subject to flow balance constraints at each cell.
See Appendix~\ref{app:equilibrium-points} for discussion of this point and a proof of the lemma itself.
\cite{balseiro2025load} shows that this optimal steady-state solution constitutes a uniform lower bound on the performance of any load balancing policy.
In latency-based routing, the complementary slackness condition ensures a Nash equilibrium since no request has a unilateral incentive to deviate and select a cell with lower cost.
It is well known from the congestion games literature that the Price of Anarchy of selfish routing can be unbounded, i.e., latency-based routing can lead to arbitrarily poor system-wide latency compared to gradient-based routing~\cite[Example 18.3]{roughgarden2010algorithmic}. In addition, \rif-based routing can theoretically lead to arbitrarily poor latency in heterogeneous networks as it ignores network and serving latencies. In practice, however, both policies perform remarkably well. (See Figure~\ref{fig:simulation_body}.)

\subsection{Global Performance Guarantee}
In this section we present  our main result, a global performance guarantee that measures the cumulative distance between the workloads of our flow routing algorithms and the equilibrium levels. The most convenient distance measure for stating our results is the following semi-metric:
\begin{align}\label{eq:semimetric-workload}
    D_j(N_j, N_j^*) = \left( f_j(N_j) - f_j(N_j^*) \right) \cdot \left( \mu_j(N_j) - \mu_j (N_j^*) \right)\,,
\end{align}
where $f_j(N_j)$ is the workload-dependent component in the additive decomposition of the cost function $c_{ij}(N_j)$ given in Assumption~\ref{assume:fsr-cost-draft}.
Because the cost and processing rate functions are strictly increasing, we have that $D_j(N_j, N_j^*) \ge 0$ and $D_j(N_j, N_j^*) = 0$ if and only if $N_j = N_j^*$. While $D_j$ is symmetric, it is only a semi-metric because it does not necessarily satisfy the triangle inequality.

\begin{theorem}[Cumulative guarantee for flow routing]
\label{thm:fsr-raw-cumulative-draft}
Let $\bar\tau = \max_{(i,j) \in \AS} \tau_{ij}$. Suppose Assumptions~\ref{assume:fsr-service-draft}--\ref{assume:fsr-cost-draft} hold, together with a mild admissibility condition on the prescribed workload history over $[-\bar\tau,0]$ (Assumption~\ref{assume:fsr-history-draft} in Appendix~\ref{app:flow-fluid}). Fix an equilibrium $(\bN^*,\bx^*)$, network latencies $\boldsymbol\tau=(\tau_{ij})_{(i,j)\in\AS}$, and an admissible workload history $\left.\bN\right|_{[-\bar\tau,0]}$. Write $\underline\eta=\min_{i\in\DS}\eta_i$ and
$\bar\eta=\max_{i\in\DS}\eta_i$. There exist constants
\[
\eta_0(\boldsymbol\tau)\in(0,\infty],\qquad
G_1(\boldsymbol\tau)<\infty,\qquad
G_2(\boldsymbol\tau,\left.\bN\right|_{[-\bar\tau,0]})<\infty,
\]
independent of $T$ and the stepsizes, such that the following holds. For $\bar\eta\leq\eta_0(\boldsymbol\tau)$ and every $T\geq2\bar\tau$,
\begin{align*}
\frac1T\sum_{j\in\DS}\int_0^T D_j(N_j(t), N_j^*)\,dt
&\leq
\frac{2\sum_{i\in\DS}\lambda_i}{T\underline\eta}
+G_1(\boldsymbol\tau)\bar\eta\\
&\quad+
\frac{G_2(\boldsymbol\tau,
\left.\bN\right|_{[-\bar\tau,0]})}{T}.
\end{align*}
\end{theorem}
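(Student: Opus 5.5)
We will work in the continuous-time fluid limit of the flow-routing update~\eqref{eq:flow-routing}. In this limit each $\bx_i$ follows projected dynamics $\dot\bx_i=\Pi_{T_{\Delta_i}(\bx_i)}(-\eta_i\bc_i(t))$. Each workload obeys $\dot N_j=\sum_{i\in\DS^-(j)}\lambda_i x_{ij}(t-\tau_{ij})-\mu_j(N_j)$, since flow routed at time $t-\tau_{ij}$ arrives at $j$ at time $t$, and $\bc_i(t)$ uses $N_j(t-\tau_{ij})$.

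The plan is Lyapunov's direct method with a function combining two parts.

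\textbf{Routing part.} This is a weighted distance in routing space, $V_x=\sum_i\frac{\lambda_i}{2\eta_i}\|\bx_i-\bx_i^*\|^2$. Standard projected-gradient (mirror descent) estimates give
\[
\dot V_x\le-\sum_{(i,j)\in\AS}\lambda_i c_{ij}(N_j(t-\tau_{ij}))(x_{ij}(t)-x_{ij}^*),
\]
up to discretization terms of order $\eta\|\bc\|^2$ in the discrete version.

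\textbf{Workload part.} This is an integral potential $V_N=\sum_j\int_{N_j^*}^{N_j}(f_j(s)-f_j(N_j^*))\,ds$, which is convex and nonnegative. Its derivative is $\sum_j(f_j(N_j)-f_j(N_j^*))\bigl(\sum_i\lambda_i x_{ij}(t-\tau_{ij})-\mu_j(N_j)\bigr)$.

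\textbf{Combining the two.} Write $\mu_j(N_j^*)=\sum_i\lambda_i x_{ij}^*$ by~\eqref{eq:equilibrium-flow-balance}, and use the splitting $c_{ij}=f_j+g_{ij}$ together with complementary slackness~\eqref{eq:equilibrium-vi}, which gives $\sum_j c_{ij}(N_j^*)(x_{ij}-x_{ij}^*)\ge0$. Then the sum $V_x+V_N$ has derivative bounded by $-\sum_j D_j(N_j(t),N_j^*)$ plus mismatch terms. The mismatch comes from the delays, because $V_x$ pairs $x_{ij}(t)$ with $f_j(N_j(t-\tau_{ij}))$ while $V_N$ pairs $x_{ij}(t-\tau_{ij})$ with $f_j(N_j(t))$. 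The $g_{ij}$ contributions telescope exactly against the equilibrium cost terms.

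\textbf{Handling the mismatch.} To absorb it, augment the Lyapunov function with a delay functional of Krasovskii type, for example
\[
\sum_{(i,j)}\lambda_i\int_{t-\tau_{ij}}^{t}\bigl(f_j(N_j(s))-f_j(N_j^*)\bigr)\bigl(x_{ij}(s)-x_{ij}^*\bigr)\,ds.
\]
Differentiating this converts the cross-time products into same-time products plus a remainder. The remainder is bounded by the variation of $\bx_i$ over a window of length $\tau_{ij}$, which is at most $\eta_i\tau_{ij}\|\bc_i\|$. It therefore contributes terms of order $\bar\eta\,\bar\tau\, f_j(N_j)^2$.

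\textbf{Main obstacle.} The key step, and the one we expect to be hardest, is controlling these remainders when workloads are large, because costs are unbounded. Here Assumption~\ref{assume:fsr-cost-draft} enters. Over a window of length $\bar\tau$ the workload changes by at most $H=\bar\tau\sum_i\lambda_i$, since arrivals are bounded and departures are nonnegative. Additive slow variation then gives $f_j(N_j(t-\tau))\le(1+o(1))f_j(N_j(t))$ at large $N_j$, so the delayed costs are comparable to current costs. We would then show that $D_j$ grows at least like $f_j(N_j)(\bar\mu_j-\mu_j(N_j^*))$, which is linear in $f_j$. This uses the strict slack $\mu_j(N_j^*)<\bar\mu_j$, which we obtain from Assumption~\ref{assume:stability} and the uniqueness in Lemma~\ref{lemma:existence-eq-points}. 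For $\bar\eta\le\eta_0(\boldsymbol\tau)$, half of $D_j$ should then dominate the remainder, except for a bounded-workload region where the remainder is at most a constant times $\bar\eta$. That constant is $G_1(\boldsymbol\tau)$.

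\textbf{Conclusion.} Integrate the resulting inequality $\dot V\le-\sum_j D_j+G_1\bar\eta$ over $[0,T]$ and drop $V(T)\ge0$. The initial value $V_x(0)\le\sum_i\lambda_i\cdot 2/(2\eta_i)$ is bounded using $\|\bx_i-\bx_i^*\|^2\le2$ on the simplex, which yields the term $2\sum_i\lambda_i/(T\underline\eta)$ after dividing by $T$ and adjusting constants. The initial values of $V_N$ and the delay functional depend only on the admissible history on $[-\bar\tau,0]$ and give $G_2/T$. The condition $T\ge2\bar\tau$ is needed so that the boundary terms of the delay functional at time $T$ can be discarded or bounded.
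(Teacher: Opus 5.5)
Your setup is sound. The two potentials are the right ones. Invoking complementary slackness at the current time, so that only $f_j-f_j(N_j^*)$ enters the delay mismatch and the $g_{ij}$ drop out, is a valid alternative to the paper's separate treatment of the mixed-delay gap $\mathcal S_T^\tau$.

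The gap is in the absorption step. With your Krasovskii functional the leftover term is
\[
\sum_{(i,j)\in\AS}\lambda_i\bigl(\tilde f_j(N_j(t))+\tilde f_j(N_j(t-\tau_{ij}))\bigr)\bigl(x_{ij}(t-\tau_{ij})-x_{ij}(t)\bigr),\qquad \tilde f_j=f_j-f_j(N_j^*).
\]
Bounding the routing change by the routing speed $2\eta_i\max_k c_{ik}$ makes this term of order $\bar\eta\bar\tau f^2$, as you note. But $\mu_j$ is bounded, so $D_j(u)/f_j(u)\to\bar\mu_j-\mu_j(N_j^*)$, and the workload gap grows only \emph{linearly} in $f_j$. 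Hence $\frac12 D_j$ dominates your remainder only where $f_j$ is at most a constant times $1/(\bar\eta\bar\tau)$. That is the reverse of ``outside a bounded-workload region,'' and no trajectory-independent $\eta_0$ can repair it.

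Falling back on $|x_{ij}(t-\tau_{ij})-x_{ij}(t)|\le1$ makes the remainder linear, roughly $2\sum_{i\in\DS^-(j)}\lambda_i f_j$. Its coefficient, however, need not be below $\frac12(\bar\mu_j-\mu_j(N_j^*))$. Using slow variation only to compare delayed and current costs addresses neither problem.

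The missing idea is to split the workload cost at a large threshold into a bounded part $\min\{f_j,f_j(\bar N_j+2v\bar\tau)\}$ and a tail $e_j=[f_j-f_j(\bar N_j+2v\bar\tau)]_+$, and to treat the two with different time shifts.

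\textbf{Bounded part.} Shift the routing variable as you do. The remainder is then (bounded cost)$\times\bar\eta\tau\times$(cost), which is linear in the cost. The envelope $c_{ij}(u)\le\max_{(k,\ell)}(f_\ell(\bar N_\ell)+g_{k\ell})+2D_j(u)/(\bar\mu_j-\mu_j(N_j^*))$ lets you absorb it into $\frac14\mathcal D_T$ plus a $G_1\bar\eta T$ term once $\bar\eta\le\eta_0$. This has to be done after integration, since the costs appear at delayed times; the history integral $\int_{-\bar\tau}^0\sum_jD_j$ goes into $G_2$.

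\textbf{Tail.} Shift the \emph{cost} argument instead of the routing argument. Substituting $t\mapsto t+\tau_{ij}$ in the second product turns the integrated mismatch into $\int_{\tau_{ij}}^T[e_j(N_j(t))-e_j(N_j(t-2\tau_{ij}))]d_{ij}(t-\tau_{ij})\,dt$ plus boundary pieces. This uses only $|d_{ij}|\le1$ and needs no factor of $\eta$. You also have $|N_j(t)-N_j(t-2\tau_{ij})|\le2v\bar\tau$ with $v=\max_j\max\{\sum_{i\in\DS^-(j)}\lambda_i,\bar\mu_j\}$; note you need the departure bound here, not just nonnegativity of departures. Additive slow variation above $\bar N_j$ then bounds the integrand by $(D_j(N_j(t))+D_j(N_j(t-2\tau_{ij})))/(8\sum_{i\in\DS^-(j)}\lambda_i)$.

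\textbf{Boundary term at $T$.} The tail boundary piece on $[T-\tau_{ij},T]$ has no sign and is unbounded, so it cannot be discarded. It must be charged to $\frac12\Phi(\bN(T))$ plus a finite constant, using $\Phi_j(u)/f_j(u)\to\infty$, which is a second consequence of slow variation.
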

Theorem \ref{thm:fsr-raw-cumulative-draft} shows a finite-horizon trade-off: larger stepsizes reduce the adaptation term $1/(T\underline\eta)$ but increase the delay-dependent penalty $G_1(\boldsymbol\tau)\bar\eta$. For fixed nonzero stepsizes, taking $T\to\infty$ leaves an $O(\bar\eta)$ bound on the time-average distance. If all stepsizes are set to $\eta=T^{-1/2}$, for horizons large enough that $\eta\leq\eta_0(\boldsymbol\tau)$, the right-hand side is $O(T^{-1/2})$ and the time-average distance converges to zero.

\paragraph{{Main challenges and key proof ideas.}}
The analysis of distributed load balancing algorithms is difficult because routing decisions and workload dynamics interact, and because every router acts on a stale view of the system. Our proof relies on a Lyapunov function with two components. The first component, adapted from the classical analysis of gradient descent, measures the squared distance between the algorithm's routing decisions and their equilibrium values. The core novelty lies in the second component, built from the routing costs themselves, which explicitly controls the workload dynamics induced by request arrivals and departures. Delay leaves error terms that pair costs and routing decisions measured a round trip apart, where cost functions can be unbounded. We therefore split each cost into a bounded part and a tail, and analyze them separately. Appendix~\ref{app:global-performance} contains the full proof.

\begin{figure}
    \centering
    \includegraphics[width=0.7\linewidth]{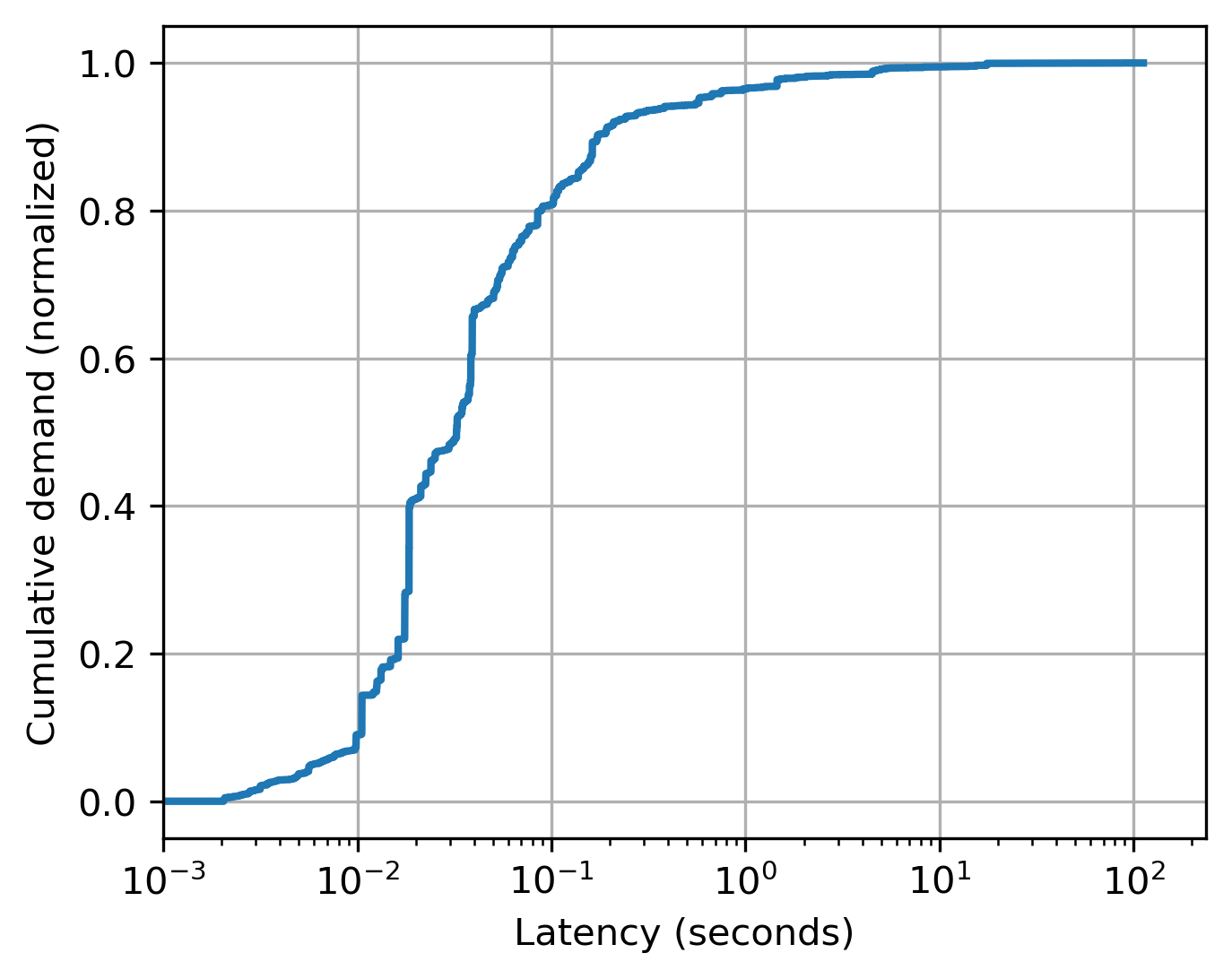}
    \caption{Cumulative distribution of latencies across our fleet of endpoints, weighted by their demand in requests per second. }
    \label{fig:fleet-latency}
\end{figure}

\begin{figure*}
    \centering
    \subcaptionbox{Demand and \rif.\label{fig:endpoint-demand}}{\includegraphics[height=4cm]{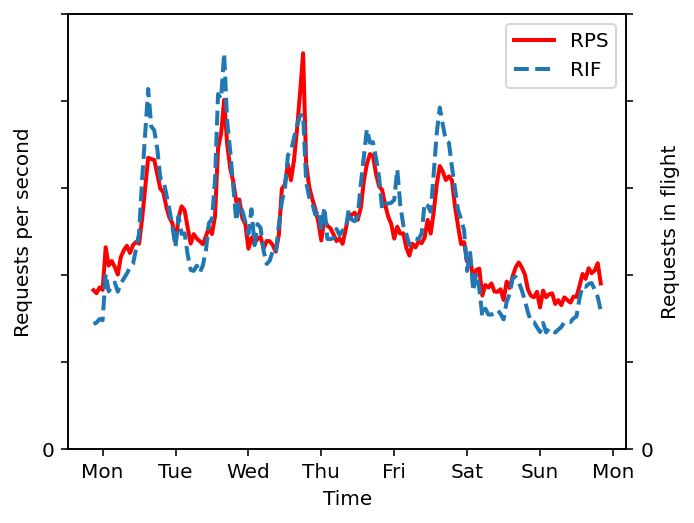}}
    \subcaptionbox{Utilization and inference servers.\label{fig:endpoint-utilization}}{\includegraphics[height=4cm]{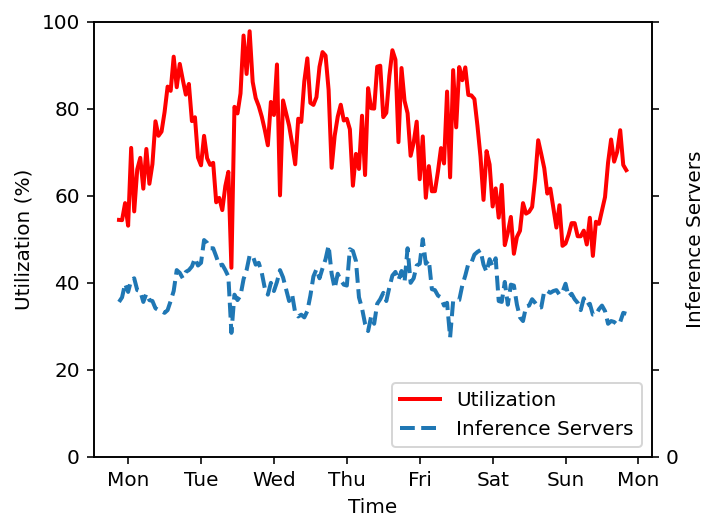}}
    \subcaptionbox{Mean and p90 latency.\label{fig:endpoint-latency}}{\includegraphics[height=4cm]{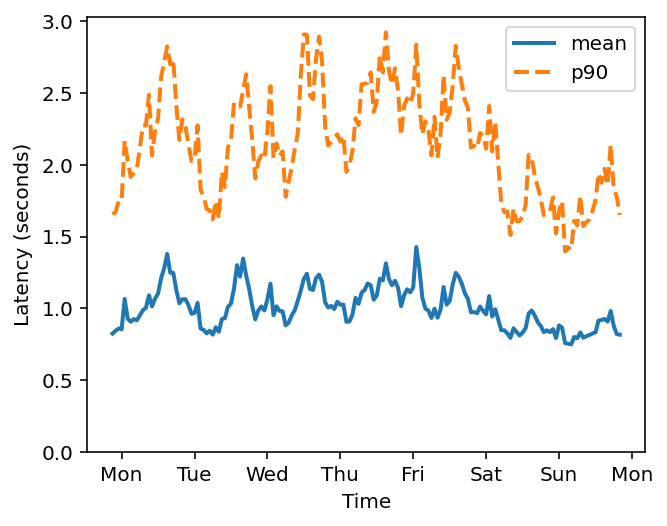}}
    \caption{Typical production data for a GenAI endpoint across one week. Axes in (a) are normalized to protect proprietary operational data.}
    \label{fig:endpoint}
\end{figure*}

\section{Deployment and Evaluation}\label{sec:deployment}

We deployed DLB starting in November 2024 as the primary load balancer for Google's GenAI serving stack, and it has since expanded to serve other types of models. While specific usage figures remain confidential, the system's adoption has tracked the rapid expansion of GenAI and machine learning workloads at Google. Notably, the number of MLAs balanced by DLB has increased tenfold since launch. DLB now supports thousands of distinct endpoints and balances millions of requests per second (RPS) across millions of MLAs in Google's global data center infrastructure.

Figure~\ref{fig:fleet-latency} illustrates the cumulative distribution of serving latencies across our fleet, weighted by request volume. Workloads are extremely heterogeneous, with latencies spanning multiple orders of magnitude: from milliseconds for lightweight CPU models, to seconds for some LLMs, to several minutes for complex reasoning requests. Although the latter are a small fraction of requests, they consume a much larger fraction of MLA chips.

\subsection{\texorpdfstring{System overhead}{System overhead}}
Each cell runs between 5 and 200 software router tasks depending on its query rate and MLA fleet size.
The aggregate compute footprint of all DLB routers is roughly $2500\times$ smaller than the cost of the MLAs they balance (i.e., about $0.04\%$). Probing (root$\rightarrow$leaf, leaf$\rightarrow$server) and P2P gather-scatter account for approximately $40\%$ of that router CPU consumption. The routing delay added by traversing the root and leaf routers is on the order of a few hundred microseconds to one millisecond; for a representative deployment of a frontier LLM, this amounts to $0.04\%$ of end-to-end latency in the mean and $0.14\%$ at p99: orders of magnitude smaller than the migration-associated latency changes reported below.

\subsection{Migration Analysis} \label{sec:migration}
This subsection compares DLB's production performance against \gslb. Note that \gslb is a \emph{strong baseline}, having routed most of Google's production traffic for 20+ years, while undergoing continuous improvement.

Figure~\ref{fig:endpoint-migration} illustrates the raw time-series dynamics of a single high-throughput endpoint before and after its live transition from \gslb to DLB. The week before the transition is red, and the week after is green. For this specific endpoint trace, raw observational latency dropped by 9\% under comparable request arrival rates (Figure~\ref{fig:endpoint-migration-qps} in Appendix~\ref{app:migration_analysis}). %

To quantify the impact associated with the \gslb to DLB migration more systematically across our entire infrastructure, we conducted an interrupted time series analysis on 68 production endpoints that migrated in 2025. We employed a fixed-effects panel regression to isolate the treatment effect from confounding factors like demand elasticity and production model heterogeneity. We specified a log-log regression model to normalize differences in scale across endpoints, allowing us to interpret regression coefficients as percentage changes~\cite{Wooldridge2010}. Our analysis is based on 56,663 observations, allowing us to estimate regression coefficients with high confidence. We defer a detailed description of the econometric analysis to Appendix~\ref{app:migration_analysis}, and provide a summary here.

Our analysis confirms that DLB yields statistically significant reductions in latency across all metrics ($p < 0.01$). Specifically, after controlling for load fluctuations across the 68-endpoint fleet, the regression estimates a relative causal latency reduction of roughly \textbf{17\% for p50, 13\% for mean, 14\% for p90, and 13\% for p95} \footnote{The 95\% confidence intervals on these estimates are $[-17.26\%, -16.30\%]$ for p50, $[-13.81\%, -12.97\%]$ for mean, $[-14.37\%, -13.32\%]$ for p90, and $[-13.23\%, -12.03\%]$ for p95 latency.} compared to the legacy baseline (\gslb), with results weighted by compute-seconds. This 17\% p50 / 13\% mean causal reduction across 68 endpoints demonstrates that while individual endpoints experience varying raw drops depending on their specific compute-to-network bottleneck (e.g., 9\% for the specific endpoint in Figure~\ref{fig:endpoint-migration}), DLB delivers substantial performance improvement, validating the efficacy of its distributed, state-aware routing in a production environment. The intra-cell load balancer (Prequal) was identical across the MNLB/DLB migration, so we attribute these gains to better inter-cell load balancing.

\subsection{\texorpdfstring{Performance of a Production Endpoint Deployment}{Performance of a Production Endpoint Deployment}}

Figure~\ref{fig:endpoint} plots a one-week performance profile of a production GenAI workload under DLB, revealing the interplay between demand, capacity, and latency.
\begin{itemize}
    \item \emph{Demand and scaling.} Demand exhibits a strong diurnal pattern with daily weekday peaks and lower peaks on the weekend  (Fig~\ref{fig:endpoint-demand}). The model management system dynamically adjusts the replica count (Fig~\ref{fig:endpoint-utilization}, dashed line), but it operates on a slower timescale than traffic bursts and is constrained by available capacity. Thus, utilization frequently spikes above 90\% during the onset of demand surges before new capacity comes online.
    \item \emph{Latency stability.} Despite the utilization spikes, the mean serving latency remains remarkably stable at around 1 second (Fig~\ref{fig:endpoint-latency}, solid line). Hence, \rif tracks RPS almost perfectly, as predicted by Little's Law.
    \item \emph{Tail effects.} Unlike the mean, the p90 latency is highly sensitive to load, fluctuating between 1.5 and 3.0 seconds in correlation with daily peaks. Queue buildup during peak hours disproportionately impacts the tail. %
    Our interrupted time series analysis in Section~\ref{sec:migration} strongly suggests that the elevation of p90 latency during peak hours would be even more severe if we had not switched from MNLB to DLB.
\end{itemize}

\subsection{Simulation-based Evaluation}\label{sec:experiments}
\begin{figure*}
    \centering
    \subcaptionbox{Mean latency.\label{fig:simulation_body_utilization_mean}}{\includegraphics[height=4cm]{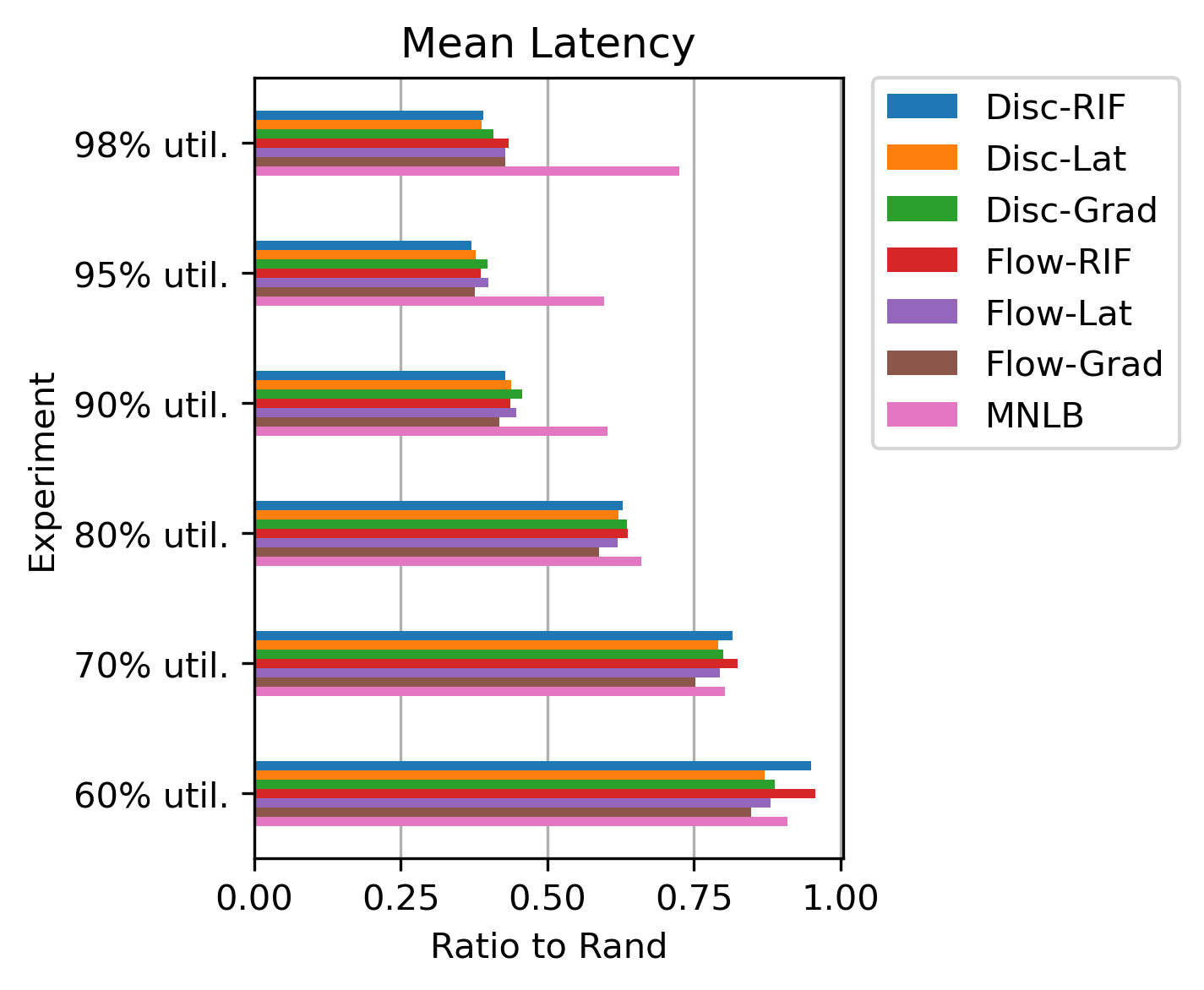}}
    \subcaptionbox{p90 latency.\label{fig:simulation_body_utilization_p90}}{\includegraphics[height=4cm]{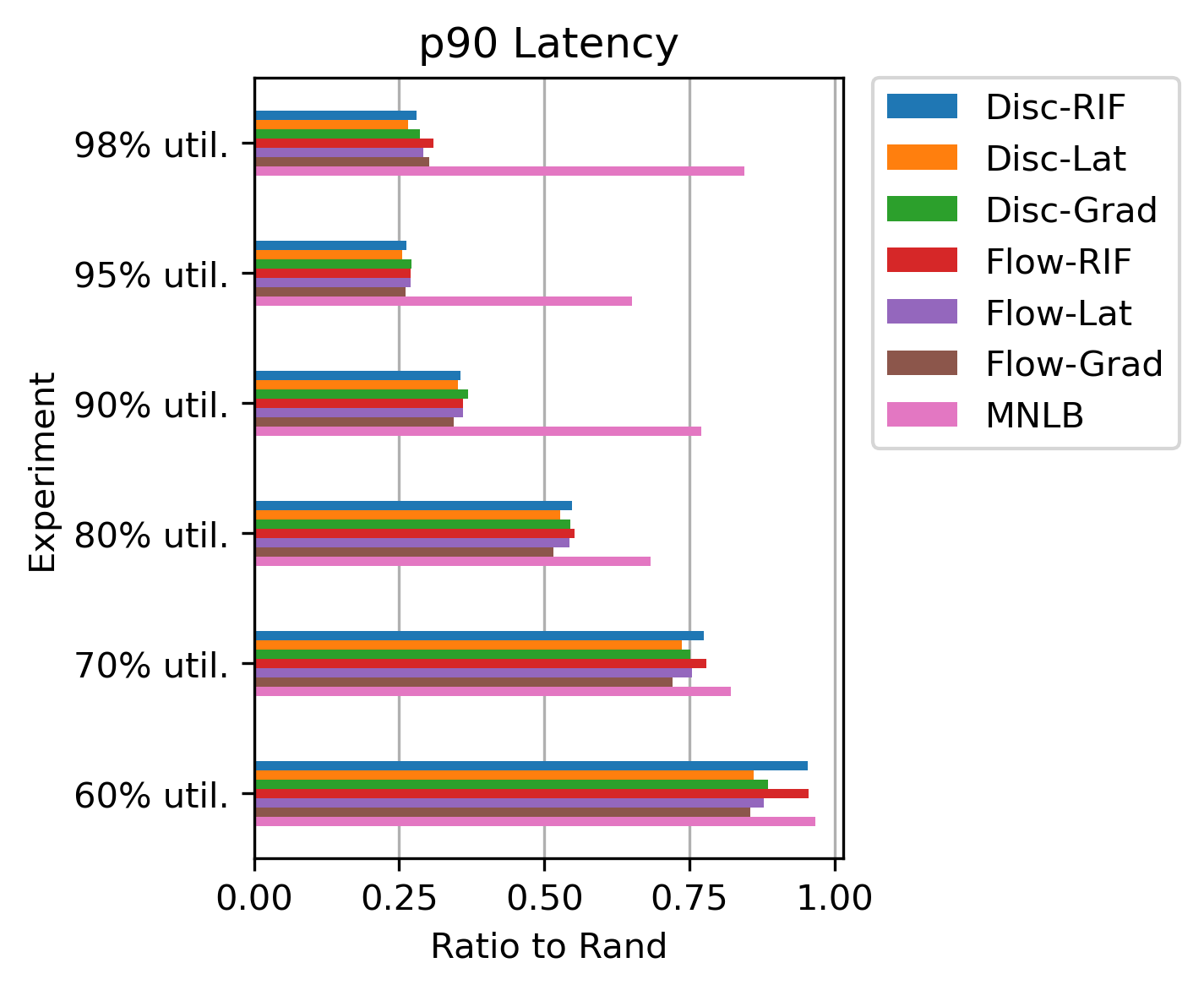}}
    \subcaptionbox{Error rates.\label{fig:simulation_body_utilization_error}}{\includegraphics[height=4cm]{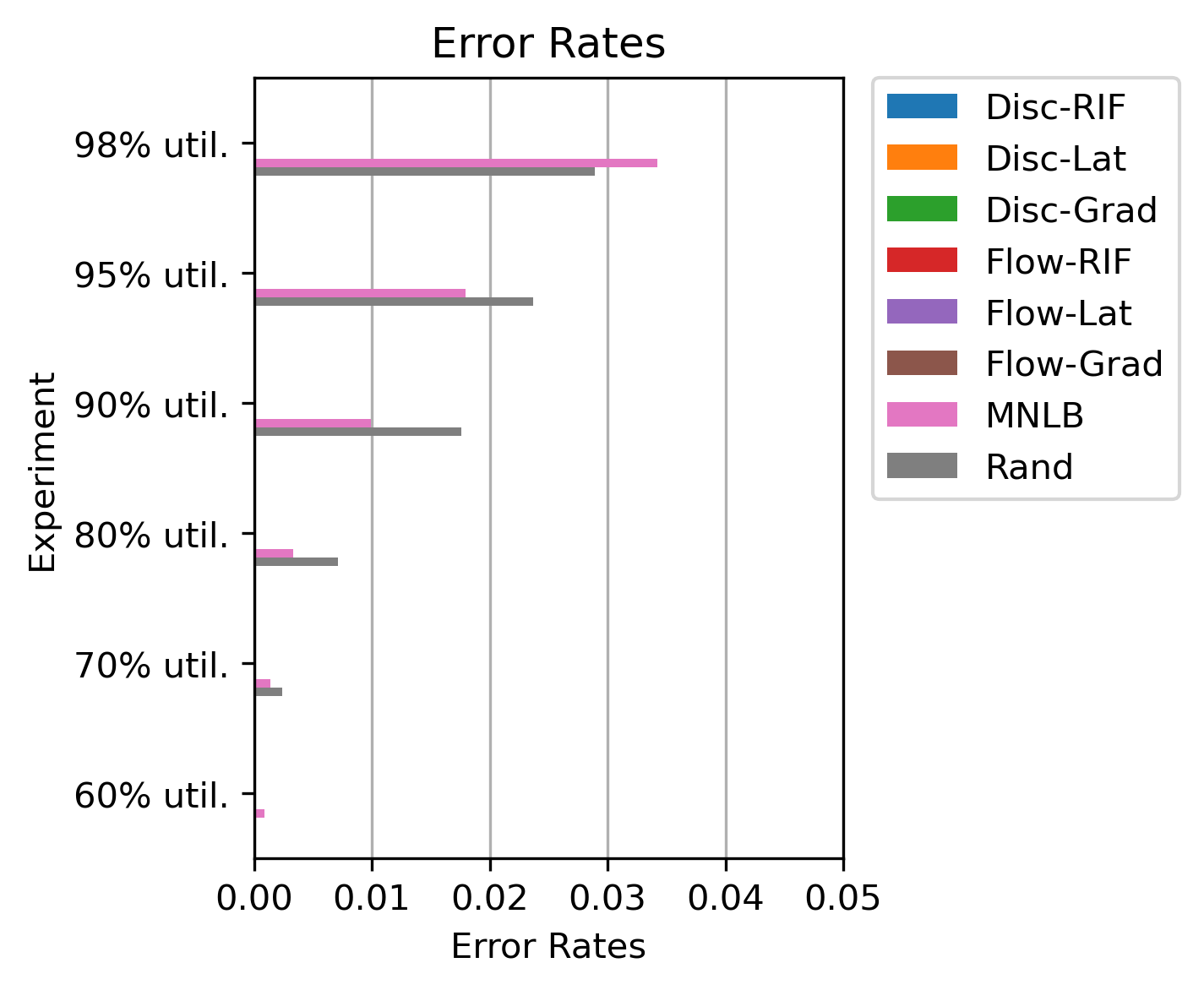}}
    \caption{Mean latency, p90 latency, and error rates of different routing algorithms across target utilization levels. Mean and p90 latency are normalized by the corresponding latency of weighted random routing (\randomrouting). }
    \label{fig:simulation_body}
\end{figure*}

\begin{figure*}
    \centering
    \subcaptionbox{Mean latency.\label{fig:estimation_error_mean_body}}{\includegraphics[height=4cm]{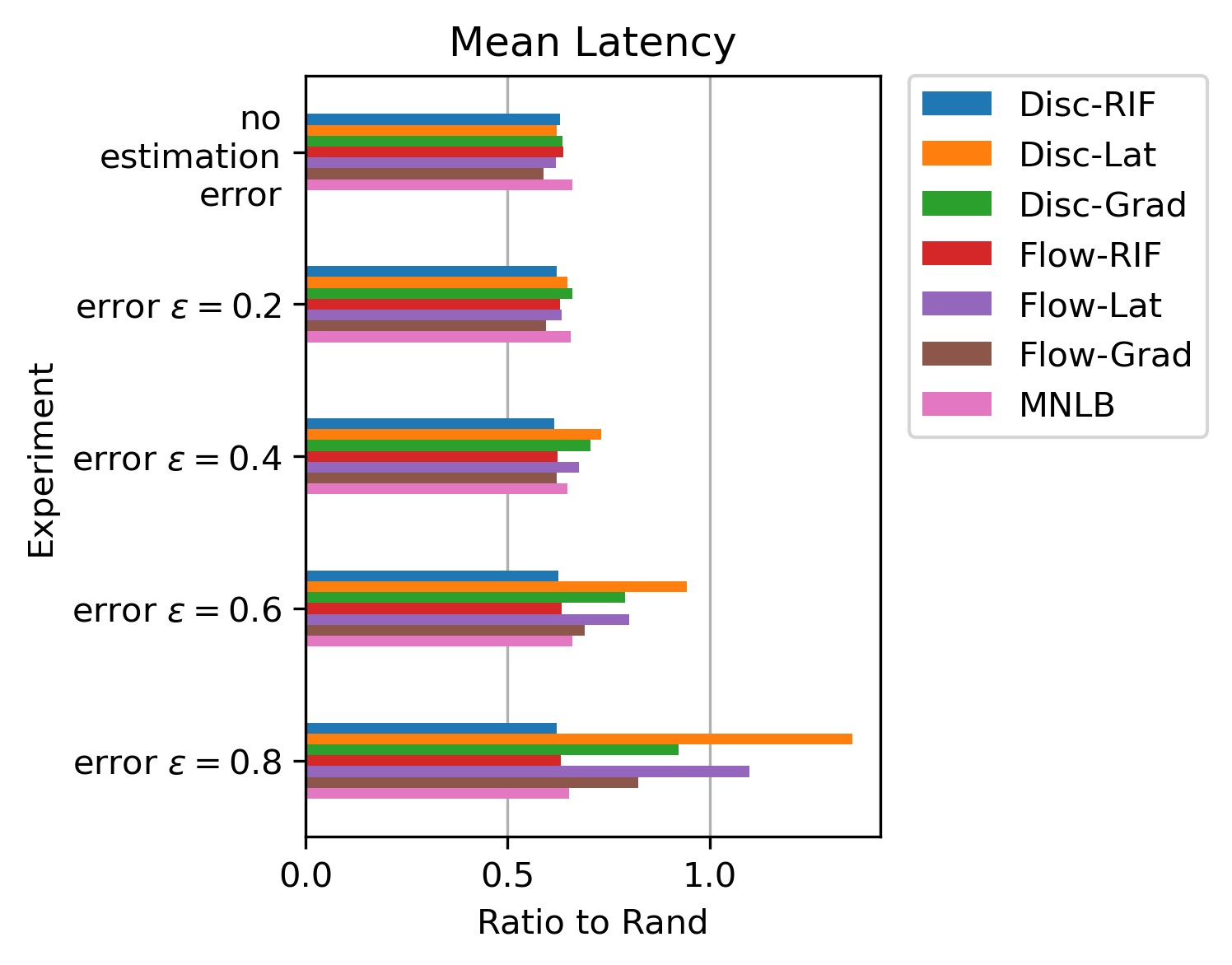}}
    \subcaptionbox{p90 latency.\label{fig:estimation_error_p90_body}}{\includegraphics[height=4cm]{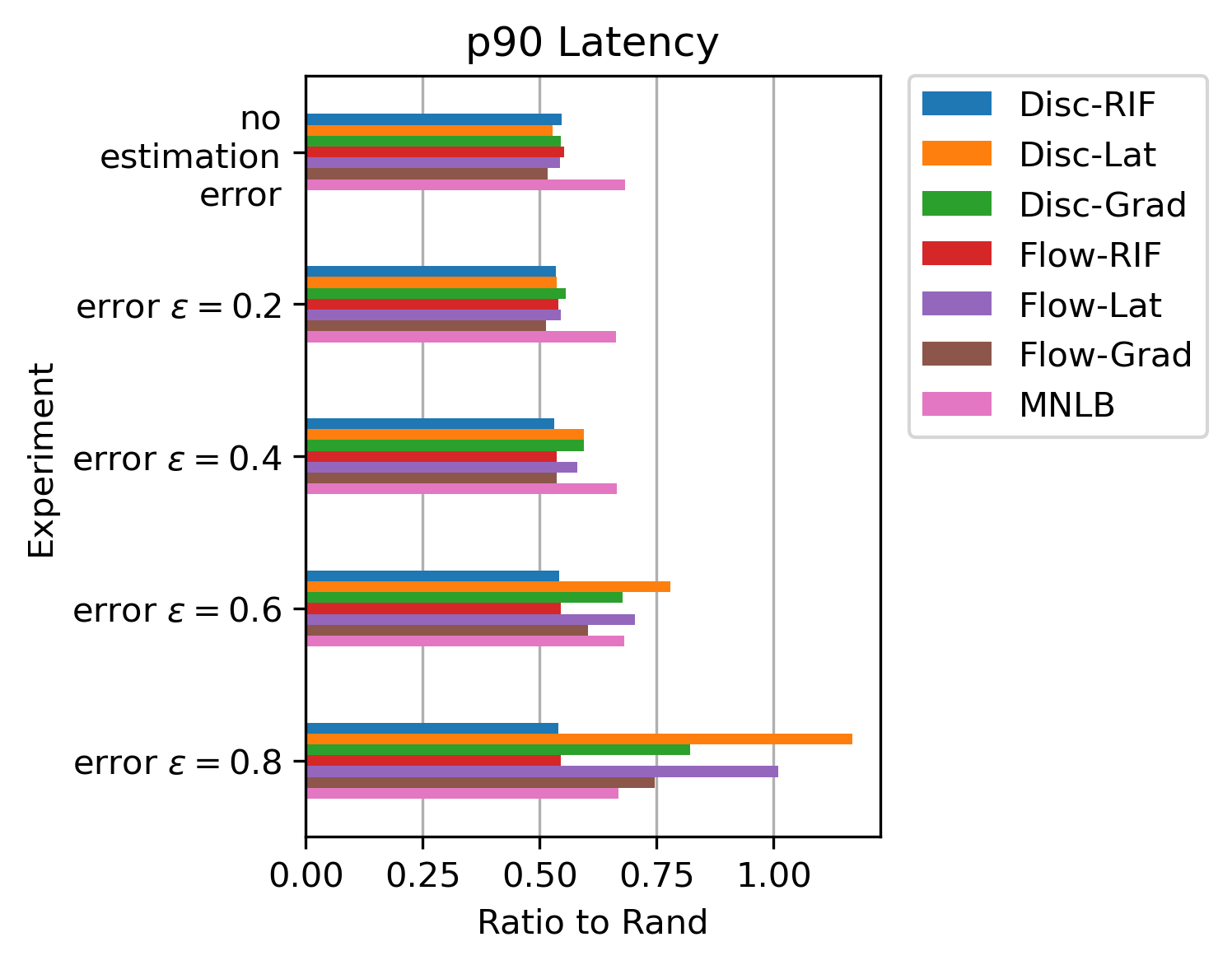}}
    \subcaptionbox{Error rates.\label{fig:estimation_error_error_body}}{\includegraphics[height=4cm]{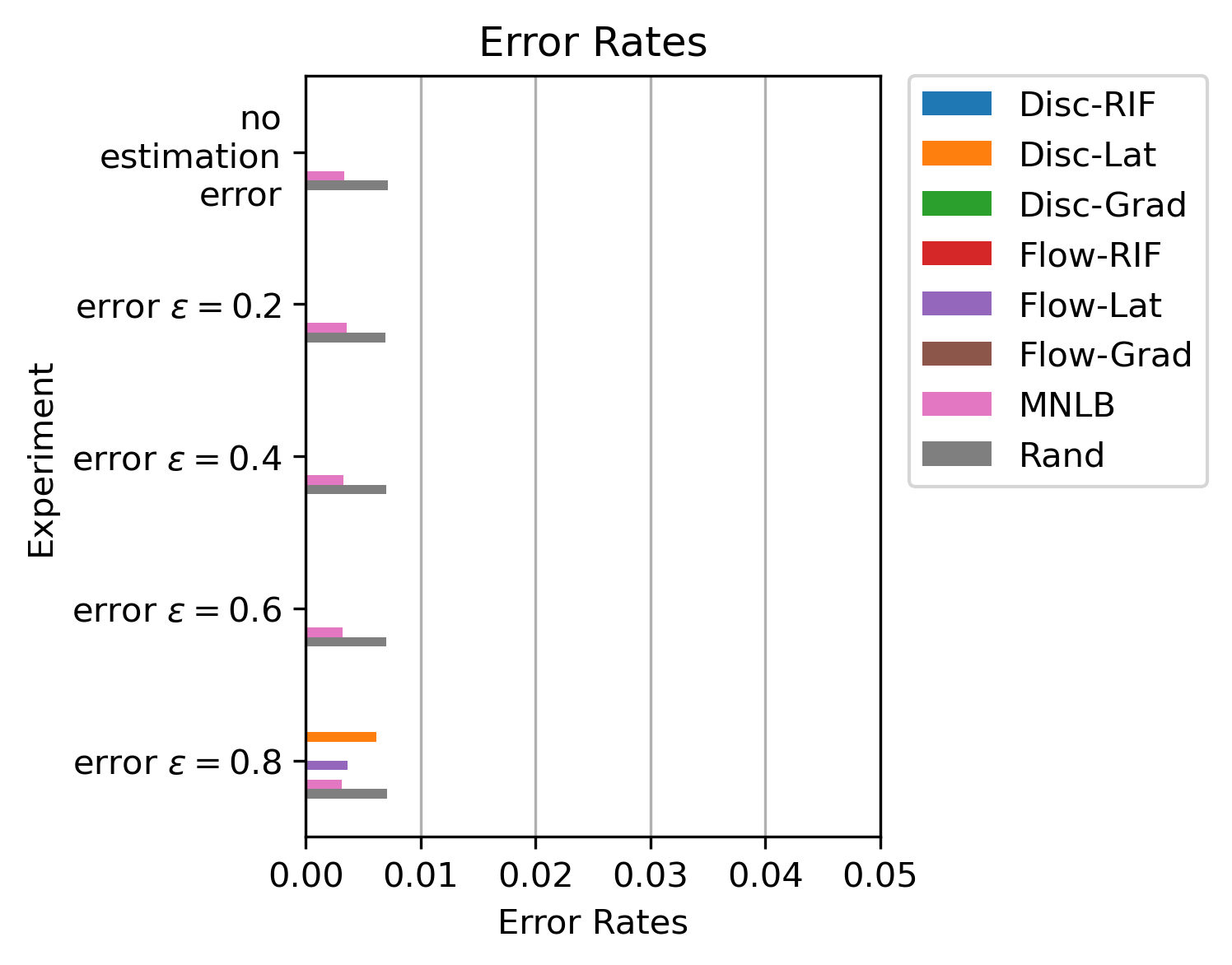}}
    \caption{Mean latency, p90 latency, and error rates of different routing algorithms subjected to varying amounts of latency estimation error. Mean and p90 latency are normalized by the corresponding latency of weighted random routing (\randomrouting).}
    \label{fig:estimation_error_body}
\end{figure*}

We evaluated DLB using a high-fidelity simulator that integrates the C++ production codebase to ensure all modules behave exactly as they do in production. The simulations consider disaggregated prefill-decode architectures across geographically distributed cells. Our simulations evaluate the routing algorithms in isolation and do not model affinity routing: requests are independent and receive no prefix-cache benefits. We tested the system under various conditions, including hardware and processing time heterogeneity, adaptability to demand bursts and capacity outages, and robustness against latency function estimation errors. We compare DLB's discrete (\discreterouting) and flow-based (\flowrouting) routing algorithms using \rif (\rifrouting), latency (\latencyrouting), and gradient (\gradientrouting) cost functions against weighted random routing (\randomrouting), which routes requests proportionally to the number of replicas in each cell~\cite{hajek1985extremal}, and the legacy centralized load balancer (\gslbrouting) described in Section~\ref{sec:old-arch}. We defer the full description of the experiments to Appendix~\ref{app:simulation} and summarize the results here.

The simulations demonstrate that DLB's  policies substantially outperform the baselines in heterogeneous scenarios. While gradient-based flow routing provided the best overall performance, particularly for short ``sand'' requests where network latency is significant, the simpler \rif-based routing proved highly competitive at high utilization levels. Additionally, DLB exhibited superior robustness compared to the legacy \gslbrouting system, significantly reducing tail latencies and error rates during demand bursts and capacity outages. %

Figure~\ref{fig:simulation_body} shows how performance varies with the target utilization level. We report \emph{normalized latency} (the ratio to the random routing baseline) to standardize comparisons across diverse scenarios. Naturally, the relative improvements of our stateful policies over \randomrouting and \gslbrouting are smaller when utilization is low, as queuing delays are less severe. This regime highlights a critical limitation of \rif-based routing. When the system is lightly loaded, the optimal strategy is to route exclusively to faster or closer cells (depending on query cost). \rif-based routing results in sub-optimal performance by spilling requests to slower cells despite the availability of faster alternatives. When utilization is high, our policies yield substantially lower latencies and error rates compared with \gslbrouting (and \randomrouting), highlighting DLB's ability to drive higher utilization and keep error rates and latencies in check.

To evaluate robustness against latency estimation error, we inject multiplicative noise into the model parameters. Given a maximum perturbation magnitude $\epsilon > 0$, we define the perturbed parameters for cell $i$ as $(1 + \epsilon \cdot \xi_i) \cdot \hat{\theta}_i$, where $\xi_i \sim \text{Unif}[-1,1]$ and $\hat{\theta}_i$ are the original parameters. We report mean latencies, p90 latencies, and error rates in Figure~\ref{fig:estimation_error_body}.

These results highlight an advantage of \rif-based routing: because routing decisions are estimation agnostic, it is immune to latency estimation errors. The latency-estimation-dependent policies \discreterouting-\latencyrouting, \flowrouting-\latencyrouting, \discreterouting-\gradientrouting, and \flowrouting-\gradientrouting degrade gracefully; while their performance declines as $\epsilon$ increases, they maintain competitive performance even in the presence of substantial parameter noise. \gslbrouting and \randomrouting do not use the latency estimator, so are also immune to errors in the estimates. Note, however, that \gslbrouting depends on other estimated inputs such as query arrival rates.

\section{Lessons Learned}\label{sec:lessons}

We highlight the specific benefits, unexpected behaviors, and hard-earned lessons from this deployment.

\paragraph{Hyperscaling to production.} DLB scaled rapidly alongside the dramatic increase in demand for GenAI models at Google. We faced the challenge of supporting a vast assortment of endpoints with heterogeneous attributes---balancing both ``boulders'' and ``sand'' (Figure~\ref{fig:fleet-latency}).
\rif-based routing is used when an endpoint is onboarded because it requires no latency estimation and robustly handles demand and capacity ramp-ups. It excels at equalizing utilization at the expense of increased network latency and bandwidth. We gradually transition endpoints to latency-aware cost functions. Discrete routing suits ``boulders,'' whose serving time dominates feedback delay, while flow routing suits ``sand,'' where delayed feedback would otherwise induce herding.
Our aim was to make routing a system concern rather than a user configuration problem, and for most endpoints it is; high-value endpoints still receive manual tuning of parameters.

\paragraph{Objectives and configuration.} We found that ``load balancing'' is often an ill-posed optimization problem: objective functions vary drastically across tenants, ranging from throughput maximization to minimizing time-to-first-token. However, exposing complex configuration knobs to accommodate every nuance creates unmanageable toil. Instead, we focused on minimizing end-to-end latency while maintaining high goodput. While this objective does not perfectly align with every stakeholder's  desires, it served as a robust proxy for system health that minimized user complaints.

\paragraph{Integration with other systems.} DLB does not operate in isolation---it is part of a cohesive serving stack. For instance, it complements the horizontal scaling capabilities of the model management system, which dynamically adjusts replica counts based on utilization. However, we found that operating them as independent control systems creates distinct challenges. First, due to the significant latency required to load large model weights into memory, the scaling system operates at a much lower frequency (minutes or hours) compared to DLB. Since DLB operates on a millisecond timescale, it must be highly reactive and absorb demand spikes using only the currently available resources. Second, the systems rely on overlapping but inconsistent sources of truth: the model management system actuates based on infrastructure metrics like utilization, whereas DLB optimizes latency. Moving forward, we hope to couple these feedback loops more tightly, enabling a unified control plane where routing and scaling decisions inform one another in real time.

\paragraph{The gap between analytical models and practice.}
A recurring lesson was that production traffic and inference servers routinely violate standard theoretical assumptions.
For example, we observed endpoints receiving ``synchronized'' traffic, such as thousands of simultaneous requests at the start of each minute, creating demand spikes that defy standard probabilistic modeling. This simultaneously creates a low average utilization with a high error rate during the burst window. High-resolution visualization played a key role in identifying root causes and providing solutions. DLB's millisecond-timescale control loop helps absorb these bursts much better than the 10-second loop of \gslb.

GenAI workloads exhibit complex behavior due to optimizations such as continuous batching and speculative decoding~\cite{leviathan2023fast}. This complexity is compounded by the rapid pace of model innovation and the heterogeneity of the serving fleet (varying in parameters, architecture, and context window). Our system relies on a robust estimation framework that treats the inference server as a black box. This approach proved critical and allowed us to adapt to the heterogeneity of the fleet without requiring model-specific tuning or deep introspection into the inference servers. As complex, multi-stage inference models are adopted, a promising research direction is to design load balancing algorithms that gain visibility into the inference process, utilizing stage-level telemetry to identify and avoid internal bottlenecks.

\paragraph{Estimation and routing form a closed loop.}
Coupling an online estimator to the policy that generates its data created two subtle control challenges:
{First}, our latency models are fit over requests that have \emph{completed} within a sliding window. This naturally censors in-flight requests and systematically biases the fitted curves toward short-lived queries; the skew is especially pronounced for ``boulder'' endpoints dominated by long prefill or generation sequences.
{Second}, and more fundamentally, the router observes a cell's latency only when routing traffic to it: a cell temporarily deemed slow stops receiving traffic, its model grows stale, and the router cannot directly observe if it has recovered. Because the policy dictates its own training distribution, online tuning degrades into an exploration--exploitation dilemma.
In practice, DLB mitigates this starvation loop via out-of-band background probing, regularizing unconfident models toward a fleet-wide prior (via Bayesian shrinkage), and injecting controlled dithering noise into latency estimates. While effective, these solutions remain empirical heuristics. This operational friction underscores why \rif-based routing serves as such a resilient default: it observes raw physical queue occupancy rather than fitting empirical curves, avoiding the additional feedback loop introduced by latency estimation.

\section{Conclusions}
This paper introduces DLB, a distributed global load balancing system designed
to address the unique resource constraints and service heterogeneity of modern workloads, with a focus on GenAI inference at scale. Our Lyapunov analysis establishes a cumulative guarantee for the continuous-time flow-routing model with finite stepsizes and fixed network latencies. In addition, we detail our experiences deploying DLB within Google’s infrastructure, demonstrating its ability to robustly serve millions of requests per second across thousands of distinct endpoints while validating the effectiveness of state-aware policies in complex production environments.

Several interesting research directions stem from this work. Our algorithms and analysis focus on optimizing mean latency; an interesting research direction is to improve our algorithms to explicitly control tail latencies, which can define user experience.
Another interesting direction is to analyze nonmonotone processing rate functions that can arise from complex inference server behavior such as batching and caching.
Finally, supporting genuinely heterogeneous tenant objectives, such as time-to-first-token alongside end-to-end latency, without reintroducing configuration toil would be highly practically relevant.

\begin{acks}
Building and deploying DLB was a massive collaborative effort, and we are deeply grateful to the many individuals who made it possible. We thank Amlan Chakraborty, Toby Davies and David Eisenstat for their engineering support. Integrating DLB into Google's next-generation machine learning serving stack could not have happened without the engineering contributions of Yanghua Huang, Chung il Lee, Matt Miecnikowski, Ken Franko, Anirudh Nambiar, Zuguang Yang, Brian Zhao and Kuan Zhu alongside the leadership support of Abhijit Karmarkar, Li Lao, Abhishek Rajgarhia and Anitha Vijayakumar. We also thank Hawkwood Glazier, who helped build an earlier prototype that provided many valuable lessons. We are grateful to Google Research leadership—specifically John Anderson, Corinna Cortes, Manu Guere, Yossi Matias—for investing in the DLB team, and Jon Orwant for trusting our instincts and supporting this bottom-up project from its inception. Finally, we thank Carla Bromberg and Tyler Russell for providing critical program management support throughout the project, with special thanks to Tyler for his assistance with data collection and the analysis of production migrations.
\end{acks}

\clearpage
\onecolumn
\appendix

\section{Migration Analysis: Data, Specification, and Results}\label{app:migration_analysis}

Many production endpoints migrated from \gslb to DLB, and we wish to
quantify the treatment effect on mean and tail latencies for these endpoints. %
Our methodology uses an interrupted time series analysis within a panel data framework, incorporating unit fixed effects and controlling for varying demand.

The dataset comprises 68 distinct inference endpoints that migrated to DLB during 2025. This dataset excludes units with very low demand or missing more than half of the observations.

We constructed a panel by tracking these units relative to their specific migration events. For each unit, we collected time-series data on mean, p50, p90, and p95 latencies, alongside the requests per second (RPS).

Let $m_i$ denote the migration date for unit $i$. To capture steady-state behavior, we defined a pre-intervention observation window from $m_i - 10$ days to $m_i - 3$ days, and a post-intervention window from $m_i + 3$ days to $m_i + 10$ days. A six-day buffer period $[m_i-3, m_i+3]$ was excluded from the analysis to prevent contamination from transient effects during the system rollout. The seven-day duration for both observation windows ensures that our comparison accounts for weekly and diurnal traffic patterns inherent to both systems. Metrics were aggregated over 20-minute intervals, yielding a set of $7 \times 24 \times 3 \times 2 = 1008$ observations per unit.

We define $\text{mean\_latency}_{it}$ and $\text{rps}_{it}$ as the mean latency and request rate, respectively, for unit $i$ at relative time step $t \in \{1,\ldots,1008\}$. The binary treatment variable $T_{it} \in \{0,1\}$ indicates the active load balancing regime, taking a value of one if DLB is active at time $t$, and zero if the legacy load balancer is active. The time index $t=1$ is normalized to the start of the pre-intervention window for each unit to align the asynchronous migration timelines. %

We adopt a fixed-effects panel regression model rather than modeling individual units separately. This approach enhances statistical power and improves external validity by pooling information across heterogeneous units. To account for time-invariant unobserved heterogeneity specific to each endpoint (e.g., underlying model architecture and computational complexity), we include unit-specific fixed effects. We specify a log-log regression model to estimate the elasticity of latency with respect to demand and the relative impact of the treatment:
$$
\log(\text{mean\_latency}_{it}) = \beta_0 + \beta_1 T_{it} + \beta_2 \log(\text{rps}_{it}) + \alpha_i + \epsilon_{it}\,,
$$

where $\beta_0$ represents the intercept or average baseline level, $\beta_1$ captures the treatment effect (the causal impact of DLB), $\beta_2$ estimates the elasticity of latency with respect to load, $\alpha_i$ is the unit-specific effect, and $\epsilon_{it}$ represents the idiosyncratic error term. Latencies enter the regression in \emph{milliseconds} (unlike some figures in the main body, which report latencies in seconds), so the intercepts $\beta_0$ in Table~\ref{tab:regression} are on a log-millisecond scale. The inclusion of $\log(\text{rps}_{it})$ as a covariate controls for exogenous demand fluctuations, isolating the impact of load on latency. The log-log specification normalizes differences in scale across endpoints, allowing us to interpret coefficients as percentage changes~\cite{Wooldridge2010}.
We use a weighted least squares regression, weighting observations by their MLA usage in seconds, to minimize the influence of low-volume outliers and weight the regression toward observations representing more MLA use.

The regression results are summarized in Table~\ref{tab:regression}. All estimated coefficients are statistically significant ($p < 0.01$), and the high $R^2$ values indicate that the model explains a substantial proportion of the variance in latency. Across all metrics, the treatment effect $\beta_1$ is negative, confirming that DLB yields a statistically significant reduction in latency compared to the baseline. As anticipated, the coefficient for request rate $\beta_2$ is positive, consistent with queuing theory predictions that increased load correlates with higher end-to-end latency.
To quantify the magnitude of the improvement, we convert the log-linear coefficients back to a percentage lift. The 95\% confidence interval for the relative latency impact is calculated as $\exp(\beta_1 \pm 1.96 \cdot \text{SE}_{\beta_1}) - 1$, where $\text{SE}_{\beta_1}$ is the standard error of the treatment coefficient.

\begin{table*}[]
    \centering
    \resizebox{\textwidth}{!}{%
\begin{tabular}{c|cccc}
        \toprule
        & Mean  & p50  & p90  & p95  \\ \midrule
        $\beta_0$ : baseline level &  6.405$^{***}$ &  6.008$^{***}$ & 6.949$^{***}$ & 7.157$^{***}$ \\
        $\beta_1$ : treatment effect &  -0.144$^{***}$ &  -0.184$^{***}$ & -0.149$^{***}$ & -0.135$^{***}$ \\
        $\beta_2$ : rps effect &  0.166$^{***}$ & 0.134$^{***}$ & 0.145$^{***}$ & 0.174$^{***}$ \\
        $R^2$ &  0.980 & 0.970 & 0.970 & 0.964\\
        Latency Impact (95\% CI) &  $[-13.81\%, -12.97\%]$ & $[-17.26\%, -16.30\%]$ & $[-14.37\%, -13.32\%]$ & $[-13.23\%, -12.03\%]$
        \\\bottomrule
    \end{tabular}%
    }
    \caption{Fixed-effects panel regression results estimating the causal impact of DLB migration. We denote by $^{***}$ p-values less than 0.01. The number of observations is 56,663. The latency impact represents the relative percentage change in latency attributable to DLB after controlling for unit heterogeneity. We adopt sum coding ($\sum_i \alpha_i=0$) so that $\beta_0$ captures the grand mean of all units. Latencies are measured in milliseconds.}
    \label{tab:regression}
\end{table*}

\begin{figure}[t]
    \centering
    \includegraphics[width=0.4\linewidth]{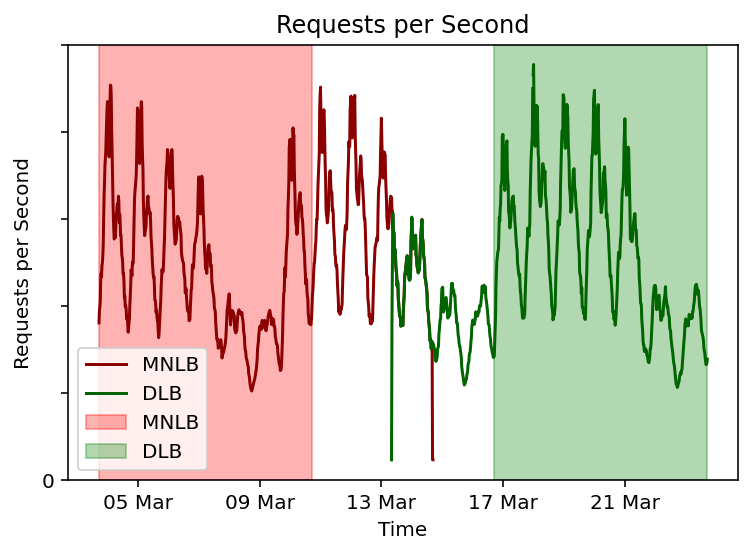}
    \caption{Requests per second for the endpoint shown in Figure~\ref{fig:endpoint-migration} during the migration from
    MNLB to DLB.}
    \label{fig:endpoint-migration-qps}
\end{figure}

\section{Simulation-based Evaluation: Experimental Protocol and Results}\label{app:simulation}

We perform a comprehensive evaluation of DLB via high-fidelity simulations, focusing on three key dimensions: resilience to network and hardware heterogeneity, adaptability to demand bursts and capacity outages, and robustness against latency function estimation errors.

\subsection{Experimental Methodology}
\label{appx:exp-method}

We describe the simulation environment, the scenarios used in our simulations, the policies we evaluate, and the simulation pipeline.

\paragraph{Production-integrated simulation.} To ensure high fidelity, our simulator integrates the production C++ codebase of DLB. This approach ensures that all control plane logic---including probing, latency estimation, and routing updates---behaves exactly as it does in production. We simulate the request arrival process using a time-varying Poisson process, which approximates the volatility observed in production workloads~\cite{robertazzi2000computer}. We model cells as collections of inference servers employing the disaggregated prefill-decode architecture (Figure~\ref{fig:inference}).
\begin{figure}
\centering
\scalebox{0.7}{
\begin{tikzpicture}[
    node distance=1.5cm and 2cm,
    component/.style={rectangle, draw, minimum width=3cm, minimum height=1cm, align=center,rounded corners},
    router/.style={component, fill=blue!20},
    client/.style={component, fill=orange!20},
    layer/.style={draw, rounded corners, inner sep=0.5cm,fill=black!10}
]

\begin{scope}[node distance=1cm and 1cm]
    \node[router] (pq) {Prefill\\Queue};
    \node[client, align=center, below=of pq, xshift=-2cm] (pe1) {Prefill\\Engine};
    \node[client, align=center, below=of pq, xshift=2cm] (pe2) {Prefill\\Engine};
\end{scope}

\begin{scope}[node distance=1cm and 1cm, yshift=-4cm]
    \node[router, align=center, xshift=-2cm] (dq1) {Decode Queue};
    \node[router, align=center, xshift=2cm] (dq2) {Decode Queue};
    \node[client, align=center, below=of dq1] (de1) {Decode\\Engine};
    \node[client, align=center, below=of dq2] (de2) {Decode\\Engine};
\end{scope}

\begin{pgfonlayer}{background}
\node[layer, behind path, fit=(pq) (de1) (de2)] (root_layer_box) {};
\node[above, xshift=-2cm] at (root_layer_box.north) {\textbf{Inference Server}};

\draw[-{Stealth[length=2mm]}] (pq.south) -- (pe1.north);
\draw[-{Stealth[length=2mm]}] (pq.south) -- (pe2.north);

\draw[-{Stealth[length=2mm]}] (pe1.south) -- (dq1.north);
\draw[-{Stealth[length=2mm]}] (pe2.south) -- (dq2.north);

\draw[-{Stealth[length=2mm]}] (pe1.south) -- (dq2.north);
\draw[-{Stealth[length=2mm]}] (pe2.south) -- (dq1.north);

\draw[-{Stealth[length=2mm]}] (dq1.south) -- (de1.north);
\draw[-{Stealth[length=2mm]}] (dq2.south) -- (de2.north);

\node (traffic) [above=of pq,font=\footnotesize] {incoming requests};
\draw[-{Stealth[length=2mm]}] (traffic) -- (pq.north);
\end{pgfonlayer}
\end{tikzpicture}
}
\caption{Architecture of a disaggregated inference server. The server separates processing into prefill and decode stages. Incoming requests first enter a shared prefill queue servicing multiple parallel prefill engines. Once the prefill stage is complete, the request and its KV cache are transferred to a selected decode engine.}
\label{fig:inference}
\end{figure}

We run our simulations on the Borg cluster management system. Crucially, because the request router's control loops (probing, state estimation) operate on periodic cycles off the critical request path, we execute the simulation in real time (wall-clock time). While this constrains the maximum scale of our experiments, it guarantees that the temporal dynamics between asynchronous control updates and request handling are captured with precision.

\paragraph{Scenario generation.} We employ a hierarchical sampling strategy for the cells to ensure that the simulation reflects the co-located nature of cells in production environments. We group real-world Google cell locations by geographical region, select a random subset of regions, and then randomly sample cells within those regions to yield a total of 2 to 10 cells. Network latencies are based on historical inter-cell round-trip times.

In each cell, the number of inference servers is drawn from a Poisson distribution with mean 10. Each inference server has a disaggregated architecture with one prefill engine and two decode engines with four parallel slots per decode engine. The time required to process a token in each stage is randomly drawn for each cell (but then fixed for the scenario) from a log-normal distribution to replicate the heterogeneity observed in production due to different hardware, with a 20\% standard deviation across cells. Each request has a random number of prefill and decode tokens drawn from different shifted Gamma distributions, which approximate actual workloads well.

We calculate the theoretical request processing rate for each cell based on its replica count (i.e., the number of inference servers), its token processing speed, and the mean prefill and decode token counts of a representative workload distribution.  We sum these rates to obtain a total system processing rate, which we then multiply by a target utilization factor of 80\% to determine the total demand. This total demand is allocated among the cells by sampling uniformly from the probability simplex. %

\paragraph{Baselines and Policies.} We evaluate \rif-based routing (\rifrouting), latency-based routing (\latencyrouting), and gradient-based routing (\gradientrouting) using discrete (\discreterouting) or flow routing mechanisms (\flowrouting). As suggested by our theoretical results in Section~\ref{sec:theory}, we choose the stepsize for flow routing algorithms to be inversely proportional to the mean network latency. We compare our algorithms against two benchmarks: weighted random routing (\randomrouting), which routes requests proportionally to the number of replicas in each cell~\cite{hajek1985extremal}, and the centralized legacy load balancing algorithm (\gslbrouting) described in Section~\ref{sec:old-arch}.
Because \randomrouting is integrated into the DLB stack, it naturally adapts to capacity outages by adjusting replica counts and utilizes our error aversion mechanism to avoid unhealthy cells.

\paragraph{Experimental protocol.} For each experiment, we draw 10 scenarios at random using the generative process described above. Each scenario is simulated for a duration of 10 minutes. %
We run 5 trials per scenario to average out stochastic variations in arrival and service processes. For each scenario-policy tuple, we compute mean and p90 end-to-end latencies (network + serving latencies) of successful requests, and error rates from inference servers' queues overflowing. To facilitate comparison across heterogeneous scenarios, we report the \emph{normalized latency}, defined as the ratio of the algorithm's latency to that of the random routing baseline. We present selected results in Figure~\ref{fig:simulation_body}; the remaining results are presented in Appendix~\ref{app:simulation}.

\subsection{Impact of Network and Hardware Heterogeneity}

We use the scenarios described above as baselines and then explore different variations such as higher/lower utilization levels, longer/shorter requests, and different levels of heterogeneity across cells. Figures~\ref{fig:mean_latency}, \ref{fig:p90_latency}, \ref{fig:error} (in the appendix) present mean latency, p90 latency results, and error rates, respectively, for different experiments.

The baseline scenarios have an $80\%$ utilization and serving latencies on the order of hundreds of milliseconds to seconds (see results in Figure~\ref{fig:mean_latency_main}). Most DLB policies show improvements around 20\% relative to random routing, with \textbf{Flow-Grad} performing the best. These results confirm that our stateful algorithms substantially outperform the random routing (\randomrouting) baseline. While \textbf{Flow-Grad} yields the best performance, it is notable that \rif-based routing—which ignores hardware speed differences—achieves competitive results. At higher utilization, the system is forced to fully saturate all available capacity; thus, optimal routing involves distributing load across all cells regardless of their individual speeds. Finally, our policies outperform \randomrouting and \gslbrouting by actively balancing requests to mitigate the local queue buildups that drive tail latency.

Figure~\ref{fig:mean_latency_utilization} shows how performance varies with the utilization level. Naturally, the relative improvements of our stateful policies over \randomrouting and \gslbrouting are smaller when utilization is low, as the queuing delays we avoid are less severe. This regime highlights a critical limitation of \rif-based routing. When the system is lightly loaded, the optimal strategy is to route exclusively to faster or closer cells (depending on query cost). \discreterouting-\rifrouting and \flowrouting-\rifrouting fail to make this distinction, resulting in sub-optimal performance by spilling requests to slower cells despite the availability of faster alternatives. When utilization is high, our policies yield substantially lower latencies and error rates compared with \gslbrouting (and \randomrouting) in the evaluated scenarios.

We also explore scenarios with different processing times in which we divide the token processing time by a factor of $y$ and multiply the request rate by the same factor $y$ to keep utilization constant. Figure~\ref{fig:mean_latency_speed} shows results for $y \in \{1/10, 1/4, 1/2, 1, 2, 4, 10\}$. When requests are short ($y=10$), network latency has an outsized impact and \rif-based routing performs poorly. \gslbrouting, which was optimized for these sand-type requests, performs remarkably well, beaten only by \flowrouting-\gradientrouting. For longer processing times ($y\leq1$), our algorithms improve upon \gslbrouting.

Finally, we consider homogeneous cells with equal hardware (see Figure~\ref{fig:mean_latency_main}). Because serving latencies are similar across cells and network latencies are small, routing according to the number of replicas in each cell provides good performance when utilization is not too high and none of the algorithms performs drastically better than \randomrouting. %

\subsection{Demand Bursts and Capacity Outages}

To study the impact of variability we consider demand bursts and capacity outages. We report mean latency, p90 latency, and error rate in Figures~\ref{fig:mean_latency_main}, \ref{fig:p90_latency_main} and~\ref{fig:error_main} in the appendix, respectively.

For the demand burst experiment, we pick a cell at random and increase its demand so that overall system utilization increases to $95\%$. The disruption begins $3$ minutes into the $10$-minute simulation and lasts for $4$ minutes. Our policies significantly improve upon \randomrouting as they can better distribute requests across cells and avoid queue buildups. The long cycle times of \gslbrouting lead to higher latency compared with our policies. Moreover, \gslbrouting and \randomrouting have higher error rates, caused by queues overflowing in the inference servers (Figure~\ref{fig:error_main}).

We simulate a capacity outage where multiple inference servers go offline. As before, the disruption begins after $3$ minutes and lasts for $4$ minutes. To model this capacity outage, we iteratively reduce the replica counts of randomly selected cells to a single unit until the total system utilization goes over $95\%$. The performance advantage of our policies over \gslbrouting is slightly narrower in this specific experiment because \gslbrouting is capacity-aware.
That said, \randomrouting and \gslbrouting lead to higher tail latencies and error rates.

\begin{figure*}
    \centering
    \subcaptionbox{Baseline and others.\label{fig:mean_latency_main}}{\includegraphics[height=5cm]{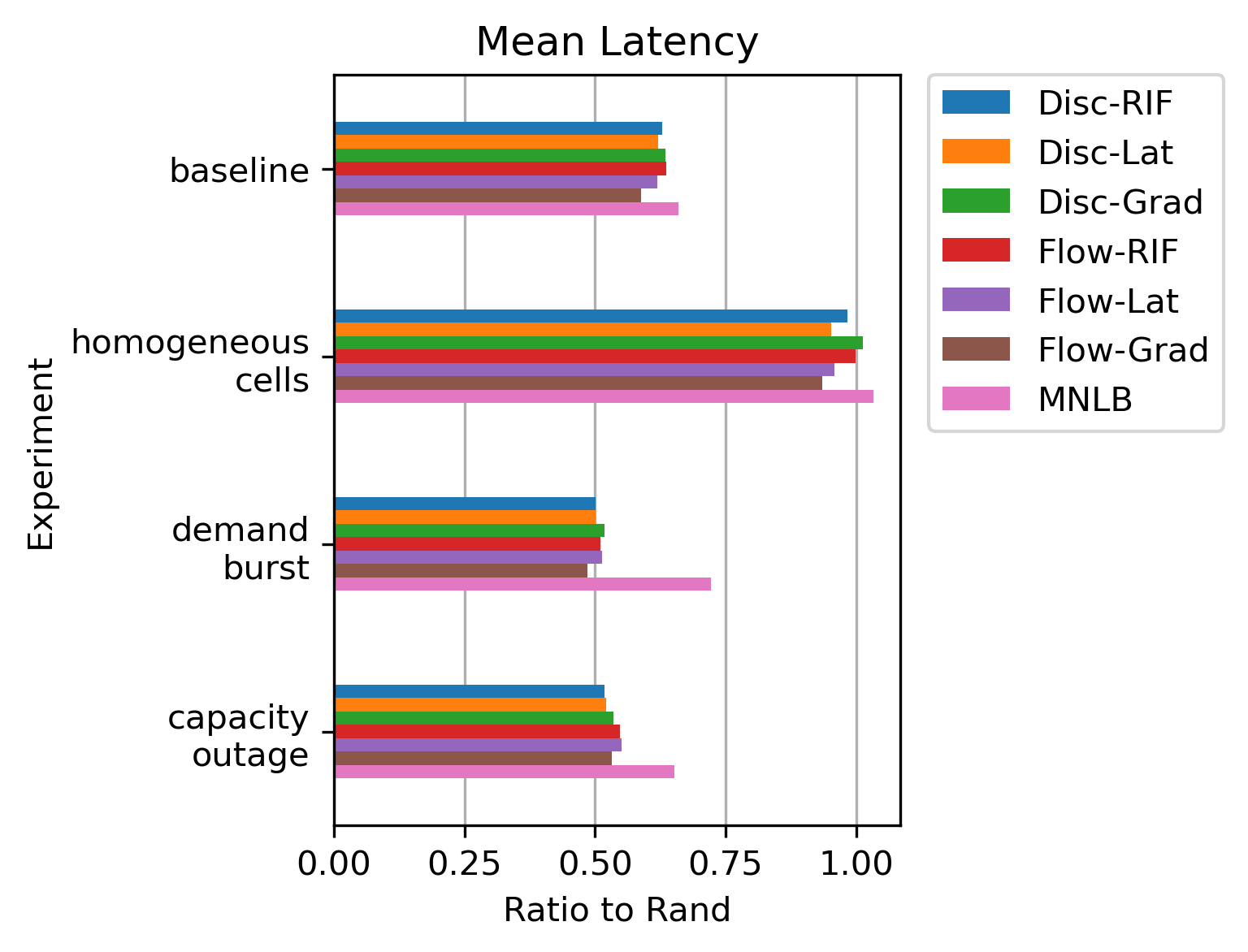}}
    \subcaptionbox{Impact of utilization.\label{fig:mean_latency_utilization}}{\includegraphics[height=5cm]{figures/utilization_disagg_variant_mean_latency.png}}
    \\
    \subcaptionbox{Impact of processing times.\label{fig:mean_latency_speed}}{\includegraphics[height=5cm]{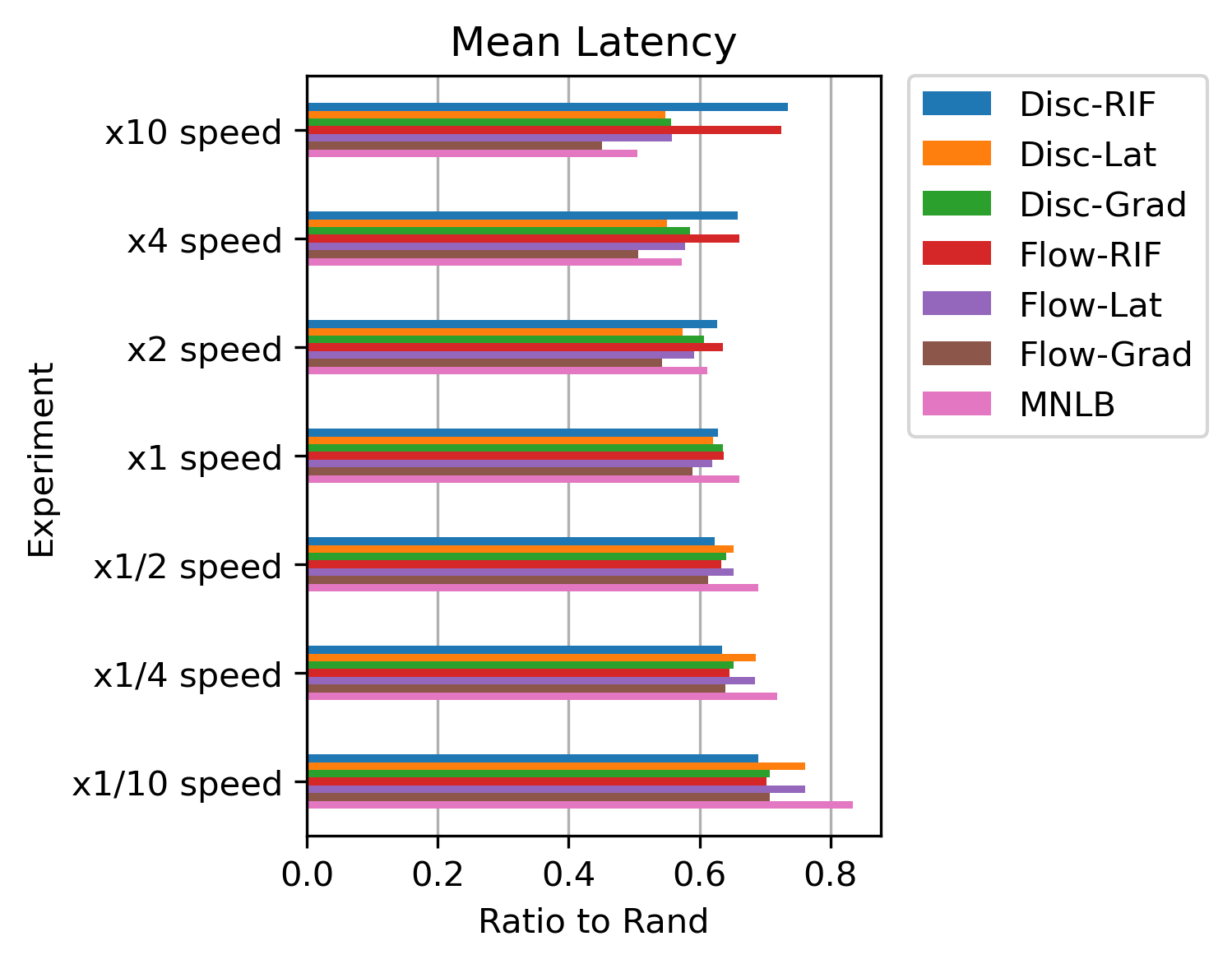}}
    \subcaptionbox{Impact of estimation errors.\label{fig:perturbation_mean_latency}}{\includegraphics[height=5cm]{figures/perturbation_disagg_variant_mean_latency.png}}
    \caption{Mean latency of different routing algorithms compared to random routing across different experiments.}
    \label{fig:mean_latency}
\end{figure*}

\begin{figure}[h]
    \centering
    \subcaptionbox{Baseline and others.\label{fig:p90_latency_main}}{\includegraphics[height=5cm]{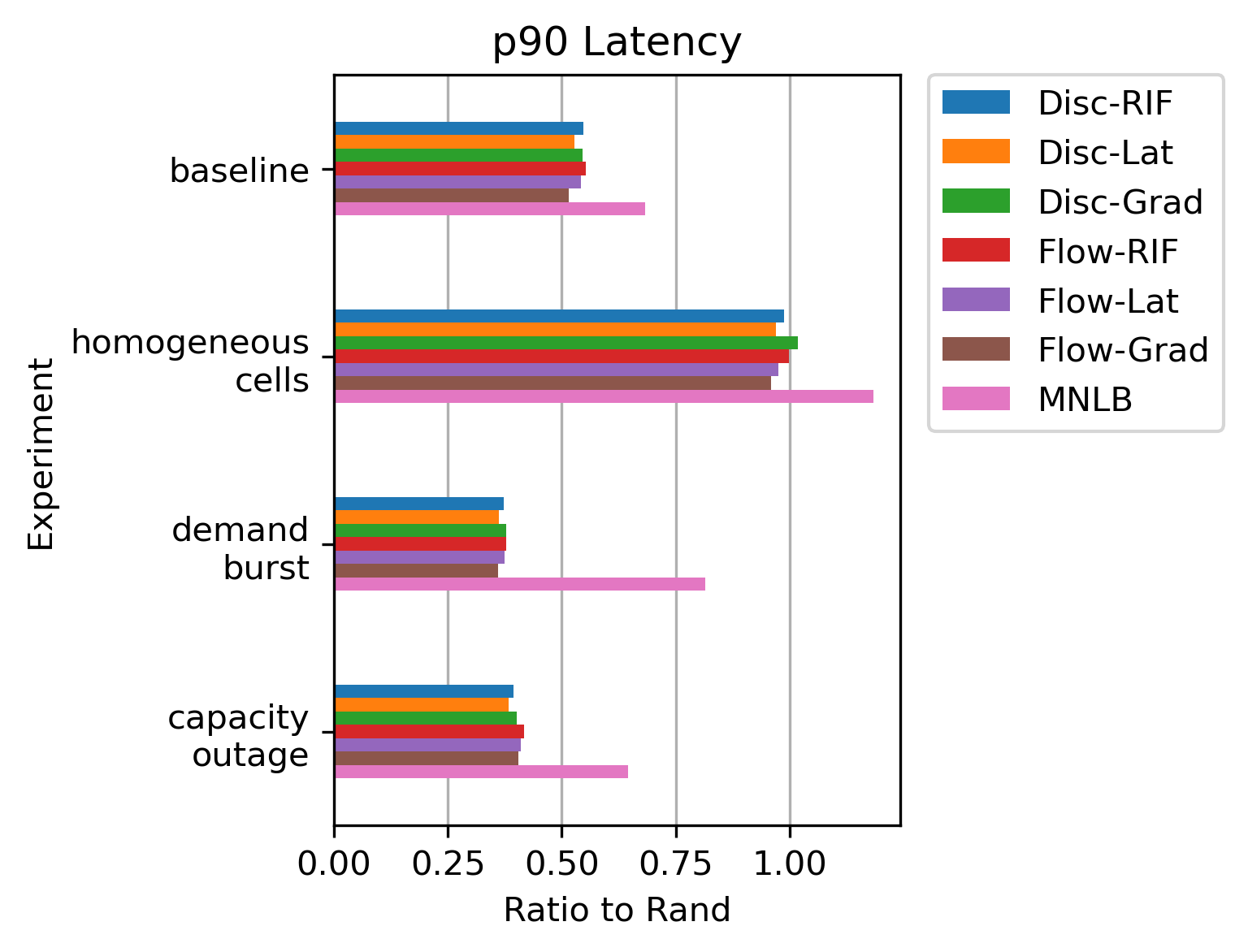}}
    \subcaptionbox{Impact of utilization.\label{fig:p90_latency_utilization}}{\includegraphics[height=5cm]{figures/utilization_disagg_variant_p90_latency.png}}
    \\
    \subcaptionbox{Impact of processing times.\label{fig:p90_latency_speed}}{\includegraphics[height=5cm]{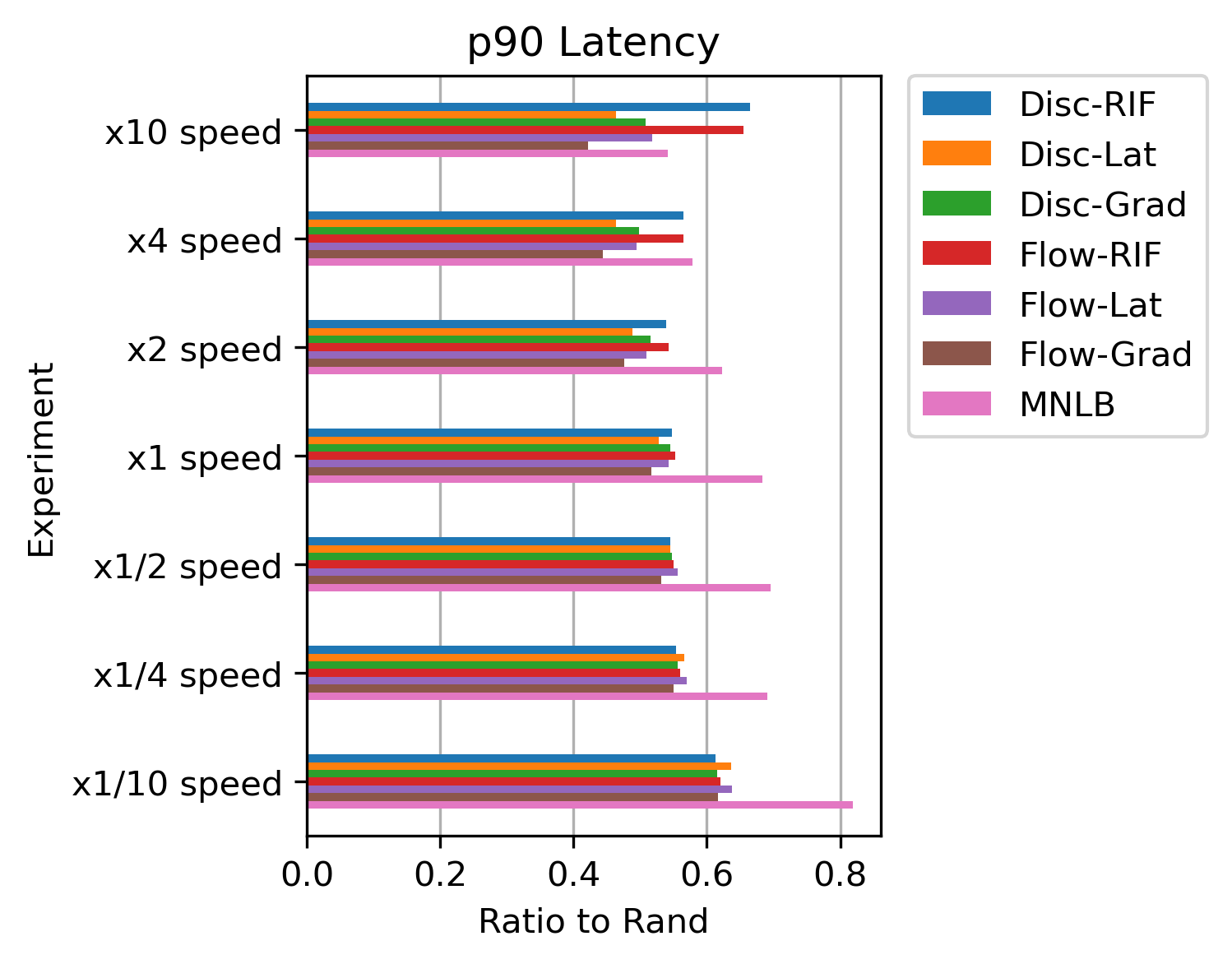}}
    \subcaptionbox{Impact of estimation errors.\label{fig:perturbation_p90_latency}}{\includegraphics[height=5cm]{figures/perturbation_disagg_variant_p90_latency.png}}
    \caption{p90 latency of different routing algorithms compared to random routing across experiments.}
    \label{fig:p90_latency}
\end{figure}

\begin{figure}[h]
    \centering
    \subcaptionbox{Baseline and others.\label{fig:error_main}}{\includegraphics[height=5cm]{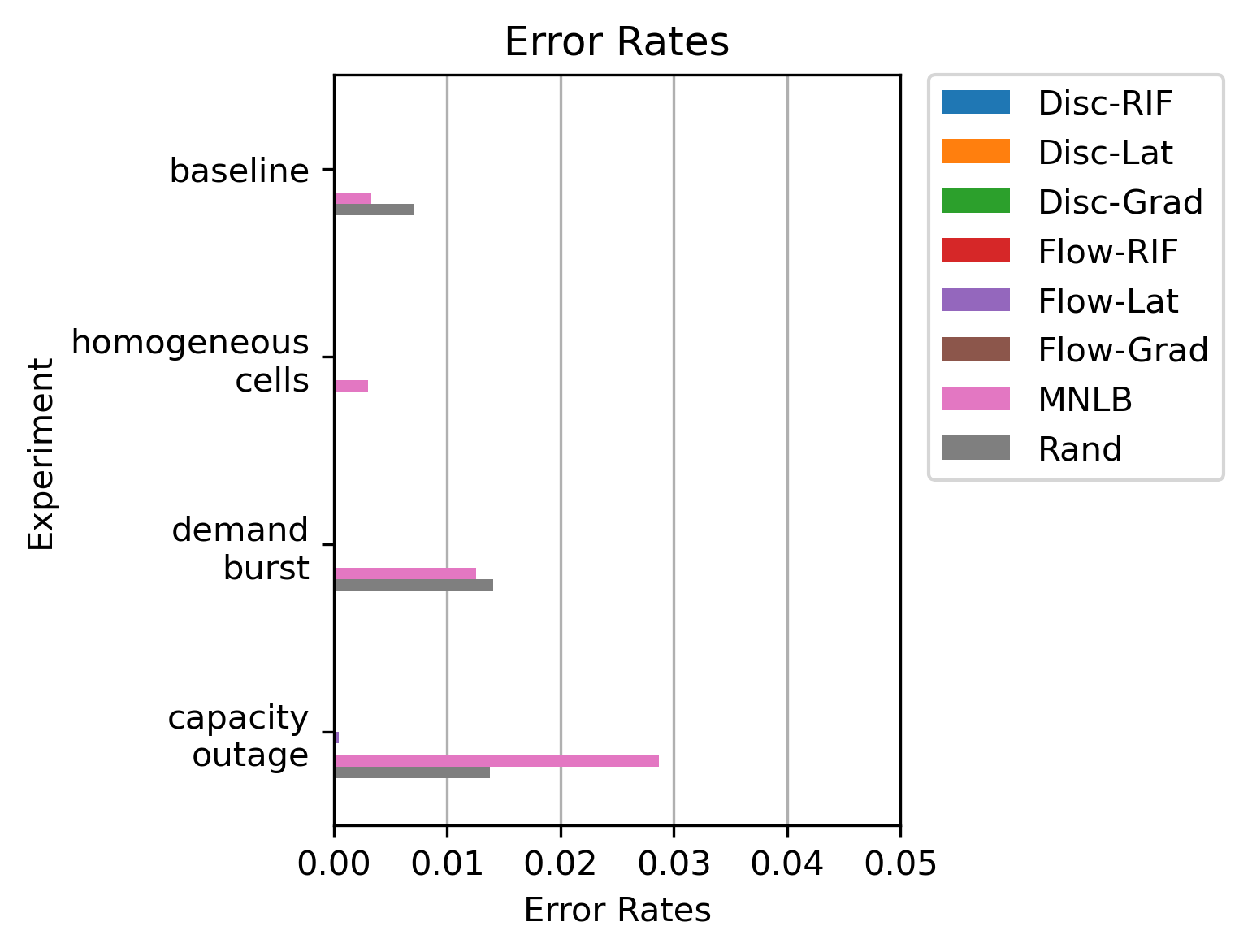}}
    \subcaptionbox{Impact of utilization.\label{fig:error_utilization}}{\includegraphics[height=5cm]{figures/utilization_disagg_variant_error.png}}
    \\
    \subcaptionbox{Impact of processing times.\label{fig:error_speed}}{\includegraphics[height=5cm]{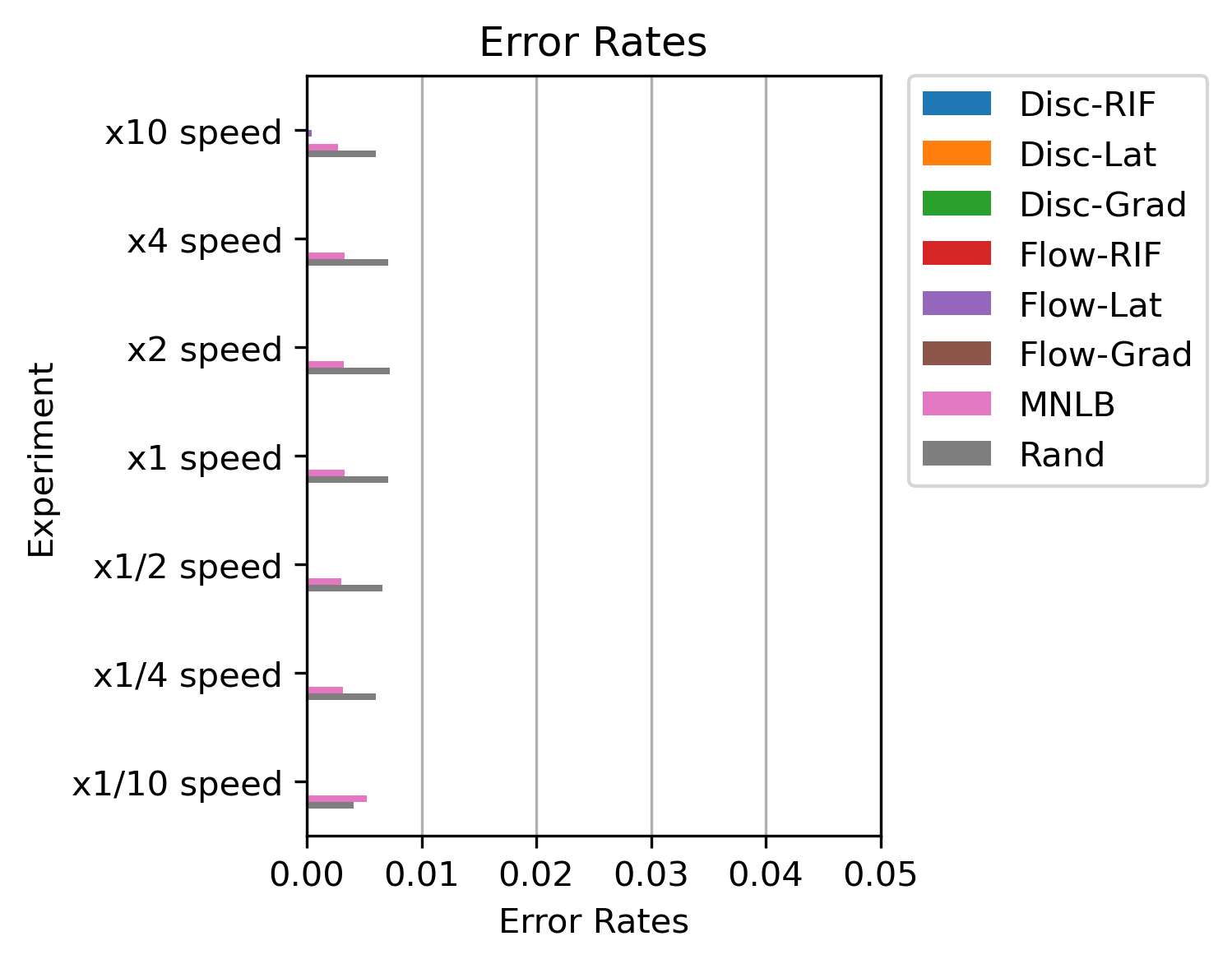}}
    \subcaptionbox{Impact of estimation errors.\label{fig:perturbation_error}}{\includegraphics[height=5cm]{figures/perturbation_disagg_variant_error.png}}
    \caption{Error rates of different routing algorithms across experiments.}
    \label{fig:error}
\end{figure}

\section{Details of Theoretical Analysis} \label{app:theory}
\subsection{Continuous-Time Formulation}\label{app:flow-fluid}

In the fluid model, the workload of cell $j \in \DS$ at time $t > 0$ evolves according to the following delay differential equation:
\begin{equation}\label{eq:dynamics-workloads}
\begin{split}
     \frac {d} {dt} N_j(t) &= \sum_{i\in\DS^-(j)} \lambda_i x_{ij}(t - \tau_{ij}) - \mu_j (N_j(t))\,.
\end{split}
\end{equation}
The first term captures the inflow of requests arriving to the cell, which is determined by the routing decisions and network latencies, and the second term gives the outflow of requests as determined by the processing rate function of the cell.

To fully specify the dynamics of the workloads, we need to determine the evolution of the routing probabilities $x_{ij}(t)$. In the case of discrete routing, the routing probabilities are determined using equation~\eqref{eq:discrete-routing}. In the rest of this section, we describe the continuous-time limit of flow routing as the time between updates converges to zero.  Continuous-time limits of recursive optimization algorithms have received considerable attention in the last decades as they provide tractable approximations using ordinary differential equations (see, e.g., \cite{schropp2000dynamical,su2016differential}). The exposition here follows \cite{balseiro2025load}.

We define $T_{\Delta_i}(\bx_i)$ to be the tangent cone of $\Delta_i$ at $\bx_i$, which is given by
\[
 T_{\Delta_i}(\bx_i) = \left\{\bv \in \mathbb{R}^{|\DS|}: \sum_{j\in \DS} v_j = 0,  v_j \geq 0  \text{ if } x_{ij}  = 0, v_j = 0 \text{ for all } j \not\in \DS^+(i) \right\}\,.
 \]
The tangent cone captures directions along which the cell can update the routing probabilities while maintaining feasibility. The components of a feasible direction $\bv \in T_{\Delta_i}(\bx_i)$ should sum up to zero to satisfy the constraint that probabilities sum up to one. Moreover, for cells whose probabilities are at zero, the corresponding component of the direction should be non-negative to preserve the non-negativity constraint.

Equation \eqref{eq:flow-routing} gives the discrete update of routing probabilities in flow routing when the difference between time updates is $\delta t>0$. Subtracting $\bx_i(t)$ on both sides, dividing by $\delta t$ and taking the limit $\delta t \downarrow 0$ we obtain the differential equation
\begin{equation}\label{eq:flow routing pds}
\frac{d}{dt}\bx_i(t)= \Pi_{T_{\Delta_i}(\bx_i(t))}\!\big(-\eta_i \cdot \bc_i(t)\big)
\quad\text{with}\quad
\bc_i(t) =\operatorname{ext}_0\!\left(
       \big(c_{ij}(N_j(t-\tau_{ij}))\big)_{j\in\DS^+(i)}\right).
\end{equation}
where $\Pi_{T_{\Delta_i}(\bx_i)}(\cdot)$ is Euclidean projection onto the tangent cone, and $\operatorname{ext}_0$ pads the outgoing-arc cost vector with zeros on coordinates $j\notin\DS^+(i)$.
This padding is only a dimensional convention that embeds the cost vector in the same ambient space $\mathbb R^{|\DS|}$ as $\bx_i$; it does not make non-arcs available at zero cost, because the definition of $\Delta_i$ fixes $x_{ij}=0$ for $j\notin\DS^+(i)$ and its tangent cone fixes the corresponding velocities at zero.

 Throughout, we consider feasible solutions of \eqref{eq:dynamics-workloads}--\eqref{eq:flow routing pds}
that are absolutely continuous for $t\geq0$, with $\bN(t)\in\mathbb R_{\geq0}^{n}$ and $\bx_i(t)\in\Delta_i$.  To initialize the workload dynamics, a feasible solution also includes arbitrary Lebesgue-measurable routing values $\bx_i(t)\in\Delta_i$ for $t\in[-\bar\tau,0]$, agreeing with the initial routing vector at zero.

\begin{assumption}[Admissible workload history]
\label{assume:fsr-history-draft}
The prescribed workload history $\bN:[-\bar\tau,0]\to\mathbb R_{\geq0}^{n}$ is absolutely continuous.  With
\begin{equation*}
    v=\max_{j\in\DS}\max\left\{
       \sum_{i\in\DS^-(j)}\lambda_i,\bar\mu_j\right\},
\end{equation*}
we assume $\left|\frac{d}{dt}N_j(t)\right|\leq v$ almost everywhere on $[-\bar\tau,0]$.
\end{assumption}

The workload equation itself implies the same bound for nonnegative times: both its arrival and service terms lie in $[0,v]$, and hence \begin{equation}
    |N_j(t)-N_j(s)|\leq v|t-s|,\quad s,t\geq-\bar\tau.
    \label{eq:fsr-workload-speed-draft}
\end{equation}

The following lemma establishes that stationary points of the dynamics
satisfy the definition of equilibrium points in Definition~\ref{defn:eq}.

\begin{lemma}\label{lemma:equilibrium-points} Suppose Assumptions~\ref{assume:fsr-service-draft} and~\ref{assume:fsr-cost-draft}  hold, and consider either discrete routing or flow routing.
If there exists a point $(\bN^*,\bx^*)$ and a time $t > 0$ such that for all arcs $(i,j) \in \AS$ and times $s\in[t - \tau_{ij}, t]$ we have $N_j(s)= N_j^*$ and $x_{ij}(s)= x_{ij}^*$, then $(\bN^*,\bx^*)$ is an equilibrium point (Definition~\ref{defn:eq}).
\end{lemma}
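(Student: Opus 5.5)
We verify the two conditions of Definition~\ref{defn:eq} separately, using the constancy hypotheses on the window $[t-\tau_{ij},t]$. The first step is flow balance, which comes from the workload equation~\eqref{eq:dynamics-workloads} and holds for both mechanisms. Fix $j\in\DS$. Since $N_j(s)=N_j^*$ on an interval ending at $t$, the derivative is zero there. In particular, the left derivative at $t$ vanishes.

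When $\tau_{ij}=0$ for every $i$, the window degenerates to the single point $\{t\}$. In that case one should read the hypothesis as holding on a small left-neighborhood $[t-\epsilon,t]$. Alternatively, one can use continuity of $\bx_i$, which holds for flow routing since the solution is absolutely continuous. For discrete routing I would simply adopt the neighborhood reading.

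On the window, each inflow term has argument $s-\tau_{ij}\in[t-2\tau_{ij},t-\tau_{ij}]$. This is \emph{not} the window on which $x_{ij}$ is known to be constant. That mismatch is the main obstacle. To handle it, I evaluate the derivative identity at the single time $s=t$, where the arguments are $t-\tau_{ij}$. These do lie in the hypothesis windows, so $x_{ij}(t-\tau_{ij})=x_{ij}^*$. Since $N_j$ is constant on $[t-\tau_{ij},t]$, the left derivative of $N_j$ at $t$ is zero. Absolute continuity gives the identity only almost everywhere, so I would instead integrate over $[t-\epsilon,t]$ and let $\epsilon\downarrow0$. This needs $x_{ij}(\cdot-\tau_{ij})$ to be continuous from the left at $t$, which is again available from the window (a left neighborhood of $t-\tau_{ij}$ lies inside $[t-\tau_{ij}-\dots]$ only if $\tau_{ij}>0$). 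I therefore expect to choose the interpretation carefully. The cleanest route is to integrate~\eqref{eq:dynamics-workloads} over $[t-h,t]$ for small $h\le\min\tau_{ij}$, where both $N_j$ and the shifted $x_{ij}$ are constant, provided the window covers times up to $t-\tau_{ij}$ from above. This gives $0=h\big(\sum_{i\in\DS^-(j)}\lambda_i x_{ij}^*-\mu_j(N_j^*)\big)$, which is~\eqref{eq:equilibrium-flow-balance}.

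The second step is complementary slackness, where the two mechanisms are treated separately. In both cases the costs at time $t$ are $c_{ij}(N_j(t-\tau_{ij}))=c_{ij}(N_j^*)$, because $t-\tau_{ij}$ lies in the window.

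For discrete routing, \eqref{eq:discrete-routing} states that $\bx_i(t)=\bx_i^*$ minimizes the linear function $\bz\mapsto\sum_j c_{ij}(N_j^*)z_j$ over $\Delta_i$. This is exactly~\eqref{eq:equilibrium-vi}.

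For flow routing, $\bx_i$ is constant on $[t-\min_j\tau_{ij},t]$, so $\frac{d}{dt}\bx_i=0$ there. By~\eqref{eq:flow routing pds} this gives $\Pi_{T_{\Delta_i}(\bx_i^*)}(-\eta_i\bc_i)=0$. By Moreau's decomposition, a zero projection onto a closed convex cone means $-\bc_i$ lies in the polar cone of $T_{\Delta_i}(\bx_i^*)$, which is the normal cone $N_{\Delta_i}(\bx_i^*)$. Hence $\langle -\bc_i,\bz_i-\bx_i^*\rangle\le0$ for all $\bz_i\in\Delta_i$. Coordinates $j\notin\DS^+(i)$ contribute nothing, because $z_{ij}=x^*_{ij}=0$ there, so the zero padding of $\bc_i$ is irrelevant. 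This yields~\eqref{eq:equilibrium-vi}.

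Finally, finiteness and feasibility of $(\bN^*,\bx^*)$ hold by assumption.
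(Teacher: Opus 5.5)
Your route is the paper's. Flow balance comes from the workload equation \eqref{eq:dynamics-workloads}. Complementary slackness comes from the $\arg\min$ in \eqref{eq:discrete-routing} for discrete routing, and from $\Pi_{T_{\Delta_i}(\bx_i^*)}(-\eta_i\bc_i)=0$ for flow routing; there your Moreau/normal-cone argument is a self-contained substitute for the paper's appeal to Lemma~4 of \cite{balseiro2025load}. Those complementary-slackness steps are fine, with one refinement. The flow-routing ODE holds only almost everywhere on the window, and at $s<t$ the costs sample $N_j$ just before $t-\tau_{ij}$, so you should let $s\uparrow t$ using continuity of $N_j$ and of the projection.

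The step that fails is the flow-balance argument you settle on. You assert that on $[t-h,t]$ ``both $N_j$ and the shifted $x_{ij}$ are constant.'' But for $s\in[t-h,t]$ the argument $s-\tau_{ij}$ ranges over $[t-\tau_{ij}-h,\,t-\tau_{ij}]$. That interval lies to the left of the hypothesis window $[t-\tau_{ij},t]$ and touches it only at its endpoint, so your proviso is exactly what the hypothesis does not supply. Integrating \eqref{eq:dynamics-workloads} over $[t-h,t]$ therefore only gives $\mu_j(N_j^*)=\frac1h\sum_{i\in\DS^-(j)}\lambda_i\int_{t-\tau_{ij}-h}^{t-\tau_{ij}}x_{ij}(u)\,du$ for small $h>0$. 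To conclude $\mu_j(N_j^*)=\sum_{i\in\DS^-(j)}\lambda_i x_{ij}^*$ you need $x_{ij}$ to be left-continuous at $t-\tau_{ij}$, which is the option you mentioned and then set aside.

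For flow routing this holds when $t>\tau_{ij}$, because $\bx_i$ is absolutely continuous on $[0,\infty)$ while the history on $[-\bar\tau,0]$ is only measurable. It is exactly what the paper uses tacitly when it evaluates \eqref{eq:dynamics-workloads} pointwise at $t$ with $\frac{d}{dt}N_j(t)=0$. For discrete routing it is not available. Under tied costs, arbitrary tie-breaking lets $x_{ij}$ jump exactly at $t-\tau_{ij}$; then $N_j$ can be constant on $[t-\tau_{ij},t]$ with $\mu_j(N_j^*)\neq\sum_{i}\lambda_i x_{ij}^*$. So the hypothesis as literally stated does not yield flow balance. You need to make the extra assumption explicit, for example constancy of $x_{ij}$ on $[t-2\tau_{ij},t]$, or, as the paper implicitly does, that the workload equation holds at $t$ with zero derivative. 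The same applies when every $\tau_{kj}$, $k\in\DS^-(j)$, vanishes and the $N_j$-window collapses to $\{t\}$.
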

\begin{proof}
We argue that stationary solutions are equilibrium points. Fix a stationary solution $(\bN^*,\bx^*)$ and a time $t > 0$ such that for all arcs $(i,j) \in \AS$ and times $s\in[t - \tau_{ij}, t]$ we have $N_j(s)= N_j^*$ and $x_{ij}(s)= x_{ij}^*$. We need to show that $(\bN^*,\bx^*)$ satisfies flow balance and complementary slackness.

For flow balance, plugging the stationary trajectory into~\eqref{eq:dynamics-workloads} yields, for each $j\in\DS$,
\[
0 = \frac{d}{dt}N_j(t)
= \sum_{i\in\DS^-(j)} \lambda_i x_{ij}(t-\tau_{ij}) - \mu_j(N_j(t))
=\sum_{i\in\DS^-(j)} \lambda_i x_{ij}^* - \mu_j(N_j^*).
\]
Hence $\sum_{i\in\DS^-(j)} \lambda_i x_{ij}^* = \mu_j(N_j^*)$ for all $j$, i.e., the flow balance condition in Definition~\ref{defn:eq} is satisfied.

\paragraph{Part 1 (Discrete routing).} We argue that the complementary slackness condition holds for discrete routing. Under discrete routing, for each $i\in\DS$ and each $t$, the routing vector satisfies
\[
\bx_i(t)\in \arg\min_{\bz\in\Delta_i} \sum_{j\in\DS^+(i)} c_{ij}\!\left(N_j(t-\tau_{ij})\right)z_j = \arg\min_{\bz\in\Delta_i} \sum_{j\in\DS^+(i)} c_{ij}(N_j^*)z_j,
\]
where the last equation follows from stationarity. Let $c_i = \min_{j\in\DS^+(i)} c_{ij}(N_{j}^*)$ be the lowest cost observed by cell $i\in\DS$. By definition, we must have $c_{ij}(N_j^*)\ge c_i$ for all $j\in\DS^+(i)$ with equality whenever $x_{ij}^*>0$, implying the complementary-slackness condition in Definition~\ref{defn:eq}. Combining the two parts, $(\bN^*,\bx^*)$ is an equilibrium point in the case of discrete routing.

\paragraph{Part 2 (Flow routing).} We next move to flow routing. At the stationary solution, we have that $d\bx_i/dt(t) = 0$ for every $i\in \DS$ and \eqref{eq:flow routing pds} gives
\[
\boldsymbol{0} = \Pi_{T_{\Delta_i}(\bx_i^*)}\!\big(-\eta_i \bc_i^*\big) \quad \text{with} \quad \bc_i^* := \operatorname{ext}_0\!\left((c_{ij}(N_j^*))_{j\in\DS^+(i)}\right)\,.
\]
By Lemma~4 in \cite{balseiro2025load}, there exists some $c_i$ such that $c_{ij} = c_i$ for all $(i,j) \in \AS$ with $x_{ij}^* > 0$ and $c_{ij} \ge c_i$ otherwise. The result follows.
\end{proof}

\subsection{Proof of Lemma~\ref{lemma:existence-eq-points}}
\label{app:equilibrium-points}

\begin{proof}[Proof of Lemma~\ref{lemma:existence-eq-points}]
Before proving existence and uniqueness, we give a variational representation of finite equilibria.
Because $\mu_j$ is continuous, bounded, and strictly increasing with $\mu_j(0)=0$, it maps $[0,\infty)$ bijectively onto $[0,\bar\mu_j)$.  Its inverse is continuous and strictly increasing on this interval, and $\mu_j^{-1}(s)\to\infty$ as $s\uparrow\bar\mu_j$.  The composition $f_j\circ\mu_j^{-1}$ is therefore continuous and strictly increasing. Define
\[
p_j(z):=
\begin{cases}
\displaystyle\int_0^z f_j\big(\mu_j^{-1}(s)\big)\,ds,
   & 0\leq z\leq\bar\mu_j,\\[2mm]
+\infty, & z>\bar\mu_j,
\end{cases}
\]
which is a lower-semicontinuous convex function, strictly convex on $[0,\bar\mu_j)$ and differentiable with $p_j'(s)=f_j\big(\mu_j^{-1}(s)\big)$, which is the workload component of the cost of a cell carrying inflow $s$.  Define the Beckmann, McGuire, and Winsten
potential
\begin{equation}
\Psi(\bx):=\sum_{j\in\DS}p_j\left(\sum_{i\in\DS^-(j)}\lambda_i x_{ij}\right)
 +\sum_{(i,j)\in\AS}\lambda_i g_{ij}x_{ij}.
\label{eq:closed-potential-program-objective}
\end{equation}
Consider the feasible flows $\mathcal X = \{x_{ij}\ge 0: \sum_{j\in\DS^+(i)}  x_{ij}= 1,x_{ij}=0\ \text{if }(i,j)\notin\AS\}$. The following lemma establishes the equivalence.
\begin{lemma}\label{lemma:varitional-representation}
A finite pair $(\bN^*,\bx^*)$ is an equilibrium if and only if $\bx^*$ solves
\begin{align}
    \min_{\bx \in \mathcal X} \Psi(\bx) \label{eq:potential-program}
\end{align}
and, for every $j$, its induced inflow satisfies $\sum_{i\in\DS^-(j)}\lambda_i x^*_{ij}<\bar\mu_j$ and $N_j^*=\mu_j^{-1}(\sum_{i\in\DS^-(j)}\lambda_i x^*_{ij})$.
\end{lemma}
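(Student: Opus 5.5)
The plan is to treat the potential program \eqref{eq:potential-program} as a convex program over the product of simplices $\mathcal X$ and to show that its first-order (variational-inequality) optimality conditions coincide with complementary slackness \eqref{eq:equilibrium-vi}, after the substitution $N_j=\mu_j^{-1}(\text{inflow}_j)$. Write $y_j(\bx)=\sum_{i\in\DS^-(j)}\lambda_i x_{ij}$ for the inflow to cell $j$. The function $\Psi$ is convex, since each $p_j$ is convex and composed with a linear map, and the remaining term is linear. At any $\bx$ with $y_j(\bx)<\bar\mu_j$ for all $j$, $\Psi$ is differentiable in a neighborhood within the affine hull, with
\[
\frac{\partial\Psi}{\partial x_{ij}}(\bx)=\lambda_i\Big(f_j\big(\mu_j^{-1}(y_j(\bx))\big)+g_{ij}\Big)=\lambda_i\, c_{ij}\big(\mu_j^{-1}(y_j(\bx))\big).
\]

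For the direction ``equilibrium $\Rightarrow$ minimizer'': let $(\bN^*,\bx^*)$ be a finite equilibrium. Flow balance gives $y_j(\bx^*)=\mu_j(N_j^*)<\bar\mu_j$, because $N_j^*$ is finite and $\mu_j$ is strictly increasing. Hence $N_j^*=\mu_j^{-1}(y_j(\bx^*))$. The gradient above is therefore $\lambda_i c_{ij}(N_j^*)$. Multiplying \eqref{eq:equilibrium-vi} by $\lambda_i>0$ and summing over origins $i$ gives $\langle\nabla\Psi(\bx^*),\bz-\bx^*\rangle\ge0$ for all $\bz\in\mathcal X$. Convexity of $\Psi$ then yields $\Psi(\bz)\ge\Psi(\bx^*)+\langle\nabla\Psi(\bx^*),\bz-\bx^*\rangle\ge\Psi(\bx^*)$. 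This holds even when $\Psi(\bz)=+\infty$, and also when some $y_j(\bz)=\bar\mu_j$, since the subgradient inequality for each $p_j$ at an interior point of its domain extends to the boundary by lower semicontinuity and continuity of the integral.

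For the converse: suppose $\bx^*$ minimizes $\Psi$ over $\mathcal X$, every inflow satisfies $y_j(\bx^*)<\bar\mu_j$, and $N_j^*$ is defined as in the statement. Flow balance then holds by construction. Since $\bx^*$ lies in the open region where $\Psi$ is differentiable, and $\mathcal X$ is convex, the standard first-order necessary condition applies: for any $\bz\in\mathcal X$, the segment $\bx^*+s(\bz-\bx^*)$ stays in $\mathcal X$ and, for small $s$, stays in the differentiability region. Therefore $\frac{d}{ds}\Psi\ge0$ at $s=0^+$, i.e. $\langle\nabla\Psi(\bx^*),\bz-\bx^*\rangle\ge0$.

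To isolate origin $i$, I choose $\bz$ equal to $\bx^*$ except in block $i$, where I take an arbitrary $\bz_i\in\Delta_i$. Dividing by $\lambda_i$ then gives exactly \eqref{eq:equilibrium-vi}. Finiteness of the pair follows from the strict inequality $y_j<\bar\mu_j$.

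The main obstacle is the boundary behaviour at $y_j=\bar\mu_j$. There $p_j$ is finite only if the improper integral converges, and the derivative $f_j(\mu_j^{-1}(s))$ blows up as $s\uparrow\bar\mu_j$ because $f_j$ is coercive. In the forward direction I need the subgradient inequality $p_j(z)\ge p_j(y)+p_j'(y)(z-y)$ for $y<\bar\mu_j$ and all $z$. I would prove it directly: for $z\le\bar\mu_j$ it follows from monotonicity of the integrand, and for $z>\bar\mu_j$ it is trivial because $p_j(z)=+\infty$. This avoids any appeal to differentiability at the boundary, and once it is in place the rest is routine.
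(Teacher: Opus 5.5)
Your proposal is correct and follows essentially the same route as the paper. Both arguments combine convexity of $\Psi$ with the observation that, when every inflow is strictly below $\bar\mu_j$, the first-order optimality conditions of \eqref{eq:potential-program} are exactly the complementary slackness conditions \eqref{eq:equilibrium-vi} at $N_j^*=\mu_j^{-1}\big(\sum_{i\in\DS^-(j)}\lambda_i x_{ij}^*\big)$. The differences are cosmetic: you work with the variational-inequality form directly rather than introducing the multipliers $c_i,\alpha_{ij}$, and your hand-proved subgradient inequality for $p_j$ treats the extended-valued boundary $y_j=\bar\mu_j$ more carefully than the paper's bare appeal to KKT sufficiency.
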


\begin{proof}
We first argue that an optimal solution $\bx^*$ with $N^*_j=\mu_j^{-1}(\sum_{i\in \DS^-(j)} \lambda_i x_{ij}^*)$ is an equilibrium. Flow balance in Definition~\ref{defn:eq} follows by construction because $\mu_j(N_j^*) = \sum_{i\in \DS^-(j)} \lambda_i x_{ij}^*$. We next argue that the complementary-slackness condition in Definition~\ref{defn:eq} holds. Because the optimization problem $\min_{\bx \in \mathcal X} \Psi(\bx)$ has a differentiable objective and linear constraints, the KKT conditions are necessary for optimality~\cite{bertsekas1997nonlinear}. The KKT conditions are the following. Introduce Lagrange multipliers $c_i$ for the constraints $\sum_{j\in\DS^+(i)}  \lambda_i x_{ij} = \lambda_i$ (we multiplied the constraint of each cell by its arrival rate) and $\alpha_{ij} \ge 0$ for the non-negativity constraints $x_{ij} \ge 0$. The Lagrangian of the optimization problem \eqref{eq:potential-program} is
\[
L(\bx, \boldsymbol{c}, \boldsymbol{\alpha}) = \Psi(\bx) - \sum_{i \in \DS} c_i \left(\sum_{j \in \DS^+(i)} \lambda_i x_{ij} - \lambda_i \right) - \sum_{(i,j) \in \AS} \alpha_{ij} x_{ij}\,.
\]
The first-order optimality conditions imply that the partial derivative of the Lagrangian with respect to $x_{ij}$ should be zero, which gives that
\[
    \frac{\partial \Psi (\bx^*)}{\partial x_{ij}}  - c_i^* \lambda_i - \alpha_{ij}^* = 0\,,
\]
where
\[
    \frac{\partial \Psi (\bx)}{\partial x_{ij}} = \lambda_i p_j'\left(\sum_{k \in \DS^-(j)} \lambda_k x_{kj}\right) + \lambda_i g_{ij} = \lambda_i f_j\left(\mu_j^{-1}\left(\sum_{k \in \DS^-(j)} \lambda_k x_{kj}\right)\right) + \lambda_i g_{ij}\,.
\]
If $x_{ij}^* > 0$, the KKT complementary slackness condition implies that $\alpha_{ij}^* = 0$, which gives that $\lambda_i f_j(N_j^*) + \lambda_i g_{ij} - c_i^* \lambda_i = 0$. Canceling the arrival rate $\lambda_i > 0$ and re-arranging, we obtain that $f_j(N_j^*) + g_{ij} = c_i^*$. If $x_{ij}^* = 0$, we obtain using a similar argument that $f_j(N_j^*) + g_{ij} \ge c_i^*$ because $\alpha_{ij}^* \ge 0$. The claim follows because the cost function satisfies $c_{ij}(N_j^*) = f_j(N_j^*) + g_{ij}$.

We now argue that for every equilibrium point $(\bN^*, \bx^*)$ the optimal routing probabilities $\bx^*$ are optimal for the optimization problem~\eqref{eq:potential-program}. The potential function $\Psi(\bx)$ is convex because $f_j$ and $\mu_{j}^{-1}$ are increasing, their composition is increasing and the integral is convex, together with the fact that the sum of convex functions is convex. Therefore, the KKT conditions are sufficient for optimality. The claim follows because the KKT conditions of this program are exactly the complementary-slackness conditions in Definition~\ref{defn:eq}, with multipliers $(c_i)_{i\in\DS}$, as we argued before.
\end{proof}

\paragraph{Part 1 (Existence of an equilibrium).} We establish existence by proving that the optimization problem~\eqref{eq:potential-program} has an optimal solution. Let $\bx'$ be the matrix of routing probabilities in Assumption~\ref{assume:stability}. First, note that $\Psi(\bx') < \infty$. This follows because $f_j(\mu_j^{-1}(s)) < \infty$ for all $s \in [0, z_j']$ with $z_j' = \sum_{i\in\DS^-(j)} \lambda_i x_{ij}'$ since $\mu_j^{-1}(z_j') = N_j' < \infty$ and $f_j \circ \mu_j^{-1}$ is monotonically increasing. Because monotonically increasing functions are Riemann integrable, we conclude that $p_j(z_j') < \infty$, proving the claim. By the monotone convergence theorem, we have that $p_j(z_j)$ is lower semicontinuous. Because composition with a continuous function preserves lower semicontinuity, we conclude that $\Psi(\bx)$ is lower semicontinuous. The Weierstrass theorem~\cite{luenberger1997optimization} implies that $\min_{\bx \in \mathcal X} \Psi(\bx)$ admits an optimal solution because the objective is lower semicontinuous, and the feasible set is closed, non-empty (since $\bx' \in \mathcal X)$ and bounded. Any optimal solution must have inflows strictly below capacity, since otherwise shifting slightly toward the strictly feasible routing $\bx'$ would lower the objective by relieving cells with unbounded marginal costs. Lemma~\ref{lemma:varitional-representation} implies existence of an equilibrium point.

\paragraph{Part 2 (Uniqueness of equilibrium workloads).}  We argue that the optimization problem has unique optimal workload levels. If cost functions $\boldsymbol{c}$ are strictly increasing, then $p_j$ is strictly convex since $f_j \circ \mu_j^{-1}$ is strictly increasing. Let $z_j = \sum_{i\in\DS^-(j)} \lambda_i x_{ij}$ be the inflow to cell $j\in\DS$. Strict convexity of $p_j$ for $j \in \DS$ implies that the optimal inflow vector $\boldsymbol{z}^*$ is unique (otherwise, taking the midpoint of two solutions with different $z_j^*$ for some cell $j\in \DS$ would strictly decrease the objective value). Finally, since each $\mu_j$ is strictly increasing, $N_j^*=\mu_j^{-1}(z_j^*)$ is unique for every $j$,
so the equilibrium workload vector $\bN^*$ is unique.
\end{proof}

For the cost functions used by gradient-based routing,
a corollary of Lemma~\ref{lemma:varitional-representation} is that
equilibrium points are optimal solutions to the following centralized static routing problem:
\begin{align}\label{eq:opt}
\min_{\bx_i \in \Delta_i, \bN \in \mathbb R_{\geq 0}^{|\DS|}} &\; \sum_{j \in \DS} N_j + 2 \sum_{(i,j) \in \AS} \lambda_i x_{ij}  \tau_{ij} \tag{OPT}\\
\text{s.t.} & \; \sum_{i \in \DS^-(j)} \lambda_i x_{ij} = \mu_j(N_j) \,, \forall j \in \DS\,.\nonumber
\end{align}
The first term in the objective captures, by Little's Law, the steady state serving latency of all requests in the system while the second term measures the total network latency. The constraint imposes flow balance at the cells, i.e., the inflow of requests should be equal to the outflow processed at a cell.

 \subsection{Proof of Theorem~\ref{thm:fsr-raw-cumulative-draft}}
\label{app:global-performance}
Throughout this proof, we use the shorthand $D_j(u):=D_j(u,N_j^*)$.
For routing probabilities $\bx_i \in \Delta_i$ for cell $i \in \DS$, define the \emph{complementary slackness error}:
\begin{equation*}
    S_i(\bx_i)=\sum_{j\in\DS^+(i)}c_{ij}(N_j^*)
       \big(x_{ij}-x_{ij}^*\big).
\end{equation*}
The equilibrium complementary slackness gives $S_i(\bx_i)\geq0$ because
$\bx_i\in\Delta_i$.  In particular, the current-time gap
$S_i(\bx_i(t))$ is nonnegative.  We will also encounter the mixed-delay
quantity
\begin{equation*}
    S_i^\tau(t)=\sum_{j\in\DS^+(i)}c_{ij}(N_j^*)
       \big(x_{ij}(t-\tau_{ij})-x_{ij}^*\big).
\end{equation*}
Its coordinates can come from different times and therefore may fail to form a
vector in $\Delta_i$; consequently, $S_i^\tau(t)$ can be negative.

We will use two Lyapunov functions in our analysis.  The first involves the
routing probabilities and is given by
\begin{equation}\label{eq:fsr-routing-potential}
V(\bx)=\sum_{i\in\DS}\frac{\lambda_i}{2\eta_i}
       \|\bx_i-\bx_i^*\|_2^2.
\end{equation}
The second involves the cell workloads and is given by
\begin{equation*}
    \Phi_j(u)=\int_{N_j^*}^{u}\big(f_j(y)-f_j(N_j^*)\big)\,dy,
       \qquad \Phi(\bN)=\sum_{j\in\DS}\Phi_j(N_j).
\end{equation*}
The potential $V$ is the traffic-weighted squared routing error, scaled by the
inverse stepsizes, while $\Phi$ accumulates the marginal workload-cost
deviation from equilibrium.  Both are nonnegative and vanish only at their
respective equilibrium states.  %

Let $d_{ij}(t)=x_{ij}(t)-x_{ij}^*$ and define the \emph{cost--routing delay
mismatch}
\begin{equation}
    \Gamma_{ij}(t)=
       c_{ij}(N_j(t))d_{ij}(t-\tau_{ij})
       -c_{ij}(N_j(t-\tau_{ij}))d_{ij}(t),
    \label{eq:fsr-Gamma-draft}
\end{equation}
which compares current cost paired with delayed routing against
delayed cost paired with current routing.  It vanishes when $\tau_{ij}=0$.
The following fundamental lemma bounds the instantaneous drift of the
Lyapunov functions in terms of the workload optimality gap, mixed-delay
complementary slackness error, and cost--routing delay mismatch.

\begin{lemma}[Composite Lyapunov drift]
\label{lem:fsr-composite-drift-draft}
For almost every $t\geq0$,
\begin{equation}
\begin{aligned}
\frac{d}{dt}V(\bx(t))+\frac{d}{dt}\Phi(\bN(t))
\leq-\sum_{j\in\DS}D_j(N_j(t))-\sum_{i\in\DS}\lambda_iS_i^\tau(t)+\sum_{(i,j)\in\AS}\lambda_i\Gamma_{ij}(t),
\end{aligned}
    \label{eq:fsr-composite-drift-draft}
\end{equation}
\end{lemma}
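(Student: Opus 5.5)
We differentiate the two potentials separately and then add them; the drift bound should follow from an algebraic identity that the mismatch term $\Gamma_{ij}$ is designed to absorb. Since $\bx_i(t)$ and $\bN(t)$ are absolutely continuous, both $V(\bx(t))$ and $\Phi(\bN(t))$ are absolutely continuous, so the chain rule holds for almost every $t$.

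\textbf{Routing potential.} By the chain rule, $\frac{d}{dt}V=\sum_i \frac{\lambda_i}{\eta_i}\langle \bx_i-\bx_i^*,\dot\bx_i\rangle$ with $\dot\bx_i=\Pi_{T_{\Delta_i}(\bx_i)}(-\eta_i\bc_i(t))$. We use the standard projection inequality for a closed convex cone $T$: $\langle w,\Pi_T(u)\rangle\le\langle w,u\rangle$ for every $w\in T$. This follows from Moreau's decomposition: $u-\Pi_T(u)$ lies in the polar cone, so $\langle w,u-\Pi_T(u)\rangle\le0$.

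Apply it with $w=\bx_i^*-\bx_i$, which lies in $T_{\Delta_i}(\bx_i)$ because $\bx_i^*\in\Delta_i$. This gives
\[
\frac{d}{dt}V\le-\sum_i\lambda_i\sum_{j\in\DS^+(i)}c_{ij}(N_j(t-\tau_{ij}))\,d_{ij}(t).
\]
The padding by $\operatorname{ext}_0$ is harmless because $d_{ij}=0$ off arcs.

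\textbf{Workload potential.} By the chain rule, $\frac{d}{dt}\Phi=\sum_j(f_j(N_j)-f_j(N_j^*))\dot N_j$. Substitute the dynamics \eqref{eq:dynamics-workloads} and flow balance \eqref{eq:equilibrium-flow-balance} in the form $\dot N_j=\sum_{i}\lambda_i d_{ij}(t-\tau_{ij})-(\mu_j(N_j)-\mu_j(N_j^*))$. This yields
\[
\frac{d}{dt}\Phi=-\sum_jD_j(N_j)+\sum_{(i,j)}\lambda_i\big(f_j(N_j(t))-f_j(N_j^*)\big)d_{ij}(t-\tau_{ij}).
\]

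\textbf{Combining.} Rewrite $f_j(N_j(t))-f_j(N_j^*)$ as $c_{ij}(N_j(t))-c_{ij}(N_j^*)$; the $g_{ij}$ terms cancel. The cross term then becomes
\[
\sum_{(i,j)}\lambda_i c_{ij}(N_j(t))\,d_{ij}(t-\tau_{ij})-\sum_i\lambda_iS_i^\tau(t).
\]
By the definition \eqref{eq:fsr-Gamma-draft}, $c_{ij}(N_j(t))d_{ij}(t-\tau_{ij})=\Gamma_{ij}(t)+c_{ij}(N_j(t-\tau_{ij}))d_{ij}(t)$. The last piece exactly cancels the upper bound on $\frac{d}{dt}V$. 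What remains is $-\sum_jD_j-\sum_i\lambda_iS_i^\tau+\sum\lambda_i\Gamma_{ij}$, which is \eqref{eq:fsr-composite-drift-draft}.

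\textbf{Main obstacle.} The only delicate step is the projection inequality on the tangent cone together with the a.e. differentiability. The identity itself is exact bookkeeping, so the care needed is in confirming that $\bx_i^*-\bx_i(t)$ is a feasible tangent direction (it is, since $\Delta_i$ is a polytope) and that the solution concept makes $\dot\bx_i$ equal the projected vector almost everywhere.
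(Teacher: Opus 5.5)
Your argument is the paper's proof step for step. You bound $\frac{d}{dt}V$ with the tangent-cone projection inequality and add and subtract the equilibrium inflow in $\frac{d}{dt}\Phi$. You then let the $g_{ij}$ terms cancel to produce $S_i^\tau$, and use the definition of $\Gamma_{ij}$ to cancel the delayed-cost term coming from $\frac{d}{dt}V$.

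There is one slip to fix: the cone inequality is stated backwards. For a closed convex cone $T$ and $w\in T$, Moreau's decomposition gives $\langle w,u-\Pi_T(u)\rangle\le 0$, i.e.\ $\langle w,u\rangle\le\langle w,\Pi_T(u)\rangle$. It does not give $\langle w,\Pi_T(u)\rangle\le\langle w,u\rangle$; take $T=[0,\infty)$, $u=-1$, $w=1$, where your version reads $0\le -1$.

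Your own one-line justification actually proves the correct direction. That direction, applied with $w=\bx_i^*-\bx_i(t)$ and $u=-\eta_i\bc_i(t)$, is what yields $\frac{d}{dt}V\le-\sum_{i}\lambda_i\sum_{j\in\DS^+(i)}c_{ij}(N_j(t-\tau_{ij}))\,d_{ij}(t)$. The inequality as you wrote it would give the reverse bound. The correct form is exactly the paper's contraction property $(\bx_i-\bx_i')^\top\bv\le(\bx_i-\bx_i')^\top\bz_i$ in Lemma~\ref{lem:fsr-projection-draft}. With that sign corrected, the rest of your bookkeeping is exact and the proof is complete.
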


\begin{proof}
Along the absolutely continuous feasible solution, continuity of each $f_j$
makes $\Phi$ continuously differentiable.  Hence the chain rule applies almost
everywhere.  Fix a time $t\geq0$ at which the derivatives exist and the
dynamics hold.
Differentiating the routing-probability Lyapunov function gives
\begin{align}
\frac{d}{dt}V(\bx(t))
&=\sum_{i\in\DS}\frac{\lambda_i}{\eta_i}
   \sum_{j\in\DS^+(i)}
   \big(x_{ij}(t)-x_{ij}^*\big)\frac{d}{dt}x_{ij}(t) \nonumber\\
&\leq-\sum_{i\in\DS}\lambda_i
   \sum_{j\in\DS^+(i)}
   \big(x_{ij}(t)-x_{ij}^*\big)
   c_{ij}(N_j(t-\tau_{ij})) \nonumber\\
&=-\sum_{(i,j)\in\AS}\lambda_i
   c_{ij}(N_j(t-\tau_{ij}))d_{ij}(t).
    \label{eq:fsr-V-drift-draft}
\end{align}
For the inequality, apply the first projection inequality in
Lemma~\ref{lem:fsr-projection-draft} with
$\bz_i=-\eta_i\bc_i(t)$ and
$\bv_i=\frac{d}{dt}\bx_i(t)$. The equality follows from the definition of
$d_{ij}$. %

We next differentiate the workload Lyapunov function.  The chain rule and \eqref{eq:dynamics-workloads} give
\begin{align}
\frac{d}{dt}\Phi(\bN(t))
&=\sum_{j\in\DS}
   \big(f_j(N_j(t))-f_j(N_j^*)\big)\frac{d}{dt}N_j(t) \nonumber\\
&=\sum_{j\in\DS}
   \big(f_j(N_j(t))-f_j(N_j^*)\big)
\cdot\left(\sum_{i\in\DS^-(j)}\lambda_i
      x_{ij}(t-\tau_{ij})-\mu_j(N_j(t))\right) \nonumber\\
&=\sum_{j\in\DS}
   \big(f_j(N_j(t))-f_j(N_j^*)\big)
\cdot\big(\mu_j(N_j^*)-\mu_j(N_j(t))\big) \nonumber\\
&\quad+\sum_{j\in\DS}
   \big(f_j(N_j(t))-f_j(N_j^*)\big)
\cdot\sum_{i\in\DS^-(j)}\lambda_i
   \big(x_{ij}(t-\tau_{ij})-x_{ij}^*\big).
    \label{eq:fsr-P-drift-first-draft}
\end{align}
The last equality adds and subtracts the equilibrium inflow
$\sum_{i\in\DS^-(j)}\lambda_i x_{ij}^*=\mu_j(N_j^*)$ from
\eqref{eq:equilibrium-flow-balance}.  By the definition of $D_j$, the first
sum on the right-hand side is $-\sum_{j\in\DS}D_j(N_j(t))$; see
\eqref{eq:semimetric-workload}.

For the remaining sum, exchange the order of summation and use the definition of $d_{ij}$ to obtain
\begin{align*}
&\sum_{j\in\DS}
   \big(f_j(N_j(t))-f_j(N_j^*)\big)
   \sum_{i\in\DS^-(j)}\lambda_i
   \big(x_{ij}(t-\tau_{ij})-x_{ij}^*\big) \nonumber\\
&=\sum_{i\in\DS}\lambda_i\sum_{j\in\DS^+(i)}
   \big(f_j(N_j(t))-f_j(N_j^*)\big)
   d_{ij}(t-\tau_{ij}) \nonumber\\
&=\sum_{i\in\DS}\lambda_i\sum_{j\in\DS^+(i)}
   \big(f_j(N_j(t))+g_{ij}\big)d_{ij}(t-\tau_{ij}) \nonumber\\
&\quad-\sum_{i\in\DS}\lambda_i\sum_{j\in\DS^+(i)}
   \big(f_j(N_j^*)+g_{ij}\big)d_{ij}(t-\tau_{ij}) \nonumber\\
&=\sum_{(i,j)\in\AS}\lambda_i
   c_{ij}(N_j(t))d_{ij}(t-\tau_{ij})
   -\sum_{i\in\DS}\lambda_i S_i^\tau(t).
\end{align*}
The second equality adds and subtracts $g_{ij}$, and the last equality uses
the separable-cost identity~\eqref{eq:fsr-raw-cost-draft} together with the
definition of $S_i^\tau(t)$.  Substituting this identity into
\eqref{eq:fsr-P-drift-first-draft} yields
\begin{equation}
\begin{aligned}
\frac{d}{dt}\Phi(\bN(t))
&=-\sum_{j\in\DS}D_j(N_j(t))
-\sum_{i\in\DS}\lambda_iS_i^\tau(t)
+\sum_{(i,j)\in\AS}\lambda_i
   c_{ij}(N_j(t))d_{ij}(t-\tau_{ij}).
\end{aligned}
    \label{eq:fsr-P-drift-final-draft}
\end{equation}

Finally, add~\eqref{eq:fsr-V-drift-draft} and
\eqref{eq:fsr-P-drift-final-draft} and use the definition~\eqref{eq:fsr-Gamma-draft} to obtain the result.
\end{proof}

\paragraph{\textbf{Roadmap}.}
We begin by providing a summary of the main steps of our analysis.  If network
delays were zero, then $S_i^\tau(t)=S_i(\bx_i(t))\geq0$ and
$\Gamma_{ij}(t)=0$.  Lemma~\ref{lem:fsr-composite-drift-draft} would then
imply that the drift of the Lyapunov functions is nonpositive.  Because of
heterogeneous delays, however, the mixed-delay routing gap might not be
nonnegative and the cost--routing mismatch might not vanish.  The main
challenge is to control these two terms after integration.

The main steps of our analysis are as follows:
\begin{enumerate}
\item In Subsection~\ref{sec:fsr-routing-gap-control-draft}, we control the cumulative mixed-delay routing gap $\mathcal S_T^\tau$.  Although $\big(x_{ij}(t-\tau_{ij})\big)_{j\in\DS^+(i)}$ might not lie in $\Delta_i$, shifting each coordinate in time replaces its integral by the nonnegative current-time complementary slackness error.  After the shift, only short
intervals near $0$ and $T$, each of length at most $\bar\tau$, remain.  Their total contribution is bounded by a constant independent of $T$ and the stepsizes.

\item In Subsection~\ref{sec:fsr-mismatch-control-draft}, we control the cost--routing delay mismatch $\Gamma_{ij}(t)$. The basic idea is to use the routing-speed bound from Lemma~\ref{lem:fsr-projection-draft}: when stepsizes are small, decisions do not change much during a round-trip delay window, so the two cross terms approximately cancel after integration.  Because costs can be unbounded, we first split each cost exactly into a bounded component and an unbounded tail.  For the bounded component, the projected dynamics bound changes in routing probabilities by the stepsize times a cost upper bound.  With a sufficiently small stepsize, we can bound the bounded-component delay mismatch by a small fraction of the cumulative workload optimality gap. For the unbounded tail, \eqref{eq:fsr-workload-speed-draft} ensures that the workloads differ by at most $2v\bar\tau$.  Additive slow variation (Assumption~\ref{assume:fsr-cost-draft}) makes the tail mismatch small relative to the workload optimality gap, while the contributions from the short intervals near $0$ and $T$ are bounded separately using the prescribed workload history and the workload Lyapunov function at time $T$, $\Phi(\bN(T))$.
\end{enumerate}

We conclude by integrating
\eqref{eq:fsr-composite-drift-draft}. Therefore, although the composite potential
is not guaranteed to decrease at every instant, its cumulative drift yields
the desired performance bound.
After division by $T$, the initial routing potential contributes
$1/(T\underline\eta)$, the bounded-component time shift contributes $\bar\eta$,
and the remaining contributions from the workload history and the beginning
and end of the time horizon contribute $1/T$.

\subsubsection{The Mixed-Delay Routing Gap}
\label{sec:fsr-routing-gap-control-draft}

Introduce the cumulative workload gap, mixed-delay routing gap, and
current-time routing gap
\begin{equation*}
\begin{aligned}
\mathcal D_T
   &=\sum_{j\in\DS}\int_0^T D_j(N_j(t))\,dt,\\
\mathcal S_T^\tau
   &=\sum_{i\in\DS}\lambda_i\int_0^T S_i^\tau(t)\,dt,\\
\mathcal S_T
   &=\sum_{i\in\DS}\lambda_i\int_0^T S_i(\bx_i(t))\,dt.
\end{aligned}
\end{equation*}
We have
\begin{equation}
    \mathcal D_T\geq0,
       \qquad \mathcal S_T\geq0.
    \label{eq:fsr-cumulative-gaps-draft}
\end{equation}
The integrand defining $\mathcal S_T^\tau$ can be negative: heterogeneous
delays can make
$(x_{ij}(t-\tau_{ij}))_{j\in\DS^+(i)}$ fail to lie in $\Delta_i$.  The next
lemma shows that the cumulative delayed gap can nevertheless be replaced by
the nonnegative current-time gap $\mathcal S_T$, up to an additive constant arising from the
short intervals near $0$ and $T$.

\begin{lemma}[Mixed-delay routing-gap comparison]
\label{lem:fsr-mixed-slackness-draft}
For every $T\geq\bar\tau$,
\begin{equation*}
\begin{aligned}
&\mathcal S_T^\tau
\geq\mathcal S_T
   -\bar\tau\sum_{i\in\DS}\lambda_i
      \max_{j\in\DS^+(i)}c_{ij}(N_j^*).
\end{aligned}
\end{equation*}
\end{lemma}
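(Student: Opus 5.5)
The plan is a coordinatewise time shift. Write $c_{ij}^*=c_{ij}(N_j^*)$ and expand $\mathcal S_T^\tau-\mathcal S_T$ arc by arc:
\[
\mathcal S_T^\tau-\mathcal S_T=\sum_{(i,j)\in\AS}\lambda_i c_{ij}^*\left(\int_0^T d_{ij}(t-\tau_{ij})\,dt-\int_0^T d_{ij}(t)\,dt\right).
\]
The equilibrium-weighted constant terms $c_{ij}^*x_{ij}^*$ appear in both integrals and cancel. For each arc, the substitution $s=t-\tau_{ij}$ turns the first integral into $\int_{-\tau_{ij}}^{T-\tau_{ij}}d_{ij}(s)\,ds$. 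Because $T\geq\bar\tau\geq\tau_{ij}$, the difference of the two integrals reduces to boundary pieces:
\[
\int_{-\tau_{ij}}^{0}d_{ij}(s)\,ds-\int_{T-\tau_{ij}}^{T}d_{ij}(s)\,ds
=\int_{-\tau_{ij}}^{0}x_{ij}(s)\,ds-\int_{T-\tau_{ij}}^{T}x_{ij}(s)\,ds .
\]
The $x_{ij}^*$ contributions cancel here because both windows have the same length $\tau_{ij}$. The history values on $[-\bar\tau,0]$ are measurable and lie in $\Delta_i$, so these integrals are well defined.

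Next, bound the boundary terms. The first term is nonnegative because $x_{ij}\geq 0$ and $c_{ij}^*\geq0$, so we may drop it. The second term satisfies
\[
\lambda_i c_{ij}^*\int_{T-\tau_{ij}}^T x_{ij}(s)\,ds\le \lambda_i c_{ij}^*\int_{T-\bar\tau}^T x_{ij}(s)\,ds ,
\]
using $x_{ij}\geq0$ and $\tau_{ij}\le\bar\tau$. Summing over $j\in\DS^+(i)$ and bounding $c_{ij}^*\le\max_{j}c_{ij}^*$ gives
\[
\lambda_i\max_{j\in\DS^+(i)} c_{ij}^*\int_{T-\bar\tau}^T\sum_j x_{ij}(s)\,ds=\lambda_i\bar\tau\max_{j\in\DS^+(i)}c_{ij}^* ,
\]
since $\bx_i(s)\in\Delta_i$ and so $\sum_j x_{ij}(s)=1$. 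Summing over $i$ then yields the claimed inequality.

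The only delicate step is recognizing that the heterogeneous per-arc shifts must be handled arc by arc before anything is summed. The simplex structure can be invoked only after every window has been enlarged to the common window $[T-\bar\tau,T]$, which is legitimate because the integrands are nonnegative. This enlargement is also why the argument needs only $T\geq\bar\tau$. I expect no real obstacle beyond this bookkeeping.
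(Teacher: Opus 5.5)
Your proof is correct and follows the paper's argument essentially verbatim. Both use a per-arc change of variables to reduce the difference to two boundary windows, drop the history window by nonnegativity (you note explicitly that $c_{ij}(N_j^*)\geq 0$, which the paper leaves implicit), and enlarge each terminal window to $[T-\bar\tau,T]$ before applying $\sum_{j\in\DS^+(i)}x_{ij}(s)=1$.
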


\begin{proof}
For one origin cell, linearity and a change of variables give
\begin{equation*}
\begin{aligned}
&\int_0^T\sum_{j\in\DS^+(i)}c_{ij}(N_j^*)
   \big(x_{ij}(t-\tau_{ij})-x_{ij}(t)\big)\,dt\\
&=\sum_{j\in\DS^+(i)}c_{ij}(N_j^*)
   \left(\int_{-\tau_{ij}}^0x_{ij}(s)\,ds
        -\int_{T-\tau_{ij}}^Tx_{ij}(s)\,ds\right).
\end{aligned}
\end{equation*}
The first integral is nonnegative.  Since the current routing vector lies in
the simplex, the second integral is at most
\begin{equation*}
    \max_{j\in\DS^+(i)}c_{ij}(N_j^*)
       \int_{T-\bar\tau}^{T}
       \sum_{j\in\DS^+(i)}x_{ij}(s)\,ds
       =\bar\tau\max_{j\in\DS^+(i)}c_{ij}(N_j^*).
\end{equation*}
Multiplying by $\lambda_i$ and summing over $i$ proves the result.\end{proof}

\subsubsection{The Cost--Routing Delay Mismatch}
\label{sec:fsr-mismatch-control-draft}

We now control the cumulative contribution of the cost--routing mismatch terms
$\Gamma_{ij}(t)$.  For finite thresholds $\bar N_j>N_j^*$ to be selected, we
begin with the exact cost decomposition $c_{ij}(u)=b_{ij}(u)+e_j(u)$, where
\begin{equation*}
\begin{aligned}
    b_{ij}(u)&=g_{ij}+\min\{f_j(u),f_j(\bar N_j+2v\bar\tau)\},\\
    e_j(u)&=\big[f_j(u)-f_j(\bar N_j+2v\bar\tau)\big]_+.
\end{aligned}
\end{equation*}
The bounded component $b_{ij}$ is at most
$f_j(\bar N_j+2v\bar\tau)+g_{ij}$.  The unbounded tail $e_j$ is shared by all
arcs entering cell $j$. The extra term $2v\bar{\tau}$ appears because the time-shift argument compares $N_j(t)$ and $N_j(t-2\tau_{ij})$, whose distance is at most $2v\tau_{ij}\leq 2v\bar{\tau}$ by \eqref{eq:fsr-workload-speed-draft}.

The proof relies on two estimates used to control the two parts of this
decomposition.  The first provides a global envelope for the raw costs.

\begin{lemma}[Global cost--gap bound]
\label{lem:fsr-global-cost-gap-draft}
There exist finite thresholds $(\bar N_j)_{j\in\DS}$, with
$\bar N_j>N_j^*$ for every $j$, such that, for every workload $u\geq0$ and
every arc $(i,j)\in\AS$,
\begin{equation}
    c_{ij}(u)\leq
       \max_{(k,\ell)\in\AS}
          \big(f_\ell(\bar N_\ell)+g_{k\ell}\big)
       +
       \frac{2D_j(u)}{\bar\mu_j-\mu_j(N_j^*)}.
    \label{eq:fsr-global-cost-gap-draft}
\end{equation}
\end{lemma}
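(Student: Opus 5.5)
Fix, for each cell $j$, a threshold $\bar N_j>N_j^*$ chosen so that $\mu_j(\bar N_j)\ge \tfrac12\big(\bar\mu_j+\mu_j(N_j^*)\big)$. Such a threshold exists because $\mu_j$ is continuous and strictly increasing with supremum $\bar\mu_j$, and $\mu_j(N_j^*)<\bar\mu_j$ by Lemma~\ref{lemma:varitional-representation}. Set $M:=\max_{(k,\ell)\in\AS}\big(f_\ell(\bar N_\ell)+g_{k\ell}\big)$. We then bound $c_{ij}(u)=f_j(u)+g_{ij}$ in two cases, according to whether $u$ lies below or above $\bar N_j$.

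\emph{Case $u\le\bar N_j$.} Since $f_j$ is increasing, $c_{ij}(u)\le f_j(\bar N_j)+g_{ij}\le M$. Because $D_j(u)\ge0$, the right-hand side of \eqref{eq:fsr-global-cost-gap-draft} is at least $M$, so the bound holds trivially.

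\emph{Case $u>\bar N_j$.} Here both factors of $D_j(u)=\big(f_j(u)-f_j(N_j^*)\big)\big(\mu_j(u)-\mu_j(N_j^*)\big)$ are positive. By the choice of $\bar N_j$,
\[
\mu_j(u)-\mu_j(N_j^*)\;\ge\;\mu_j(\bar N_j)-\mu_j(N_j^*)\;\ge\;\tfrac12\big(\bar\mu_j-\mu_j(N_j^*)\big).
\]
It follows that $f_j(u)-f_j(N_j^*)\le 2D_j(u)/\big(\bar\mu_j-\mu_j(N_j^*)\big)$. Therefore
\[
c_{ij}(u)=\big(f_j(N_j^*)+g_{ij}\big)+\big(f_j(u)-f_j(N_j^*)\big)\le f_j(\bar N_j)+g_{ij}+\frac{2D_j(u)}{\bar\mu_j-\mu_j(N_j^*)},
\]
using $f_j(N_j^*)<f_j(\bar N_j)$. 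The first two terms are at most $M$, which gives \eqref{eq:fsr-global-cost-gap-draft}.

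The only delicate step is the choice of threshold. It must make $\mu_j(u)-\mu_j(N_j^*)$ uniformly comparable to $\bar\mu_j-\mu_j(N_j^*)$ for all $u\ge\bar N_j$, which is where the factor $2$ in the statement comes from. This requires the strict inequality $\mu_j(N_j^*)<\bar\mu_j$ at equilibrium, which Lemma~\ref{lemma:varitional-representation} supplies, together with continuity of $\mu_j$. Neither coercivity nor slow variation of $f_j$ is needed. The thresholds also depend only on the equilibrium, so they may later be taken larger if the subsequent argument requires it.
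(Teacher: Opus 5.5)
Your proof is correct, but its key step differs from the paper's. The paper uses the same two-case split. For $u\geq\bar N_j$, however, it bounds all of $f_j(u)$ by $2D_j(u)/(\bar\mu_j-\mu_j(N_j^*))$; this is the first property in Proposition~\ref{prop:fsr-threshold-existence-draft}, obtained from the limit $D_j(u)/f_j(u)\to\bar\mu_j-\mu_j(N_j^*)$, which uses coercivity to make $f_j(N_j^*)/f_j(u)$ vanish. Only $g_{ij}$ is then absorbed into the maximum. You instead bound just the excess $f_j(u)-f_j(N_j^*)$, via the explicit lower bound $\mu_j(u)-\mu_j(N_j^*)\geq\frac12(\bar\mu_j-\mu_j(N_j^*))$, and absorb the constant $f_j(N_j^*)+g_{ij}\leq f_j(\bar N_j)+g_{ij}$ into the maximum. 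Your version is more elementary: it needs no limits and no coercivity, and the threshold is explicit, depending only on $\mu_j$ and $N_j^*$. The strict inequality $\mu_j(N_j^*)<\bar\mu_j$ already follows from strict monotonicity and finiteness of $N_j^*$, so you need not invoke Lemma~\ref{lemma:varitional-representation}. What the paper's route buys is reuse. Its multiplicative estimate $f_j(u')\leq 2D_j(u')/(\bar\mu_j-\mu_j(N_j^*))$ for $u'>\bar N_j$ is exactly the last inequality in the proof of Lemma~\ref{lem:fsr-tail-mismatch-draft}, where the tail increment, a small multiple of $f_j(u')$, must be charged to $D_j(u')$. Your additive estimate alone does not give that step, since $f_j(N_j^*)$ would also have to be dominated. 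Because your threshold condition is upward-closed, taking for each cell the larger of your threshold and the paper's makes the two compatible, as you anticipate.
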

Lemma~\ref{lem:fsr-global-cost-gap-draft} is used in the
bounded-component argument (Lemma~\ref{lem:fsr-core-shift-draft}) to replace
each raw cost in the routing-speed bound by a fixed term plus a multiple of
$D_j$.

Proposition~\ref{prop:fsr-threshold-existence-draft} shows that the thresholds
in Lemma~\ref{lem:fsr-global-cost-gap-draft} can be chosen so that the
associated tail components also satisfy the following local estimate.  We fix
such a common threshold choice for the remainder of the proof.

\begin{lemma}[Local tail mismatch]
\label{lem:fsr-tail-mismatch-draft}
For every cell $j$, if $u,u'\geq0$ and
$|u-u'|\leq2v\bar\tau$, then
\begin{equation*}
    |e_j(u)-e_j(u')|\leq
       \frac{D_j(u)+D_j(u')}
            {8\sum_{i\in\DS^-(j)}\lambda_i}.
\end{equation*}
\end{lemma}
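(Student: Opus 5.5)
The plan is to take this lemma as a consequence of choosing the threshold $\bar N_j$ large enough, which is what Proposition~\ref{prop:fsr-threshold-existence-draft} promises. I would isolate exactly which largeness requirement on $\bar N_j$ is needed and check that it only gets easier as $\bar N_j$ grows, so it can be imposed together with the requirements of Lemma~\ref{lem:fsr-global-cost-gap-draft}.

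Fix a cell $j$. Write $H=2v\bar\tau$, $M=\bar N_j+H$ and $\Lambda_j=\sum_{i\in\DS^-(j)}\lambda_i$, and assume without loss of generality that $u\geq u'$.
\begin{itemize}
\item If $u\leq M$, then $e_j(u)=e_j(u')=0$ and the claim is trivial, since $D_j\geq0$.
\item Otherwise $u>M$, so $u'\geq u-H>\bar N_j$. Since $[\,\cdot\,]_+$ is $1$-Lipschitz and $f_j$ is increasing,
\[
|e_j(u)-e_j(u')|\leq f_j(u)-f_j(u')\leq f_j(u'+H)-f_j(u').
\]
\end{itemize}
It therefore suffices to show that for all $w\geq\bar N_j$,
\[
f_j(w+H)-f_j(w)\leq \frac{D_j(w)}{8\Lambda_j}.
\]
Dropping the nonnegative term $D_j(u)$ from the right-hand side of the lemma only weakens what we need to prove.

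To lower bound $D_j(w)$ for $w\geq\bar N_j$, I would fix a baseline level $N^0_j>N_j^*$ with $f_j(N^0_j)\geq 2f_j(N_j^*)$; this exists by coercivity. Set $\kappa_j=\mu_j(N^0_j)-\mu_j(N_j^*)>0$, which is positive by strict monotonicity of $\mu_j$. Then for every $w\geq N^0_j$,
\[
D_j(w)=\big(f_j(w)-f_j(N_j^*)\big)\big(\mu_j(w)-\mu_j(N_j^*)\big)\geq \tfrac12 f_j(w)\,\kappa_j .
\]
The target inequality then reduces to
\[
\frac{f_j(w+H)}{f_j(w)}-1\leq \frac{\kappa_j}{16\Lambda_j}
\qquad\text{for all } w\geq \bar N_j .
\]
By the additive slow variation in Assumption~\ref{assume:fsr-cost-draft}, applied with the fixed increment $H$, the left side tends to $0$. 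Hence there is a finite $N^1_j$ beyond which it holds, and I would require $\bar N_j\geq\max\{N^0_j,N^1_j\}$.

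The step needing care is compatibility with Lemma~\ref{lem:fsr-global-cost-gap-draft}, since both lemmas must hold for one common choice of thresholds. Note that $\kappa_j$ is defined from the fixed baseline $N^0_j$, not from $\bar N_j$, so it does not shrink as the thresholds grow. Consequently the requirement derived here is a lower bound on $\bar N_j$ alone. I would therefore first pick thresholds satisfying Lemma~\ref{lem:fsr-global-cost-gap-draft} and then raise each $\bar N_j$ past $\max\{N^0_j,N^1_j\}$.

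The main obstacle is whether Lemma~\ref{lem:fsr-global-cost-gap-draft} survives this enlargement, because its constant $\max_{(k,\ell)}(f_\ell(\bar N_\ell)+g_{k\ell})$ grows with the thresholds. I expect it does, since that inequality only gets weaker as its right-hand side grows. Checking that monotonicity is the point where I would consult the proof of Proposition~\ref{prop:fsr-threshold-existence-draft}.
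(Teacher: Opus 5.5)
Your proof is correct and follows essentially the same route as the paper's: the same case split at $\bar N_j+2v\bar\tau$, the reduction to $f_j(u)-f_j(u')$ with $u'>\bar N_j$, additive slow variation applied at the smaller endpoint, and a linear lower bound of $D_j$ by $f_j$ above the threshold. The only difference is how you get that last bound: you obtain it non-asymptotically with the fixed baseline constant $\kappa_j=\mu_j(N_j^0)-\mu_j(N_j^*)$, whereas the paper uses the limit $D_j(u)/f_j(u)\to\bar\mu_j-\mu_j(N_j^*)$ (the first property of Proposition~\ref{prop:fsr-threshold-existence-draft}), so that one property serves both this lemma and Lemma~\ref{lem:fsr-global-cost-gap-draft}. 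Your compatibility concern resolves exactly as you expect, and your one-line justification already settles it: the additive constant in Lemma~\ref{lem:fsr-global-cost-gap-draft} is nondecreasing in the thresholds and the $D_j$ term does not depend on them, so enlarging $\bar N_j$ preserves that lemma (the paper records this as ``any larger threshold has the same two properties'').
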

The shift $2v\bar\tau$ in the definition of $e_j$ matches the largest possible
workload change between the two arguments compared in the time-shift argument.
Thus, if one of the two tail values is nonzero, the smaller workload is above
$\bar N_j$, where additive slow variation makes the change in $f_j$, and
hence in $e_j$, small relative to the workload gap.  Consequently,
Lemma~\ref{lem:fsr-tail-mismatch-draft} charges the tail mismatch to the
workload gaps at its two endpoints; this is the estimate used in
Lemma~\ref{lem:fsr-tail-shift-draft}.

The threshold construction and the proofs of both lemmas are deferred to
Subsection~\ref{sec:fsr-supporting-results-draft}.

The cost decomposition induces the same decomposition of the cumulative
cost--routing mismatch.  Define the bounded-component and tail contributions by
\begin{equation*}
\begin{aligned}
    \Gamma_b(T)
       &=\sum_{(i,j)\in\AS}\lambda_i\int_0^T
          b_{ij}(N_j(t))d_{ij}(t-\tau_{ij})\,dt
       -\sum_{(i,j)\in\AS}\lambda_i\int_0^T
          b_{ij}(N_j(t-\tau_{ij}))d_{ij}(t)\,dt,
\end{aligned}
\end{equation*}
and
\begin{equation*}
\begin{aligned}
    \Gamma_e(T)
       &=\sum_{(i,j)\in\AS}\lambda_i\int_0^T
          e_j(N_j(t))d_{ij}(t-\tau_{ij})\,dt
       -\sum_{(i,j)\in\AS}\lambda_i\int_0^T
          e_j(N_j(t-\tau_{ij}))d_{ij}(t)\,dt.
\end{aligned}
\end{equation*}
By linearity,
\begin{equation*}
    \sum_{(i,j)\in\AS}\lambda_i\int_0^T\Gamma_{ij}(t)\,dt
       =\Gamma_b(T)+\Gamma_e(T).
\end{equation*}
We proceed to bound $\Gamma_b(T)$ and $\Gamma_e(T)$ separately below.

\paragraph{The Bounded Component}

The bounded component can be handled by shifting the routing variables in time.
\begin{lemma}[Bounded-component time shift]
\label{lem:fsr-core-shift-draft}
Let
\begin{equation*}
    K_1=\max_{j\in\DS}
       \frac{2}{\bar\mu_j-\mu_j(N_j^*)}
       \sum_{\substack{i\in\DS^-(j)\\ \ell\in\DS^+(i)}}
       \lambda_i\big(f_\ell(\bar N_\ell+2v\bar\tau)+g_{i\ell}\big)
       \tau_{i\ell},
\end{equation*}
and set
\begin{equation}\label{eq:fsr-eta-choice-draft}
    \eta_0(\boldsymbol\tau)=
    \begin{cases}
       1/(16K_1),&\bar\tau>0,\\
       +\infty,&\bar\tau=0.
    \end{cases}
\end{equation}

For $T\geq2\bar\tau$ and
$\bar\eta\leq\eta_0(\boldsymbol\tau)$, we have
\begin{equation*}
\begin{aligned}
    \Gamma_b(T)&\leq 3K_2
       +4K_2\bar\eta T
          \max_{(k,\ell)\in\AS}
             \big(f_\ell(\bar N_\ell)+g_{k\ell}\big)
       +\frac14\left(H_-+
          \mathcal D_T\right),
\end{aligned}
\end{equation*}
where
\begin{equation*}
    K_2=\sum_{(i,j)\in\AS}\lambda_i
       \big(f_j(\bar N_j+2v\bar\tau)+g_{ij}\big)\tau_{ij},
\end{equation*}
and
\begin{equation*}
    H_-=\sum_{j\in\DS}\int_{-\bar\tau}^{0}
       D_j(N_j(s))\,ds
\end{equation*}
denotes the contribution of the initial workload history.
\end{lemma}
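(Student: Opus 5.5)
The plan is to move the delay from the routing variables onto the cost inside each integral, so that $\Gamma_b(T)$ becomes a cost paired with a \emph{change} in routing over a window of length $2\tau_{ij}$. That change is controlled by the stepsize through the routing-speed bound of Lemma~\ref{lem:fsr-projection-draft}.

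\emph{Step 1 (rearrangement).} Fix an arc $(i,j)$. In the second integral of $\Gamma_b(T)$, substitute $s=t-\tau_{ij}$ to obtain $\int_{-\tau_{ij}}^{T-\tau_{ij}} b_{ij}(N_j(s))\,d_{ij}(s+\tau_{ij})\,ds$. The arc's contribution then splits into two pieces.
\begin{itemize}
\item A bulk term $\int_{0}^{T-\tau_{ij}} b_{ij}(N_j(t))\big(d_{ij}(t-\tau_{ij})-d_{ij}(t+\tau_{ij})\big)\,dt$.
\item Boundary terms over $[-\tau_{ij},0]$ and $[T-\tau_{ij},T]$.
\end{itemize}
The assumption $T\geq 2\bar\tau$ makes these intervals well ordered. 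On the boundary intervals, $|d_{ij}|\leq 1$ and $0\le b_{ij}\leq f_j(\bar N_j+2v\bar\tau)+g_{ij}$. Summing with weights $\lambda_i$, each boundary piece is therefore at most $K_2$. I expect the two or three such pieces to give the $3K_2$ term; I would keep one slack piece in case the bulk term also needs a boundary correction.

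\emph{Step 2 (routing speed).} For the bulk term, write
\[
|d_{ij}(t-\tau_{ij})-d_{ij}(t+\tau_{ij})|\leq\int_{t-\tau_{ij}}^{t+\tau_{ij}}|\dot x_{ij}(s)|\,ds .
\]
Then apply the speed bound of Lemma~\ref{lem:fsr-projection-draft} for the projected dynamics~\eqref{eq:flow routing pds}: $\|\dot{\bx}_i(s)\|\leq\eta_i\|\bc_i(s)\|$, which is at most $\bar\eta\sum_{\ell\in\DS^+(i)}c_{i\ell}(N_\ell(s-\tau_{i\ell}))$. Each raw cost here is replaced using the global envelope of Lemma~\ref{lem:fsr-global-cost-gap-draft}:
\[
c_{i\ell}(u)\leq M+\frac{2D_\ell(u)}{\bar\mu_\ell-\mu_\ell(N_\ell^*)},
\qquad M:=\max_{(k,\ell)}\big(f_\ell(\bar N_\ell)+g_{k\ell}\big).
\]
Bound $b_{ij}$ by its cap and integrate.
\begin{itemize}
\item The constant $M$ yields $\sum_{(i,j)}\lambda_i(f_j(\bar N_j+2v\bar\tau)+g_{ij})\cdot 2\tau_{ij}\cdot\bar\eta\,|\DS^+(i)|\,M\,T$. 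Absorbing the out-degree factor gives a term of the form $4K_2\bar\eta TM$.
\item The $D_\ell$ part becomes a double integral $\int_0^T\int_{t-\tau_{ij}}^{t+\tau_{ij}} D_\ell(N_\ell(s-\tau_{i\ell}))\,ds\,dt$. By Fubini, each time $r\in[-\bar\tau,T]$ is counted for at most $2\tau_{ij}$ units of $t$. Hence this is at most $2\tau_{ij}\big(H_-+\mathcal D_T\big)$ restricted to cell $\ell$; the history window supplies $H_-$.
\end{itemize}

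\emph{Step 3 (collect and choose the stepsize).} Regroup the $D_\ell$ contributions by the cell $\ell$ whose gap appears. The resulting coefficient of $\int D_\ell$ is $\bar\eta$ times a sum over origins $i\in\DS^-(\ell)$ and their out-arcs, weighted by $\lambda_i(f(\bar N+2v\bar\tau)+g)\tau\cdot 2/(\bar\mu_\ell-\mu_\ell(N^*_\ell))$. This is (up to a factor $4$ from the window length $2\tau$ and the factor $2$ in the envelope) bounded by $K_1$. With $\bar\eta\leq 1/(16K_1)$ the total coefficient is at most $1/4$, giving $\tfrac14(H_-+\mathcal D_T)$. If $\bar\tau=0$, every term vanishes, consistent with $\eta_0=\infty$.

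The main obstacle is the index bookkeeping in Step 3. The cost is carried by arc $(i,j)$ but the speed bound involves all arcs $(i,\ell)$ out of the same origin, and the gaps are evaluated at shifted times $s-\tau_{i\ell}$. Matching the resulting sum exactly to the definition of $K_1$, with the cell-wise maximum, and getting the constants to $1/4$, is where care is needed. I would first verify the Fubini coverage claim (each $D_\ell(N_\ell(r))$ is weighted by at most $2\tau_{ij}$) and then reorder the triple sum $(i,j,\ell)$ as $\sum_\ell\sum_{i\in\DS^-(\ell)}\sum_{j\in\DS^+(i)}$ to read off $K_1$.
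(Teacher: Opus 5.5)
Your route is the paper's route: the same change-of-variables identity, then the routing-speed bound with Fubini on the bulk term, then the global cost--gap envelope, then regrouping the $D_\ell$ coefficients cell by cell into $K_1$. However, two steps as written do not deliver the stated inequality.

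\textbf{The bulk term on $[0,\tau_{ij})$.} You cannot use $|d_{ij}(t-\tau_{ij})-d_{ij}(t+\tau_{ij})|\le\int_{t-\tau_{ij}}^{t+\tau_{ij}}|\dot x_{ij}(s)|\,ds$ on all of $[0,T-\tau_{ij}]$. For $t\in[0,\tau_{ij})$ the earlier time $t-\tau_{ij}$ is negative. There the routing is only the prescribed Lebesgue-measurable history, not a solution of the projected dynamics, so there is no derivative to integrate and no speed bound. For the same reason, your Fubini claim that the shifted times $s-\tau_{i\ell}$ stay in $[-\bar\tau,T]$ holds only once $s\geq0$. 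Otherwise they reach $-2\bar\tau$, outside the window that defines $H_-$. The fix is to split the bulk integral at $\tau_{ij}$. On $[0,\tau_{ij}]$, bound it trivially: the equilibrium terms cancel and both routing coordinates lie in $[0,1]$, so the difference is at most $1$. This piece is exactly the third $K_2$. Your ``slack piece'' is therefore genuinely needed, and this is the reason.

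\textbf{The out-degree factor.} You bound the speed by $\eta_i\sum_{\ell\in\DS^+(i)}c_{i\ell}(N_\ell(s-\tau_{i\ell}))$ and then apply the envelope term by term. This makes $M=\max_{(k,\ell)\in\AS}(f_\ell(\bar N_\ell)+g_{k\ell})$ appear $|\DS^+(i)|$ times in each arc's contribution. The resulting constant term is not bounded by $4K_2\bar\eta TM$ once out-degrees exceed $2$. There is nowhere to ``absorb'' that factor, because the lemma's constants are explicit and feed directly into $G_1=8K_2M$.

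Instead, use the sup-norm part of Lemma~\ref{lem:fsr-projection-draft}, $|\dot x_{ij}(s)|\le2\eta_i\max_{k\in\DS^+(i)}c_{ik}(N_k(s-\tau_{ik}))$. Note that the lemma gives factor $2$ in the sup norm, not the factor-$1$ bound you quoted. Then apply the envelope to the maximum:
\[
\max_k c_{ik}(\cdot)\le M+\sum_{\ell\in\DS^+(i)}\frac{2D_\ell(N_\ell(s-\tau_{i\ell}))}{\bar\mu_\ell-\mu_\ell(N_\ell^*)},
\]
which holds because each $c_{ik}$ is at most $M$ plus its own nonnegative gap term. Now $M$ enters once. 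The window length $2\tau_{ij}$ times the factor $2\eta_i$ gives exactly $4K_2\bar\eta TM$. The same factor $4$ gives the coefficient $4K_1\bar\eta\le\frac14$ on $H_-+\mathcal D_T$; the envelope's factor $2$ is already inside $K_1$, so it should not be counted again in Step 3.
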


\begin{proof}
Fix an arc, abbreviate $\delta=\tau_{ij}$,
$d(t)=d_{ij}(t)$, and $b(t)=b_{ij}(N_j(t))$.  A direct change of variables gives
the exact identity
\begin{equation*}
\begin{aligned}
&\int_0^T\big(b(t)d(t-\delta)-b(t-\delta)d(t)\big)dt
=\int_0^{T-\delta}b(t)\big(d(t-\delta)-d(t+\delta)\big)dt\\
&\quad+\int_{T-\delta}^{T}b(t)d(t-\delta)dt
       -\int_{-\delta}^{0}b(t)d(t+\delta)dt.
\end{aligned}
\end{equation*}
The last two integrals each have absolute value at most
$\big(f_j(\bar N_j+2v\bar\tau)+g_{ij}\big)\delta$, because $|d(t)|\leq1$
and $b_{ij}$ is bounded above by
$f_j(\bar N_j+2v\bar\tau)+g_{ij}$.

For the first term, on $[0,\delta]$, both routing
coordinates lie in $[0,1]$, so
$|d(t-\delta)-d(t+\delta)|\leq1$, contributing at most another
$\big(f_j(\bar N_j+2v\bar\tau)+g_{ij}\big)\delta$.
On $[\delta,T-\delta]$, both routing times are nonnegative, and the
equilibrium term cancels from the difference:
\begin{equation*}
\begin{aligned}
    d(t-\delta)-d(t+\delta)
       &=x_{ij}(t-\delta)-x_{ij}(t+\delta)
       =-\int_{t-\delta}^{t+\delta}
          \frac{d}{ds}x_{ij}(s)\,ds.
\end{aligned}
\end{equation*}
Lemma~\ref{lem:fsr-projection-draft} gives the routing-speed bound
\begin{equation*}
    \left|\frac{d}{dt}x_{ij}(t)\right|\leq2\eta_i
       \max_{k\in\DS^+(i)}c_{ik}(N_k(t-\tau_{ik})).
\end{equation*}

Fubini's theorem counts each routing time for at most $2\delta$ units of the
outer variable.  Combining these observations gives
\begin{equation}
\begin{aligned}
&\int_0^T\big(b(t)d(t-\delta)-b(t-\delta)d(t)\big)dt
\leq3\big(f_j(\bar N_j+2v\bar\tau)+g_{ij}\big)\delta\\
&\quad+\big(f_j(\bar N_j+2v\bar\tau)+g_{ij}\big)
       \int_{\delta}^{T-\delta}
       \int_{t-\delta}^{t+\delta}
       \left|\frac{d}{ds}x_{ij}(s)\right|\,ds\,dt\\
&\leq3\big(f_j(\bar N_j+2v\bar\tau)+g_{ij}\big)\delta
+4\big(f_j(\bar N_j+2v\bar\tau)+g_{ij}\big)\delta\eta_i
       \int_0^T\max_{k\in\DS^+(i)}
          c_{ik}(N_k(s-\tau_{ik}))\,ds.
\end{aligned}
    \label{eq:fsr-core-arc-bound-draft}
\end{equation}

The global cost--gap bound in
Lemma~\ref{lem:fsr-global-cost-gap-draft} implies
\begin{equation*}
\begin{aligned}
&\max_{k\in\DS^+(i)}c_{ik}(N_k(s-\tau_{ik}))
\leq \max_{(p,q)\in\AS}
          \big(f_q(\bar N_q)+g_{pq}\big)
+\sum_{\ell\in\DS^+(i)}
       \frac{2D_\ell(N_\ell(s-\tau_{i\ell}))}
            {\bar\mu_\ell-\mu_\ell(N_\ell^*)}.
\end{aligned}
\end{equation*}
For each delayed term, a change of variables yields
\begin{equation*}
\begin{aligned}
   \int_0^T D_\ell(N_\ell(s-\tau_{i\ell}))\,ds
       &\leq\int_{-\bar\tau}^{0}D_\ell(N_\ell(u))\,du
       +\int_0^T D_\ell(N_\ell(u))\,du.
\end{aligned}
\end{equation*}
Multiplying \eqref{eq:fsr-core-arc-bound-draft} by $\lambda_i$,
summing over arcs, and using the definitions of $K_2$ and $K_1$ gives
\begin{equation*}
\begin{aligned}
    \Gamma_b(T)&\leq 3K_2
       +4K_2\bar\eta T
          \max_{(k,\ell)\in\AS}
             \big(f_\ell(\bar N_\ell)+g_{k\ell}\big)
       +4K_1\bar\eta\left(H_-+
          \mathcal D_T\right).
\end{aligned}
\end{equation*}
Finally, $\bar\eta\leq\eta_0(\boldsymbol\tau)$ and
\eqref{eq:fsr-eta-choice-draft} imply $4K_1\bar\eta\leq1/4$, which proves the
stated bound.
\end{proof}

\paragraph{The Unbounded Tail}

The tail can be unbounded, so we control its cumulative contribution using
its change over two delay windows and the workload Lyapunov function at time
$T$.  The following lemma gives the complete bound used in the theorem proof.

\begin{lemma}[Tail time shift]
\label{lem:fsr-tail-shift-draft}

For every $T\geq2\bar\tau$,
\begin{equation*}
\begin{aligned}
    \Gamma_e(T)&\leq K_3
       +\frac14\mathcal D_T
       +\frac18H_-+\frac12\Phi(\bN(T))+K_4,
\end{aligned}
\end{equation*}
Here $H_-$ is defined in Lemma~\ref{lem:fsr-core-shift-draft}.  The
beginning-of-horizon constant is
\begin{equation*}
    K_3=\sum_{(i,j)\in\AS}\lambda_i\tau_{ij}
       e_j\big(N_j(0)+v\bar\tau\big),
\end{equation*}
and the final-time constant is
\begin{equation}
    K_4=\sum_{j\in\DS}\sup_{u\geq0}\left[
       2\bar\tau\sum_{i\in\DS^-(j)}\lambda_i
       f_j(u+2v\bar\tau)-\frac12\Phi_j(u)\right]_+.
    \label{eq:fsr-final-time-constant-draft}
\end{equation}
Proposition~\ref{prop:fsr-final-time-constant-draft} proves that
$0\leq K_4<\infty$.
\end{lemma}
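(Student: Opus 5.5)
The plan is to treat $\Gamma_e(T)$ arc by arc, using the same time-shift idea as in Lemma~\ref{lem:fsr-core-shift-draft}, but aligning the \emph{routing} variable instead of the cost. This way the only quantity left to compare is the tail cost at two workloads a round trip apart, and Lemma~\ref{lem:fsr-tail-mismatch-draft} handles exactly that comparison.

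Fix an arc $(i,j)$ and abbreviate $\delta=\tau_{ij}$, $d(t)=d_{ij}(t)$ and $e(t)=e_j(N_j(t))$. Substituting $u=t-\delta$ in the first integral and renaming $t=u$ in the second gives
\begin{equation*}
\int_0^T\big(e(t)d(t-\delta)-e(t-\delta)d(t)\big)dt
=\int_{-\delta}^{0}e(u+\delta)d(u)\,du
+\int_{0}^{T-\delta}\big(e(u+\delta)-e(u-\delta)\big)d(u)\,du
-\int_{T-\delta}^{T}e(u-\delta)d(u)\,du .
\end{equation*}
This uses $T\ge 2\bar\tau\ge\delta$, so the three intervals are well defined.

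\textbf{Middle term.} By \eqref{eq:fsr-workload-speed-draft}, $|N_j(u+\delta)-N_j(u-\delta)|\le 2v\delta\le 2v\bar\tau$. Together with $|d|\le1$, Lemma~\ref{lem:fsr-tail-mismatch-draft} bounds the integrand by
\begin{equation*}
\big(D_j(N_j(u+\delta))+D_j(N_j(u-\delta))\big)\Big/\Big(8\sum_{k\in\DS^-(j)}\lambda_k\Big).
\end{equation*}
Multiply by $\lambda_i$ and sum over $i\in\DS^-(j)$; the normalizing denominator cancels. Then shift variables: $\int_0^{T-\delta}D_j(N_j(u+\delta))du\le\int_0^T D_j$, and $\int_0^{T-\delta}D_j(N_j(u-\delta))du\le\int_{-\bar\tau}^0D_j+\int_0^TD_j$. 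Summing over $j$, the middle terms contribute at most $\frac14\mathcal D_T+\frac18H_-$.

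\textbf{Initial boundary term.} For $u\in[-\delta,0]$ we have $u+\delta\in[0,\delta]$, so $N_j(u+\delta)\le N_j(0)+v\bar\tau$ by \eqref{eq:fsr-workload-speed-draft}. Since $e_j$ is nondecreasing and $|d|\le1$, this term is at most $\delta\, e_j(N_j(0)+v\bar\tau)$. Weighting by $\lambda_i$ and summing over arcs yields exactly $K_3$.

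\textbf{Final boundary term.} This is the only piece not controlled by $\mathcal D_T$ or fixed constants, and it is where I expect the main care is needed. For $u\in[T-\delta,T]$ we have $N_j(u-\delta)\le N_j(T)+2v\bar\tau$, and $e_j\le f_j$. With $|d|\le 1$, the term is at most $\delta f_j(N_j(T)+2v\bar\tau)\le\bar\tau f_j(N_j(T)+2v\bar\tau)$. Summing $\lambda_i$ over $i\in\DS^-(j)$ gives at most
\begin{equation*}
2\bar\tau\Big(\sum_{i\in\DS^-(j)}\lambda_i\Big)f_j(N_j(T)+2v\bar\tau),
\end{equation*}
and by the definition \eqref{eq:fsr-final-time-constant-draft} with $u=N_j(T)$, this is at most $\frac12\Phi_j(N_j(T))$ plus the $j$-th summand of $K_4$.

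Summing over $j$ gives $\frac12\Phi(\bN(T))+K_4$. Here finiteness of $K_4$ is taken from Proposition~\ref{prop:fsr-final-time-constant-draft}: it relies on the additive slow variation of $f_j$ making $f_j(u+2v\bar\tau)$ grow more slowly than the superlinear $\Phi_j(u)$. Adding the three contributions proves the stated bound.
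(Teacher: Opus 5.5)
Your proof is correct and follows the paper's argument essentially step for step. You use the same three-term identity (yours is the paper's reparametrized by the routing time $u=t-\delta$), you apply Lemma~\ref{lem:fsr-tail-mismatch-draft} to the middle term in the same way, and your boundary bounds produce the same $K_3$, $\frac14\mathcal D_T+\frac18 H_-$, and $\frac12\Phi(\bN(T))+K_4$. The only cosmetic difference is in the final term: you obtain a factor $\tau_{ij}\le\bar\tau$ per arc and then relax it to $2\bar\tau$ to match $K_4$, whereas the paper reaches $2\bar\tau$ by enlarging the integration window to $[T-2\bar\tau,T]$.
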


\begin{proof}
Fix an arc and again abbreviate $\delta=\tau_{ij}$ and
$d(t)=d_{ij}(t)$.  A backward change of variables gives the exact identity
\begin{equation}
\begin{aligned}
&\int_0^T\left[e_j(N_j(t))d(t-\delta)
       -e_j(N_j(t-\delta))d(t)\right]dt
=\int_0^\delta e_j(N_j(t))d(t-\delta)dt\\
&\quad+\int_\delta^T
   \left[e_j(N_j(t))-e_j(N_j(t-2\delta))\right]d(t-\delta)dt\\
&\quad-\int_{T-\delta}^T e_j(N_j(t-\delta))d(t)dt.
\end{aligned}
    \label{eq:fsr-tail-identity-draft}
\end{equation}
\begin{itemize}

\item {Initial term.}
For $0\leq t\leq\delta$, monotonicity of $e_j$ and
\eqref{eq:fsr-workload-speed-draft} give
$e_j(N_j(t))\leq e_j(N_j(0)+v\bar\tau)$.  Since $|d|\leq1$, the first
term in \eqref{eq:fsr-tail-identity-draft} is at most
$\delta e_j(N_j(0)+v\bar\tau)$.  After multiplying by $\lambda_i$ and
summing over arcs, these terms give $K_3$.

\item{Middle term.}
The two workload arguments differ by at most $2v\bar\tau$ by
\eqref{eq:fsr-workload-speed-draft}.  Since $|d|\leq1$,
Lemma~\ref{lem:fsr-tail-mismatch-draft} bounds the middle term for this arc by
\begin{equation*}
    \frac{1}{8\sum_{k\in\DS^-(j)}\lambda_k}\int_\delta^T
       \left[D_j(N_j(t))+D_j(N_j(t-2\delta))\right]dt.
\end{equation*}
The change of variables $s=t-2\delta$ gives
\begin{equation*}
\begin{aligned}
    \int_\delta^T D_j(N_j(t-2\delta))\,dt
       &\leq\int_{-\bar\tau}^{0}D_j(N_j(s))\,ds
       +\int_0^T D_j(N_j(s))\,ds.
\end{aligned}
\end{equation*}
Thus each incoming arc contributes at most
\begin{equation*}
    \frac{1}{8\sum_{k\in\DS^-(j)}\lambda_k}\left(
       2\int_0^T D_j(N_j(t))\,dt
       +\int_{-\bar\tau}^0D_j(N_j(t))\,dt\right).
\end{equation*}
After multiplying by $\lambda_i$ and summing over incoming arcs, their
weights cancel the denominator.  Summing over cells therefore gives
\begin{equation*}
\begin{aligned}
    &\sum_{(i,j)\in\AS}\lambda_i
       \int_{\tau_{ij}}^T
       \big|e_j(N_j(t))-e_j(N_j(t-2\tau_{ij}))\big|dt
    \leq\frac14\mathcal D_T
       +\frac18H_-.
\end{aligned}
\end{equation*}

\item {Final term.}
Because $e_j\geq0$ and $-d(t)\leq|d(t)|\leq1$, the sum of the last terms in
\eqref{eq:fsr-tail-identity-draft} is at most
\begin{equation*}
    \sum_{(i,j)\in\AS}\lambda_i
       \int_{T-\tau_{ij}}^T e_j(N_j(t-\tau_{ij}))\,dt.
\end{equation*}
For an arc with delay $\delta$, a change of variables maps this integral to
$[T-2\delta,T-\delta]$, which is contained in
$[T-2\bar\tau,T]$.  Since $T\geq2\bar\tau$,
\eqref{eq:fsr-workload-speed-draft} gives, throughout this interval,
\begin{equation*}
    e_j(N_j(s))\leq f_j(N_j(s))
       \leq f_j(N_j(T)+2v\bar\tau).
\end{equation*}
Summing the incoming arc weights for cell $j$ and enlarging each integration
interval to $[T-2\bar\tau,T]$ gives
\begin{equation*}
\begin{aligned}
&\sum_{i\in\DS^-(j)}\lambda_i
   \int_{T-\tau_{ij}}^T e_j(N_j(t-\tau_{ij}))\,dt
\leq 2\bar\tau\sum_{i\in\DS^-(j)}\lambda_i
   f_j(N_j(T)+2v\bar\tau)\\
&\leq\frac12\Phi_j(N_j(T))
+\sup_{u\geq0}\left[
       2\bar\tau\sum_{i\in\DS^-(j)}\lambda_i
       f_j(u+2v\bar\tau)-\frac12\Phi_j(u)\right]_+.
\end{aligned}
\end{equation*}
Summing over $j$ bounds the final contribution by
$\frac12\Phi(\bN(T))+K_4$.

\end{itemize}

Adding the bounds for the three terms proves
the claim.
\end{proof}

\subsubsection{Putting the Estimates Together}
\label{sec:fsr-final-assembly-draft}
\begin{proof}[Proof of Theorem~\ref{thm:fsr-raw-cumulative-draft}]
Integrating the inequality in
Lemma~\ref{lem:fsr-composite-drift-draft} from $0$ to $T$ and moving the
workload and delayed-routing terms to the left gives
\begin{equation}
\begin{aligned}
&V(\bx(T))+\Phi(\bN(T))+\mathcal D_T+\mathcal S_T^\tau
\leq V(\bx(0))+\Phi(\bN(0))
   +\sum_{(i,j)\in\AS}\lambda_i\int_0^T\Gamma_{ij}(t)\,dt.
\end{aligned}
    \label{eq:fsr-integrated-composite-drift-draft}
\end{equation}
Applying Lemma~\ref{lem:fsr-mixed-slackness-draft}, Lemma~\ref{lem:fsr-core-shift-draft}, and Lemma~\ref{lem:fsr-tail-shift-draft}, we obtain
\begin{equation*}
\begin{aligned}
&V(\bx(T))+\Phi(\bN(T))
   +\mathcal D_T+\mathcal S_T
\leq V(\bx(0))+\Phi(\bN(0))\\
&\quad+\bar\tau\sum_{i\in\DS}\lambda_i
      \max_{j\in\DS^+(i)}c_{ij}(N_j^*)
   +3K_2+K_3+K_4
+4K_2\bar\eta T
   \max_{(k,\ell)\in\AS}
      \big(f_\ell(\bar N_\ell)+g_{k\ell}\big)\\
&\quad+\frac12\mathcal D_T+\frac38H_-
   +\frac12\Phi(\bN(T)).
\end{aligned}
\end{equation*}
Since $\bN$ is continuous and $D_j$ and $\Phi_j$ are continuous,
$\mathcal D_T$, $H_-$, and $\Phi(\bN(T))$ are finite for every finite $T$.
Thus we may subtract $\frac12\mathcal D_T$ and
$\frac12\Phi(\bN(T))$ from both sides.  This leaves
$V(\bx(T))+\frac12\Phi(\bN(T))$ on the left; discarding those two
nonnegative final-time quantities gives
\begin{equation}
\begin{aligned}
&\frac12\mathcal D_T+\mathcal S_T
\leq V(\bx(0))+\Phi(\bN(0))\\
&\quad+\bar\tau\sum_{i\in\DS}\lambda_i
      \max_{j\in\DS^+(i)}c_{ij}(N_j^*)
   +3K_2+K_3+K_4
+\frac38H_-
   +4K_2\bar\eta T
      \max_{(k,\ell)\in\AS}
         \big(f_\ell(\bar N_\ell)+g_{k\ell}\big).
\end{aligned}
    \label{eq:fsr-after-final-absorption-draft}
\end{equation}
Set
\begin{equation*}
\begin{aligned}
    G_1(\boldsymbol\tau)
       &=8K_2\max_{(k,\ell)\in\AS}
          \big(f_\ell(\bar N_\ell)+g_{k\ell}\big),\\
    G_2\!\left(
       \boldsymbol\tau,\left.\bN\right|_{[-\bar\tau,0]}
    \right)
       &=2\Phi(\bN(0))
       +2\bar\tau\sum_{i\in\DS}\lambda_i
          \max_{j\in\DS^+(i)}c_{ij}(N_j^*)
       +6K_2+2K_3+2K_4+\frac34H_-.
\end{aligned}
\end{equation*}
Using the definition of $V(\bx(0))$ in
\eqref{eq:fsr-routing-potential} and multiplying both sides of \eqref{eq:fsr-after-final-absorption-draft} by $2/T$, we obtain
\begin{equation}
\begin{aligned}
&\frac{\mathcal D_T}{T}+\frac{2\mathcal S_T}{T}
\leq\frac1T\sum_{i\in\DS}\frac{\lambda_i}{\eta_i}
      \|\bx_i(0)-\bx_i^*\|_2^2
   +G_1(\boldsymbol\tau)\bar\eta
   +\frac{G_2\!\left(
      \boldsymbol\tau,\left.\bN\right|_{[-\bar\tau,0]}
   \right)}{T}.
\end{aligned}
    \label{eq:fsr-main-bound-draft}
\end{equation}
The result thus follows as $\|\bx_i(0)-\bx_i^*\|_2^2\leq2$ and
$1/\eta_i\leq1/\underline\eta$ and $\mathcal S_T\geq0$ by
\eqref{eq:fsr-cumulative-gaps-draft}.
\end{proof}

\subsubsection{{Supporting Results}}
\label{sec:fsr-supporting-results-draft}

\begin{lemma}\label{lem:fsr-projection-draft} Fix a cell $i \in \DS$ and a probability vector $\bx_i \in \Delta_i$. Let $\bv = \Pi_{T_{\Delta_i}(\bx_i)} \left(\bz_i \right)$ be the projection to the tangent cone of $\Delta_i$ at $\bx_i$ of the vector $\bz_i \in \mathbb R^{|\DS|}$. The following holds.
\begin{enumerate}
    \item Contraction property. For every $\bx_i' \in \Delta_i$ we have  $(\bx_i - \bx_i')^\top \bv  \le (\bx_i - \bx_i')^\top \bz_i$.

    \item Boundedness. We have that $\| \bv \|_\infty \le 2\|\bz_i\|_\infty$.
\end{enumerate}
\end{lemma}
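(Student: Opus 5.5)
Both parts rest on the Moreau decomposition for the closed convex cone $K = T_{\Delta_i}(\bx_i)$. We write $\bz_i = \bv + \bw$, where $\bv = \Pi_K(\bz_i)$ and $\bw = \Pi_{K^\circ}(\bz_i)$ lies in the polar cone. These satisfy $\bv^\top\bw = 0$ and $\bw^\top \bu \le 0$ for every $\bu \in K$.

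\emph{Contraction property.} For any $\bx_i' \in \Delta_i$, the direction $\bx_i' - \bx_i$ lies in $K$: its coordinates sum to zero, it vanishes off $\DS^+(i)$, and it is nonnegative wherever $x_{ij}=0$. Hence $\bw^\top(\bx_i' - \bx_i) \le 0$, that is, $(\bx_i - \bx_i')^\top \bw \ge 0$. Substituting $\bz_i = \bv + \bw$ gives $(\bx_i-\bx_i')^\top \bz_i = (\bx_i - \bx_i')^\top \bv + (\bx_i-\bx_i')^\top\bw \ge (\bx_i-\bx_i')^\top\bv$, which is the claim. This step is routine.

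\emph{Boundedness.} Here I would compute the projection explicitly. Let $S = \DS^+(i)$, $Z = \{j \in S: x_{ij}=0\}$ and $P = S\setminus Z$. The cone $K$ consists of vectors supported on $S$ with coordinates summing to zero and nonnegative on $Z$. The KKT conditions for $\min_{\bu \in K}\tfrac12\|\bu - \bz_i\|^2$ give a scalar multiplier $\theta$ and multipliers $\alpha_j \ge 0$ for $j\in Z$, with
\[
v_j = z_{ij} - \theta \ (j \in P), \qquad v_j = \max\{z_{ij} - \theta, 0\} \ (j \in Z), \qquad v_j = 0 \ (j\notin S).
\]
The value of $\theta$ is fixed by the condition $\sum_j v_j = 0$. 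The map $\theta \mapsto \sum_{j\in P}(z_{ij}-\theta) + \sum_{j\in Z}(z_{ij}-\theta)_+$ is continuous and nonincreasing. It is strictly decreasing when $P \neq \emptyset$, and $P$ is nonempty because $\bx_i \in \Delta_i$ has some positive coordinate. If $\theta < \min_{j\in S} z_{ij}$, every term is strictly positive, so the sum is positive. If $\theta > \max_{j\in S} z_{ij}$, the $P$-terms are negative and the $Z$-terms vanish, so the sum is negative. Therefore the root satisfies $\theta \in [\min_S z_{ij}, \max_S z_{ij}]$, and in particular $|\theta| \le \|\bz_i\|_\infty$. It follows that $|v_j| \le |z_{ij}| + |\theta| \le 2\|\bz_i\|_\infty$ on $P$. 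On $Z$ we have $0 \le v_j \le |z_{ij}-\theta| \le 2\|\bz_i\|_\infty$, and $v_j = 0$ off $S$.

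The main obstacle is pinning down $\theta$ within the coordinate range of $\bz_i$. The monotonicity argument above handles it. The edge case where $\bz_i$ has equal entries on $S$ is covered because the interval for $\theta$ degenerates to a single point.
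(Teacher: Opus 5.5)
Your proof is correct and follows essentially the paper's route. In Part 1, the paper derives $(\bv-\bz_i)^\top\boldsymbol{u}\geq 0$ for all $\boldsymbol{u}\in T_{\Delta_i}(\bx_i)$ from the projection's variational inequality, which is exactly your Moreau statement that $\boldsymbol{w}=\bz_i-\bv$ lies in the polar cone; in Part 2, both arguments use the same KKT computation and differ only in how the multiplier is bounded: the paper solves for it explicitly as $\beta=-\frac{1}{|S|}\sum_{j\in S}z_{ij}$, with $S$ the indices where nonnegativity is slack, whereas you place your $\theta=-\beta$ in $[\min_{j\in\DS^+(i)}z_{ij},\max_{j\in\DS^+(i)}z_{ij}]$ by monotonicity, and either way its absolute value is at most $\|\bz_i\|_\infty$.
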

\begin{proof}
We prove each part at a time.

\paragraph{Part 1 (Contraction Property).}
Recall that the tangent cone is given by
\[
 T_{\Delta_i}(\bx_i) = \left\{\bv \in \mathbb{R}^{|\DS|}: \sum_{j\in \DS} v_j = 0,  v_j \geq 0  \text{ if } x_{ij}  = 0, v_j = 0 \text{ for all } j \not\in \DS^+(i) \right\}\,.
 \]

Since $T_{\Delta_i}(\bx_i)$ is a closed convex cone, the projection $\bv$ is uniquely characterized by the property that for any $\boldsymbol{u} \in T_{\Delta_i}(\bx_i)$,
\begin{equation*}
    (\bv - \bz_i)^\top (\boldsymbol{u} - \bv) \ge 0. %
\end{equation*}
Taking $\boldsymbol u=\boldsymbol0$ and $\boldsymbol u=2\bv$ in this characterization gives
$(\bv-\bz_i)^\top\bv=0$.  Because $T_{\Delta_i}(\bx_i)$ is a cone, substitution of any $\boldsymbol u\in T_{\Delta_i}(\bx_i)$ and this orthogonality identity then give
\begin{equation}
    (\bv - \bz_i )^\top \boldsymbol{u} \ge 0, \quad \forall \boldsymbol{u} \in T_{\Delta_i}(\bx_i)\,. \label{eq:polar_cond}
\end{equation}

Let $\bx'_i \in \Delta_i$. Consider the vector $\boldsymbol{u} = \bx'_i - \bx_i$. It is straightforward to verify that $\boldsymbol{u} \in T_{\Delta_i}(\bx_i)$ because $\boldsymbol{u} = \bx'_i - \bx_i$ is a feasible direction in the tangent cone. By the projection property \eqref{eq:polar_cond}, we have:
\[
(\bv - \bz_i)^\top (\bx'_i - \bx_i) \ge 0\,,
\]
and the result follows from re-arranging terms.

\paragraph{Part 2 (Boundedness).} The projection $v$ is the solution to the optimization problem:
\begin{equation*}
\begin{aligned}
& \min_{v} & & \frac{1}{2} \sum_{j \in \DS^+(i)} (v_j - z_{ij})^2 \\
& \text{subject to} & & \sum_{j \in \DS^+(i)} v_j = 0, \\
& & & v_j \ge 0, \quad \forall j \in \DS^+(i) : x_{ij} = 0\,,
\end{aligned}
\end{equation*}
where we removed from the objective all $j \notin \DS^+(i)$ because the tangent cone enforces $v_j = 0$ for those indices, so their contribution $(v_j - z_{ij})^2 = z_{ij}^2$ is constant with respect to $\bv$ and does not affect the optimal solution.

Let $I_0 = \{j \in \DS^+(i) : x_{ij} = 0\}$. The Lagrangian is:
\[
\mathcal{L}(\bv, \beta, \alpha) = \frac{1}{2} \sum_{j \in \DS^+(i)} (v_j - z_{ij})^2 - \beta \sum_{j \in \DS^+(i)} v_j - \sum_{j \in I_0} \alpha_j v_j,
\]
where $\beta$ is the Lagrange multiplier of the equality constraint and $\alpha_j \ge 0$ are the Lagrange multipliers of the non-negativity constraints. The first-order optimality conditions are necessary and sufficient for optimality because the problem is convex with linear constraints, and differentiable. The first-order condition with respect to $v_j$ is:
\[
\frac{\partial \mathcal L}{\partial v_j} = v_j - z_{ij} - \beta - \mathbf 1\{j \in I_0\} \alpha_j = 0 \implies v_j = z_{ij} + \beta + \mathbf 1\{j \in I_0\} \alpha_j\,.
\]
We also have the complementary slackness condition $\alpha_j v_j = 0$ for $j \in I_0$.
\begin{itemize}
    \item If $j \in I_0$ and the constraint is active ($v_j=0$), then trivially $|v_j| = 0 \le \|\bz_i\|_\infty$.
    \item If $j \notin I_0$ or the constraint is inactive ($\alpha_j=0$), then $v_j = z_{ij} + \beta$.
\end{itemize}
The constant $\beta$ is determined by the constraint $\sum_{j \in \DS^+(i)} v_j = 0$. Let $S = \DS^+(i) \setminus \{j \in I_0 \mid v_j = 0\}$ be the set of indices where the non-negativity constraint is inactive.  This set is nonempty because $\bx_i$ has at least one positive component. Then:
\[
\sum_{j \in S} (z_{ij} + \beta) = 0 \implies \beta = -\frac{1}{|S|} \sum_{j \in S} z_{ij}\,.
\]
Thus, for non-zero components,
\begin{align*}
    |v_j| &= \left| z_{ij} - \frac{1}{|S|} \sum_{j' \in S} z_{ij'} \right|
    \le \|\bz_i\|_\infty + \frac{1}{|S|} \sum_{j' \in S} \|\bz_i\|_\infty = 2 \|\bz_i\|_\infty\,,
\end{align*}
where the inequality follows from the triangle inequality together with $|z_{ij'}| \le \|\bz_i\|_\infty$. Thus, $\|\bv\|_\infty \le 2\|\bz_i\|_\infty$.
\end{proof}

 \begin{lemma}[Two consequences of additive slow variation]
\label{lem:fsr-consequences-draft}
For every cell $j$ and every fixed $H\geq0$,
\begin{equation}
    \sup_{u\geq r}\ \sup_{0\leq h\leq H}
       \left(\frac{f_j(u+h)}{f_j(u)}-1\right)
       \longrightarrow0\quad\text{as }r\to\infty.
    \label{eq:fsr-modulus-draft}
\end{equation}
Moreover,
\begin{equation}
    \frac{\Phi_j(u)}{f_j(u)}\longrightarrow\infty
       \quad\text{as }u\to\infty.
    \label{eq:fsr-p-over-f-draft}
\end{equation}
\end{lemma}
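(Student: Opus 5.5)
For the uniform modulus \eqref{eq:fsr-modulus-draft}, I would exploit monotonicity of $f_j$. Since $f_j$ is increasing, for $0\le h\le H$ we have $f_j(u+h)\le f_j(u+H)$. It therefore suffices to show that
\[
\sup_{u\ge r}\left(\frac{f_j(u+H)}{f_j(u)}-1\right)\to 0 .
\]
The subtlety is that \eqref{eq:fsr-assumption-draft} is a pointwise limit along $N\to\infty$. A supremum over $u\ge r$ of a function converging to $0$ as $u\to\infty$ does converge to $0$: given $\epsilon>0$, pick $R$ with $f_j(u+H)/f_j(u)-1\le\epsilon$ for all $u\ge R$; then the supremum over $u\ge r$ is at most $\epsilon$ for $r\ge R$. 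So this part is essentially definitional. The only care needed is positivity of $f_j(u)$ for large $u$, which follows from coercivity and monotonicity. Nonnegativity of the quantity inside also follows from monotonicity.

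For \eqref{eq:fsr-p-over-f-draft}, I would lower bound $\Phi_j(u)$ by integrating only over the last window $[u-H,u]$ for a large fixed $H$. For $u\ge N_j^*+H$, monotonicity gives
\[
\Phi_j(u)\ \ge\ \int_{u-H}^{u}\big(f_j(y)-f_j(N_j^*)\big)\,dy\ \ge\ H\big(f_j(u-H)-f_j(N_j^*)\big).
\]
Dividing by $f_j(u)$ gives
\[
\frac{\Phi_j(u)}{f_j(u)}\ \ge\ H\left(\frac{f_j(u-H)}{f_j(u)}-\frac{f_j(N_j^*)}{f_j(u)}\right).
\]
As $u\to\infty$, the first ratio tends to $1$ by \eqref{eq:fsr-assumption-draft} applied at $N=u-H$ (equivalently, by the uniform version just proved). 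The second term tends to $0$ by coercivity. Hence $\liminf_{u\to\infty}\Phi_j(u)/f_j(u)\ge H$, and since $H$ is arbitrary the ratio diverges.

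I expect neither step to be a real obstacle. The point needing the most care is the reindexing in the second step, where one uses that $f_j(u-H)/f_j(u)=1/\big(f_j(N+H)/f_j(N)\big)$ with $N=u-H\to\infty$, and checks that the denominators are eventually positive. Another point to verify is that $\Phi_j$ is well defined as an integral of a continuous function, so the lower bound by the integral over the subinterval is legitimate. The contribution on $[N_j^*,u-H]$ is dropped because the integrand is nonnegative there.
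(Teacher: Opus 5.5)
Your proposal is correct and follows essentially the same route as the paper. Monotonicity reduces the uniform modulus to the single shift $h=H$, where the pointwise limit already gives uniform smallness; the divergence of $\Phi_j/f_j$ then comes from the same window bound $\Phi_j(u)\ge L\,\big(f_j(u-L)-f_j(N_j^*)\big)$, followed by reindexing with reciprocals, coercivity, and letting the window length $L$ be arbitrary (you reuse the letter $H$ for this window, which is only a notational difference).
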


\begin{proof}
For $0\leq h\leq H$, monotonicity gives
\begin{equation*}
    0\leq\frac{f_j(u+h)}{f_j(u)}-1
       \leq\frac{f_j(u+H)}{f_j(u)}-1.
\end{equation*}
The definition of a limit then makes the right-hand side uniformly small for
all sufficiently large $u$, proving \eqref{eq:fsr-modulus-draft}.

Fix an arbitrary $L>0$.  For all sufficiently large $u$,
\begin{equation*}
    \Phi_j(u)\geq
       \int_{u-L}^{u}\big(f_j(y)-f_j(N_j^*)\big)\,dy
       \geq L\big(f_j(u-L)-f_j(N_j^*)\big).
\end{equation*}
Applying additive slow variation at $u-L$ with shift $L$, and then taking
reciprocals, gives $f_j(u-L)/f_j(u)\to1$.  Coercivity also gives
$f_j(N_j^*)/f_j(u)\to0$.  Hence
$\liminf_{u\to\infty}\Phi_j(u)/f_j(u)\geq L$.  Since $L$ is arbitrary,
\eqref{eq:fsr-p-over-f-draft} follows.
\end{proof}

We now justify the common threshold choice used in
Lemmas~\ref{lem:fsr-global-cost-gap-draft} and
\ref{lem:fsr-tail-mismatch-draft}.

\begin{proposition}[Common workload thresholds]
\label{prop:fsr-threshold-existence-draft}
For every cell $j\in\DS$, there exists a finite threshold
$\bar N_j>N_j^*$ with the following two properties.  Above $\bar N_j$,
$f_j$ is at most $2/(\bar\mu_j-\mu_j(N_j^*))$ times the workload gap $D_j$.
Moreover, starting from any workload at or above $\bar N_j$, every increment
of length at most $2v\bar\tau$ increases $f_j$ by at most
$(\bar\mu_j-\mu_j(N_j^*))/(16\sum_{i\in\DS^-(j)}\lambda_i)$ as a fraction of
its initial value.
Any larger threshold has the same two properties.
\end{proposition}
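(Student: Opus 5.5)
We treat the two properties separately, show each holds for all sufficiently large thresholds, and take the larger of the two cutoffs. Monotonicity in the threshold is built into the argument: each property will be of the form ``for all $u\ge \bar N_j$'', so enlarging $\bar N_j$ only shrinks the set of $u$ being quantified over. This yields the final sentence of the statement.

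\emph{First property.} Fix $j$ and write $\gamma_j=\bar\mu_j-\mu_j(N_j^*)>0$. This is positive because $N_j^*$ is finite and $\mu_j$ is strictly increasing with supremum $\bar\mu_j$ (Assumption~\ref{assume:fsr-service-draft}). The goal is $f_j(u)\le \frac{2}{\gamma_j}D_j(u)$ for all $u\ge\bar N_j$, where
\[
D_j(u)=\big(f_j(u)-f_j(N_j^*)\big)\big(\mu_j(u)-\mu_j(N_j^*)\big).
\]
Since $\mu_j(u)\uparrow\bar\mu_j$, there is $u_1$ with $\mu_j(u)-\mu_j(N_j^*)\ge\frac34\gamma_j$ for $u\ge u_1$. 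Coercivity of $f_j$ gives $u_2$ with $f_j(N_j^*)\le \frac13 f_j(u)$ for $u\ge u_2$. Then for $u\ge\max\{u_1,u_2,N_j^*+1\}$,
\[
D_j(u)\ge \tfrac23 f_j(u)\cdot\tfrac34\gamma_j=\tfrac12\gamma_j f_j(u),
\]
which is exactly the desired inequality. Both conditions persist for all larger $u$, by monotonicity of $\mu_j$ and the fact that $f_j(u)$ is increasing.

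\emph{Second property.} Set $H=2v\bar\tau$ and $\epsilon_j=\gamma_j/(16\sum_{i\in\DS^-(j)}\lambda_i)>0$. The denominator is positive: if $j$ has no in-neighbors, then flow balance forces $\mu_j(N_j^*)=0$, and the condition is vacuous, or we interpret $\epsilon_j=\infty$. The property requires
\[
\frac{f_j(u+h)}{f_j(u)}-1\le\epsilon_j\quad\text{for all }u\ge \bar N_j,\ 0\le h\le H.
\]
This is precisely the uniform statement~\eqref{eq:fsr-modulus-draft} of Lemma~\ref{lem:fsr-consequences-draft}. That lemma gives $r_j$ such that the supremum over $u\ge r_j$ is at most $\epsilon_j$, and the supremum over $u\ge r$ is nonincreasing in $r$. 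If $\bar\tau=0$, then $H=0$ and the property is trivial.

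Finally, set $\bar N_j=\max\{u_1,u_2,N_j^*+1,r_j\}$. This is finite and strictly exceeds $N_j^*$. There is no real obstacle here. The only point requiring care is the positivity of $\gamma_j$ and of the in-flow normalizer, which we handled above. It is also worth noting that strict positivity of $f_j(u)$ for large $u$ makes the ratio well defined; this follows from coercivity.
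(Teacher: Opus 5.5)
Your proof is correct and is essentially the paper's argument. The first property comes from $D_j(u)/f_j(u)\to\bar\mu_j-\mu_j(N_j^*)$ via coercivity and $\mu_j(u)\uparrow\bar\mu_j$ (you make the constants explicit as $\tfrac23\cdot\tfrac34=\tfrac12$), the second comes from the uniform modulus \eqref{eq:fsr-modulus-draft} with $H=2v\bar\tau$, and you take the maximum of the cutoffs. Your handling of a cell with no in-neighbors, reading the bound as $+\infty$, is a harmless addition; the paper instead simply asserts that every cell has an in-neighbor.
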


\begin{proof}
The two quantities are positive because $\mu_j$ is strictly increasing,
$N_j^*$ is finite, every cell has an in-neighbor, and every arrival rate is
positive.  We verify the two properties separately.

\paragraph{First property.}
For every sufficiently large $\bar N_j$, we have
\begin{equation*}
    f_j(u)\leq \frac{2D_j(u)}{\bar\mu_j-\mu_j(N_j^*)},
    \qquad u\geq\bar N_j.
\end{equation*}
For all sufficiently large $u$, coercivity gives $f_j(u)>0$, and the
definition of $D_j$ gives
\begin{equation*}
\begin{aligned}
    \lim_{u\to\infty}\frac{D_j(u)}{f_j(u)}
    &=\lim_{u\to\infty}
       \left(1-\frac{f_j(N_j^*)}{f_j(u)}\right)
       \left(\mu_j(u)-\mu_j(N_j^*)\right)
    =\bar\mu_j-\mu_j(N_j^*).
\end{aligned}
\end{equation*}
Indeed, coercivity gives $f_j(N_j^*)/f_j(u)\to0$, while bounded monotonicity
gives $\mu_j(u)\to\bar\mu_j$.
Thus $D_j(u)/f_j(u)\geq(\bar\mu_j-\mu_j(N_j^*))/2$ beyond some finite workload level.
Rearranging gives the displayed estimate beyond that level.

\paragraph{Second property.}
For every sufficiently large $\bar N_j$, we also have
\begin{equation*}
    \frac{2}{\bar\mu_j-\mu_j(N_j^*)}
    \sup_{u\geq\bar N_j}\ \sup_{0\leq h\leq2v\bar\tau}
       \left(\frac{f_j(u+h)}{f_j(u)}-1\right)
       \leq\frac{1}{8\sum_{i\in\DS^-(j)}\lambda_i}.
\end{equation*}
Apply Lemma~\ref{lem:fsr-consequences-draft} with the fixed displacement
$H=2v\bar\tau$.  By \eqref{eq:fsr-modulus-draft}, the nested supremum in the
display tends to zero as $\bar N_j\to\infty$.  It is therefore at most
$(\bar\mu_j-\mu_j(N_j^*))/(16\sum_{i\in\DS^-(j)}\lambda_i)$ for every sufficiently large choice of
$\bar N_j$; multiplying by $2/(\bar\mu_j-\mu_j(N_j^*))$ proves the displayed estimate.

Choose one finite $\bar N_j>N_j^*$ large enough for both estimates, and do
this for every cell.  The proof of Lemma~\ref{lem:fsr-global-cost-gap-draft}
below uses the first property, while the proof of
Lemma~\ref{lem:fsr-tail-mismatch-draft} uses both.  Hence this single
collection of thresholds gives both conclusions simultaneously.
Enlarging any threshold only shrinks the workload ranges on which the two
properties must hold, so both remain valid.
\end{proof}

\begin{proof}[Proof of Lemma~\ref{lem:fsr-global-cost-gap-draft}]
Fix a workload $u\geq0$ and an arc $(i,j)\in\AS$.  If
$u<\bar N_j$, monotonicity gives
\begin{equation*}
    c_{ij}(u)=f_j(u)+g_{ij}
       \leq f_j(\bar N_j)+g_{ij}
       \leq\max_{(k,\ell)\in\AS}
          \big(f_\ell(\bar N_\ell)+g_{k\ell}\big).
\end{equation*}
If $u\geq\bar N_j$, the same maximum bounds $g_{ij}$, while the first property
of Proposition~\ref{prop:fsr-threshold-existence-draft} gives
\begin{equation*}
    f_j(u)\leq
       \frac{2D_j(u)}{\bar\mu_j-\mu_j(N_j^*)}.
\end{equation*}
In the first case, the claimed inequality follows because $D_j(u)\geq0$.  In
the second case, adding the bounds on $g_{ij}$ and $f_j(u)$ proves
\eqref{eq:fsr-global-cost-gap-draft}.
\end{proof}

\begin{proof}[Proof of Lemma~\ref{lem:fsr-tail-mismatch-draft}]
Recall $e_j(u)=\big[f_j(u)-f_j(\bar N_j+2v\bar\tau)\big]_+$.
If $\max\{u,u'\}\leq\bar N_j+2v\bar\tau$, then both tail values are zero.
Otherwise, suppose without loss of generality that $u\geq u'$ and
$u>\bar N_j+2v\bar\tau$.  Then $e_j(u)\geq e_j(u')$ and
$u'\geq u-2v\bar\tau>\bar N_j$.  Moreover,
\begin{equation*}
\begin{aligned}
   e_j(u)-e_j(u')
       &\leq f_j(u)-f_j(u')
       \leq\frac{\bar\mu_j-\mu_j(N_j^*)}
          {16\sum_{i\in\DS^-(j)}\lambda_i}f_j(u')
       \leq\frac{D_j(u')}
          {8\sum_{i\in\DS^-(j)}\lambda_i}.
\end{aligned}
\end{equation*}
For the first inequality, equality holds when
$u'\geq\bar N_j+2v\bar\tau$.  If
$u'<\bar N_j+2v\bar\tau$, then $e_j(u')=0$ and
\begin{equation*}
    e_j(u)=f_j(u)-f_j(\bar N_j+2v\bar\tau)
       \leq f_j(u)-f_j(u').
\end{equation*}
The second inequality uses $u'>\bar N_j$, $u-u'\leq2v\bar\tau$, and the
second property of Proposition~\ref{prop:fsr-threshold-existence-draft}; the
third uses its first property.  Interchanging $u$ and $u'$ handles the other
ordering.  Since $D_j\geq0$, either ordering implies the stated symmetric
bound.
\end{proof}

We finally verify that the constant $K_4$ used above is finite.

\begin{proposition}[Finiteness of the final-time constant]
\label{prop:fsr-final-time-constant-draft}
The quantity $K_4$ in \eqref{eq:fsr-final-time-constant-draft} satisfies
$0\leq K_4<\infty$.  For a fixed network, collection of arrival rates,
service and cost functions, equilibrium, and delay profile, $K_4$ is
independent of the horizon $T$, the trajectory, the workload history, and the
stepsizes.
\end{proposition}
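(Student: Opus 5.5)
Nonnegativity is immediate: each summand $[\,\cdot\,]_+$ is nonnegative, so each supremum, and hence $K_4$, is nonnegative. Independence is equally direct. The expression \eqref{eq:fsr-final-time-constant-draft} involves only $\bar\tau$, $v$ (built from the arrival rates and the $\bar\mu_j$), the arrival rates, $f_j$, and $\Phi_j$. The last of these depends only on $f_j$ and the fixed equilibrium workload $N_j^*$. None of these quantities involves $T$, the trajectory, the workload history, or the stepsizes, and the supremum is taken over a free variable $u$. The substance of the proof is therefore finiteness of each per-cell supremum.

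Fix $j$ and write $A_j=2\bar\tau\sum_{i\in\DS^-(j)}\lambda_i$ and $h_j(u)=A_jf_j(u+2v\bar\tau)-\tfrac12\Phi_j(u)$. If $\bar\tau=0$, then $h_j=-\tfrac12\Phi_j\le0$ and the supremum is zero. Otherwise, split $[0,\infty)$ into a compact interval $[0,U]$ and a tail $(U,\infty)$.

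\emph{Compact interval.} On $[0,U]$, $f_j$ is continuous and $\Phi_j$ is continuous and nonnegative. Hence $h_j$ is continuous, and $[h_j]_+$ is bounded by $A_jf_j(U+2v\bar\tau)<\infty$.

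\emph{Tail.} Here I must show $h_j(u)\le0$ for all large $u$. This is the key step, and it combines the two parts of Lemma~\ref{lem:fsr-consequences-draft}. By \eqref{eq:fsr-modulus-draft} with $H=2v\bar\tau$, $f_j(u+2v\bar\tau)\le 2f_j(u)$ for all $u$ beyond some $U_1$. Coercivity makes $f_j(u)>0$ there. By \eqref{eq:fsr-p-over-f-draft}, $\Phi_j(u)\ge 4A_jf_j(u)$ for all $u$ beyond some $U_2$. Taking $U=\max\{U_1,U_2\}$, for $u>U$,
\[
h_j(u)\le 2A_jf_j(u)-2A_jf_j(u)=0 .
\]
Thus $[h_j]_+$ vanishes on the tail.

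Combining the two regions, each per-cell supremum is at most $A_jf_j(U+2v\bar\tau)<\infty$. Summing over the finitely many cells gives $K_4<\infty$.

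The only real obstacle is the tail comparison, since $f_j$ is unbounded. It is fully resolved by the superlinear growth of $\Phi_j$ relative to $f_j$ in \eqref{eq:fsr-p-over-f-draft} together with the uniform shift control in \eqref{eq:fsr-modulus-draft}. Both come from additive slow variation and coercivity in Assumption~\ref{assume:fsr-cost-draft}.
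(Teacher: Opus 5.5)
Your proposal is correct and follows the paper's proof essentially step for step. Like the paper, you combine \eqref{eq:fsr-p-over-f-draft} with additive slow variation at shift $2v\bar\tau$ to make the bracket in \eqref{eq:fsr-final-time-constant-draft} nonpositive for large $u$, bound the remaining compact range, treat $\bar\tau=0$ separately, and read nonnegativity and independence directly off the definition. The only cosmetic difference is on the compact range: you bound $[h_j]_+$ explicitly by $A_j f_j(U+2v\bar\tau)$ using monotonicity and $\Phi_j\ge 0$, whereas the paper appeals to continuity and compactness.
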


\begin{proof}
Lemma~\ref{lem:fsr-consequences-draft}, additive slow variation with
$H=2v\bar\tau$, and coercivity imply, for every cell $j$,
\begin{equation*}
    \frac{\Phi_j(u)}{f_j(u+2v\bar\tau)}
       =\frac{\Phi_j(u)}{f_j(u)}
          \frac{f_j(u)}{f_j(u+2v\bar\tau)}
       \longrightarrow\infty.
\end{equation*}
Since the coefficient $2\bar\tau\sum_{i\in\DS^-(j)}\lambda_i$ is finite,
the preceding limit gives, for all sufficiently large $u$,
\begin{equation*}
    2\bar\tau\sum_{i\in\DS^-(j)}\lambda_i
       f_j(u+2v\bar\tau)
       \leq\frac12\Phi_j(u).
\end{equation*}
Therefore, the expression inside the positive part in
\eqref{eq:fsr-final-time-constant-draft} is nonpositive for all sufficiently
large $u$.  It is continuous in $u$, so its positive part attains a finite
maximum on the remaining compact interval.  Each supremum is thus finite,
and their sum is finite because $\DS$ is finite.  Nonnegativity follows from
the positive part.  Finally, the definition of $K_4$ involves only the fixed
problem data listed in the proposition, which proves the claimed
independence.  In particular, if $\bar\tau=0$, then every summand is
$\sup_{u\geq0}[-\Phi_j(u)/2]_+=0$, so $K_4=0$.
\end{proof}

\end{document}